\title{Dependence and Independence for Reversible Process Calculi}
\author{Clément Aubert}{Augusta University, GA, USA}{caubert@augusta.edu}{0000-0001-6346-3043}{}
\author{Iain Phillips}{Imperial College London, UK}{I.Phillips@imperial.ac.uk}{0000-0001-5013-5876}{}
\author{Irek Ulidowski}{University of Leicester, UK \and AGH University of Science and Technology, Poland}{I.Ulidowski@leicester.ac.uk}{0000-0002-3834-2036}{}
\authorrunning{Aubert, Phillips, Ulidowski}
\keywords{%
Concurrency,
Process algebras,
Reversibility,
Labelled Transition Systems,
Independence,
Events,
Causality,
Conflict,
Bisimulation
}
\newcommand{\Maxop}{\mathrm{max}}
\newcommand{\Max}[1]{\mathopen{\Maxop}\left(#1\right)}
\newcommand{\procset}{\ensuremath{\mathbb{P}}}  %
\newcommand{\kprocset}{\ensuremath{\mathbb{X}}} %
\newcommand{\events}{\ensuremath{\mathbb{E}}} 
\newcommand{\Proc}{\mathsf{Proc}} %
\newcommand{\Lab}{\mathsf{Lab}}
\newcommand{\union}{\cup}
\newcommand{\inter}{\cap}
\newcommand{\eveqt}{\sim}
\newcommand{\sqeqt}{\eveqt}
\newcommand{\Par}{\mid}
\newcommand{\evkeqt}{\eveqt_{\mathsf{k}}}
\newcommand{\prov}[1]{{#1}^\dagger}
\newcommand{\base}[1]{{#1}^\circ}
\newcommand{\kprov}[1]{{#1}_{\keysop}^\dagger}
\newcommand{\kbase}[1]{{#1}_{\keysop}^\circ}
\newcommand{\conn}{\mathrel{\curlyvee}} %
\newcommand{\rev}[1]{\ushortw{#1}} %
\newcommand{\Rev}[1]{\ensuremath{\rev{#1}}} %
\newcommand{\tRev}[1]{\Rev{\text{#1}}}
\newcommand{\ceqt}{\approx}
\newcommand{\cte}{\sharp}
\newcommand{\ind}{\mathrel{\iota}}
\newcommand{\es}{\varepsilon}
\newcommand{\coind}{\mathrel{\mathsf{ci}}} %
\newcommand{\Cf}{\mathrel{\#}} %
\newcommand{\nil}{\mathbf{0}}
\newcommand{\und}[1]{\mathsf{und}(#1)}
\newcommand{\evtkof}[2]{\evop_{#1}(#2)}
\newcommand{\rtran}[1]{\stackrel{#1}{\rightsquigarrow}} %
\newcommand{\len}[1]{|#1|}
\newcommand{\srcof}[1]{\mathsf{src}(#1)}
\newcommand{\tgtof}[1]{\mathsf{tgt}(#1)}
\newcommand{\fwdof}[1]{\mathsf{fwd}(#1)}
\newcommand{\lblof}[1]{\mathsf{lbl}(#1)}
\newcommand{\ip}{\prec}
\newcommand{\maxkeys}[1]{\mathsf{maxkeys}(#1)}
\newcommand{\bts}[1]{\mathsf{bts}(#1)} %
\newcommand{\sizeof}[1]{\mathsf{s}(#1)} %
\newcommand*{\eg}{e.g.\@,\xspace}
\newcommand*{\Eg}{E.g.\@,\xspace}
\newcommand*{\cf}{cf.\@\xspace}
\newcommand*{\ie}{i.e.\@,\xspace}
\renewcommand{\st}{s.t.\@\xspace}
\newcommand*{\resp}{resp.\@\xspace}
\newcommand*{\wrt}{w.r.t.\@\xspace}
\newcommand*{\etc}{%
	\@ifnextchar{.}%
	{etc}
	{etc.\@\xspace}%
}
\newcommand{\itemreflipics}[1]{%
\textcolor{lipicsGray}{\sffamily\bfseries\upshape\mathversion{bold}\ref{#1}.\@\xspace}%
}
\DeclareSymbolFont{largesymbolsstix}{LS2}{stixex}{m}{n}
\DeclareMathDelimiter{\lBrace}{\mathopen} {largesymbolsstix}{"E8}{largesymbolsstix}{"0E}
\DeclareMathDelimiter{\rBrace}{\mathclose}{largesymbolsstix}{"E9}{largesymbolsstix}{"0F}
\DeclareSymbolFont{stixsymbols}{LS1}{stixscr}{m}{n}
\NewDocumentCommand{\smallbar}{}{%
	\mathrel{\mathpalette\smallbar@\relax}%
}
\newcommand{\current@math@font}[1]{%
	\ifx#1\displaystyle\textfont\else
	\ifx#1\textstyle\textfont\else
	\ifx#1\scriptstyle\scriptfont\else
	\scriptscriptfont\fi\fi\fi
}
\newcommand{\smallbar@factor}[1]{%
	\ifx#1\displaystyle 1.135\else
	\ifx#1\textstyle 1.128\else
	\ifx#1\scriptstyle 1.09\else
	1.06\fi\fi\fi
}
\newcommand{\smallbar@}[2]{%
	\begingroup
	\sbox\z@{$\m@th#1\mapstochar$}%
	\dimen0=\smallbar@factor{#1}\ht\z@
	\dimen2=\dimeval{2\fontdimen22\current@math@font{#1} 2 - \dimen0}%
	\mbox{%
		$\m@th#1\mkern1mu
		\begin{picture}(0,\dimen0)
			\roundcap
			\linethickness{\fontdimen8\current@math@font{#1}3}
			\Line(0,\dimen2)(0,\dimen0)
		\end{picture}%
		\mkern1mu$%
	}%
	\endgroup
}
\DeclareMathOperator{\ordop}{ord}
\DeclareMathOperator{\rem}{rm}
\newcommand{\card}[1]{\ensuremath{|#1|}}
\newcommand{\orig}[1]{O_{#1}} %
\newcommand{\real}[1]{r(#1)} %
\newcommand{\bs}{\backslash}
\DeclareSymbolFont{symbolsC}{U}{txsyc}{m}{n}
\DeclareMathSymbol{\opentimes}{\mathrel}{symbolsC}{93}
\newcommand{\sdep}{\opentimes}
\newcommand{\tsdep}{\mathrel{\sdep^*}} %
\newcommand{\Left}{\mathrm{L}}        %
\newcommand{\Right}{\mathrm{R}}       %
\newcommand{\Dir}{\mathrm{d}}         %
\newcommand{\Edir}{\mathrm{p}}        %
\newcommand{\OpDir}[1]{\overline{#1}} %
\renewcommand{\L}{\Left}              
\newcommand{\R}{\Right}
\newcommand{\D}{\Dir}
\newcommand{\OD}{\OpDir{\D}}
\newcommand{\E}{\Edir}
\newcommand{\lmidr}{{\mid_{\R}}}
\newcommand{\lmidl}{{\mid_{\L}}}
\newcommand{\lmidd}{{\mid_{\D}}}
\newcommand{\lmidod}{{\mid_{\OD}}}
\newcommand{\lmide}{{\mid_{\E}}}
\newcommand{\lplusr}{{+_{\R}}}
\newcommand{\lplusl}{{+_{\L}}}
\newcommand{\lplusd}{{+_{\D}}}
\newcommand{\lplusod}{{+_{\OD}}}
\newcommand{\lpluse}{{+_{\E}}}
\newcommand{\rlmidr}{{\ushortwt{\mid_{\R}}}}
\newcommand{\rlmidl}{{\ushortwt{\mid_{\L}}}}
\newcommand{\rlplusr}{{\Rev{+_{\R}}}}
\newcommand{\rlplusl}{{\Rev{+_{\L}}}}
\newcommand{\bpair}[2]{\langle #1 , #2 \rangle} %
\newcommand{\cpair}[2]{\bpair{\lmidl #1}{\lmidr #2}} %
\newcommand{\names}{\ensuremath{\mathsf{N}}}
\newcommand{\labelset}{\ensuremath{\mathsf{L}}}
\newcommand{\keyset}{\ensuremath{\mathsf{K}}}
\newcommand{\proofset}{\ensuremath{\mathsf{P}}}
\newcommand{\kplabelset}{\labelset_{\keyset}^{\proofset}}
\newcommand{\klabelset}{\labelset_{\keyset}}
\newcommand{\plabelset}{\labelset^{\proofset}}
\newcommand{\out}[1]{\overline{#1}}
\newcommand{\setst}{ \mid }
\newcommand{\ccs}{\ensuremath{\mathsf{CCS}}\xspace}
\newcommand{\ccsk}{\ensuremath{\mathsf{CCS}{\keyset}}\xspace}
\newcommand{\pccs}{\ensuremath{\mathsf{CCS}^{\proofset}}\xspace}
\newcommand{\pccsk}{\ensuremath{\mathsf{CCS}{\keyset}^{\proofset}}\xspace}
\newcommand{\ire}{IRE\xspace}
\DeclareMathOperator{\stdop}{\mathsf{sd}}
\DeclareMathOperator{\tostdop}{\mathsf{toStd}}
\DeclareMathOperator{\keysop}{\mathsf{keys}}
\DeclareMathOperator{\keyop}{\mathsf{key}}
\DeclareMathOperator{\evop}{\mathsf{ev}}
\DeclareMathOperator{\kfop}{\mathsf{kf}}
\DeclareMathOperator{\keop}{\mathsf{ke}}
\DeclareMathOperator{\lablop}{\ell} %
\DeclareMathSymbol{\kayop}{\mathalpha}{stixsymbols}{"6B} %
\DeclareMathSymbol{\ekayop}{\mathalpha}{stixsymbols}{"62} %
\DeclareMathOperator{\prunop}{\phi}
\DeclareMathOperator{\krop}{\psi}
\newcommand{\std}[1]{\stdop(#1)}
\newcommand{\prun}[1]{\prunop(#1)}
\newcommand{\kr}[1]{\krop(#1)}
\newcommand{\kf}[1]{\kfop(#1)}
\newcommand{\ke}[1]{\keop(#1)}
\newcommand{\kay}[1]{\kayop(#1)}
\newcommand{\ekay}[1]{\labl{#1}[\kay{#1}]}
\newcommand{\key}[1]{\keyop(#1)}
\newcommand{\keys}[1]{\keysop(#1)}
\newcommand{\ev}[1]{\evop(#1)}
\newcommand{\ord}[1]{\ordop(#1)}
\newcommand{\labl}[1]{\lablop(#1)}
\NewDocumentCommand{\Rel}{m O{} O{}} 
{
	\str_case:nnF { #1 }
	{
		{s}{ %
			\tl_if_blank:nTF {#2} {       							%
				\tl_if_blank:nTF {#3} {	  							%
					\mathrel{<} 	    				
				}{
					\mathrel{<^{#3}}					%
			}}{
				\tl_if_blank:nTF {#3} {	  							%
					\mathrel{<\c_math_subscript_token{\scriptscriptstyle{\mathsf{#2}}}} 		%
				}{
					\mathrel{<\c_math_subscript_token{\scriptscriptstyle{\mathsf{#2}}}^{#3}} 	%
			}}
		}
		{i}{ %
			\tl_if_blank:nTF {#2} {       							%
				\tl_if_blank:nTF {#3} {	  							%
					\mathrel{>} 	    				
				}{
					\mathrel{>^{#3}}					%
			}}{
				\tl_if_blank:nTF {#3} {	  							%
					\mathrel{>\c_math_subscript_token{\scriptscriptstyle{\mathsf{#2}}}} 		%
				}{
					\mathrel{>\c_math_subscript_token{\scriptscriptstyle{\mathsf{#2}}}^{#3}} 	%
			}}
		}	
		{b}{ %
			\tl_if_blank:nTF {#2} {       							%
				\tl_if_blank:nTF {#3} {	  							%
					\mathrel{\eveqt} 	    				
				}{
					\mathrel{\eveqt^{#3}}					%
			}}{
				\tl_if_blank:nTF {#3} {	  							%
					\mathrel{\eveqt\c_math_subscript_token{\scriptscriptstyle{\mathsf{#2}}}} 		%
				}{
					\mathrel{\eveqt\c_math_subscript_token{\scriptscriptstyle{\mathsf{#2}}}^{#3}} 	%
			}}
		}
		{r}{ %
			\tl_if_blank:nTF {#2} {       							%
				\tl_if_blank:nTF {#3} {	  							%
					\mathrel{\mathcal{R}} 	    				
				}{
					\mathrel{\mathcal{R}^{#3}}					%
			}}{
				\tl_if_blank:nTF {#3} {	  							%
						\mathcal{R}\c_math_subscript_token{\scriptscriptstyle{\mathsf{#2}}}
				}{
						\mathcal{R}\c_math_subscript_token{\scriptscriptstyle{\mathsf{#2}}}^{#3}
			}}
		}
	}{error}
}
\NewDocumentCommand{\nRel}{m O{} O{}} 
{
	\str_case:nnF { #1 }
	{
		{s}{ %
			\tl_if_blank:nTF {#2} {       							%
				\tl_if_blank:nTF {#3} {	  							%
					\mathrel{\not{<}} 	    				
				}{
					\mathrel{\not{<^{#3}}}					%
			}}{
				\tl_if_blank:nTF {#3} {	  							%
					\mathrel{\not{<\c_math_subscript_token{\scriptscriptstyle{\mathsf{#2}}}}} 		%
				}{
					\mathrel{\not{<\c_math_subscript_token{\scriptscriptstyle{\mathsf{#2}}}^{#3}}} 	%
			}}
		}
		{i}{ %
			\tl_if_blank:nTF {#2} {       							%
				\tl_if_blank:nTF {#3} {	  							%
					\mathrel{\not{>}} 	    				
				}{
					\mathrel{\not{>^{#3}}}					%
			}}{
				\tl_if_blank:nTF {#3} {	  							%
					\mathrel{\not{>\c_math_subscript_token{\scriptscriptstyle{\mathsf{#2}}}}} 		%
				}{
					\mathrel{\not{>\c_math_subscript_token{\scriptscriptstyle{\mathsf{#2}}}^{#3}}} 	%
			}}
		}	
		{b}{ %
			\tl_if_blank:nTF {#2} {       							%
				\tl_if_blank:nTF {#3} {	  							%
					\mathrel{\nsim} 	    				
				}{
					\mathrel{\nsim^{#3}}					%
			}}{
				\tl_if_blank:nTF {#3} {	  							%
					\mathrel{\nsim\c_math_subscript_token{\scriptscriptstyle{\mathsf{#2}}}} 		%
				}{
					\mathrel{\nsim\c_math_subscript_token{\scriptscriptstyle{\mathsf{#2}}}^{#3}} 	%
			}}
		}
	}{error}
}
\definecolor{irek}{HTML}{1b9e77}
\definecolor{iain}{HTML}{d95f02}
\definecolor{clem}{HTML}{7570b3}
\newcommand{\irek}[1]{{\color{irek} Irek: #1}}
\newcommand{\iain}[1]{{\color{iain} Iain: #1}}
\newcommand{\clem}[1]{{\color{clem} Clem: #1}}
\newcommand{\Comment}[1]{}
\definecolor{ind} {RGB}{43,131,186}
\colorlet  {conc}{ind!30}
\definecolor{sdep}{RGB}{215,25,28}
\colorlet  {tsdep}{sdep!30}
\definecolor{conf}{RGB}{171,221,164}
\definecolor{ord} {RGB}{253,174,97}
\tikzset{
sys/.style = {
	line width=.5mm,
	inner sep=0pt,
	outer sep=0pt,
	rounded corners=0.5cm,
	draw = tsdep,
	minimum size=1.5cm,
	text width=1.6cm,
	text centered
},
spl/.style = {
	rectangle split,
	rectangle split horizontal,
	rectangle split parts=2,
	rectangle split part fill={sdep!20,ind!20},
},
sya/.style= {
	thick,
	font = {\LARGE},	
},
rel/.style = {
	line width=1mm, %
},
adj/.style = {
	line width=1mm, %
	dotted,
},
cpt/.style = {
},
u/.style = {yshift=#1\pgflinewidth},
s/.style = {xshift=#1\pgflinewidth},
}
\newcommand{\tikzmark}[1]{%
	\tikz[overlay,remember picture,baseline] \node[anchor=base] (#1) {};}
\appto{\endmulticols}{\@doendpe}
\def\hyph{-\penalty0\hskip0pt\relax} %
\begin{document}
	\maketitle
	\begin{abstract}
To refine formal methods for concurrent systems, there are several ways of enriching classical operational semantics of process calculi.  One can enable the auditing and undoing of past synchronisations thanks to communication keys, thus easing the study of true concurrency as a by-product.  Alternatively, proof labels embed information about the origins of actions in transition labels, facilitating syntactic analysis.  Enriching proof labels with keys enables a theory of the relations on transitions and on events based on their labels only.  We offer for the first time separate definitions of dependence relation and independence relation, and prove their complementarity on connected transitions instead of postulating it.  Leveraging the recent axiomatic approach to reversibility, we prove the canonicity of these relations and provide additional tools to study the relationships between e.g., concurrency and causality on transitions and events. Finally, we make precise the subtle relationship between bisimulations based on both forward and backward transitions, on key ordering, and on dependency preservation, providing a direct definition of History Preserving  bisimulation for CCS.
	\end{abstract}

	\begingroup
	\let\clearpage\relax
	\section{Introduction}
\label{sec:intro}

The Calculus of Communicating Systems (\ccs~\cite{milner80lncs}) is the main starting point
 to develop formal methods for concurrent systems, such as the applied \(\pi\)-calculus~\cite{Abadi2018,Aubert2022e}, that can be used to \eg verify communication protocols~\cite{Blanchet2016,Ryan2011,Horne2021}.
At the heart of this formalism lie three core concepts: synchronisation, concurrency of events (or independence of actions), and bisimulation.
In a nutshell, this paper offers an original and definitive answer to the question of defining dependence and independence for a reversible extension of \ccs that uses keys to handle synchronisation, and defines a key-based bisimulation relation that is proven to coincide with another new  bisimulation defined in terms of (in)dependence.

The starting point is the formal treatment of parallel communications, as improved by \emph{communication keys}: \ccs with keys (\ccsk~\cite{PU07}) enables the representation of reversible systems~\cite{Aman2020}, provides the capacity of auditing (and of undoing) previous synchronisations, and proved useful to study true concurrency~\cite{PU07a}.
Orthogonally, independence allows to disentangle complex situations and to understand the roots of certain decisions or bugs thanks to the dual notion of dependence.
However, those latter notions are not always easy to work with because they are sometimes \emph{defined} by complementarity~\cite{aubert2023c,DeganoGP03}, or defined only on coinitial or composable transitions. %
While it has been suspected that the order on keys~\cite{LaneseP21} or the dependence on previous events~\cite{bisim-applied-pi} could be fruitful in designing bisimulations, no definitive answer had been provided thus far---in part because dependence and independence were difficult to manipulate, and because not leveraging the axiomatic approach to reversibility made reasoning cumbersome and lengthy.

The paper is organised as follows: we first recall how \ccs is extended with the so-called proof labels and communication keys to give \pccsk.
A fresh look is then offered on the notions of dependence and independence for \pccsk by \emph{defining them} separately and then proving their complementarity on connected transitions (\eg transitions that can be reached from the same process).
We then present the %
 axiomatic approach to reversibility~\cite{LanesePU20,LPU24} and leverage it to establish relevant properties about our semantics for \pccsk.
After briefly discussing why order on keys and dependence are \enquote{the right notions} to define History Preserving-like bisimulations, we introduce two new bisimulations that preserve key ordering and dependencies, and prove that they coincide for standard \pccsk processes.

	\section{Background and Related Work}
\label{ssec:related}

The first use of proof labels to represent the concurrency relation between CCS transitions is due to Boudol and Castellani~\cite{BC88,BC94}.
Their definition of concurrent transitions applied only to coinitial transitions, but was proven~\cite{DeganoGP03} to coincide with causal semantics for \(\pi\)-calculus~\cite{Degano1999} and for causal trees~\cite{Darondeau1990a}.
This universality and simplicity made proof labels convenient, but other approaches were designed to capture (in)dependence on composable transitions.
State- and pomset-based semantics of \ccs leverage original definitions of independence~\cite[Definition 3.4]{Degano1987} and dependence~\cite[p.~952]{Castellani2001}, respectively,  %
on composable transitions. %
Static~\cite{Aceto1994} and dynamic localities~\cite{Boudol1994} were developed to capture local causality and concurrency on composable transitions, but required to use occurrence transition systems~\cite[Sect.\ 2.6.1]{Castellani2001}.
The first correctness criterion relating independence on coinitial and composable transitions was formulated in the reversible setting~\cite{aubert2023c} and no criterion demanding to prove the complementarity of dependence and independence relations defined separately was, to our knowledge, formulated before.
While a set of belief and mappings supported the canonicity of those notions, to our knowledge they were never proven unique under mild assumptions.

While locality-based equivalences are defined on syntactical models~\cite{Castellani1995,Aceto1994,Boudol1994}, causal equivalences were formalised mostly for semantic models: History Preserving (HP) bisimulation---slightly coarser than Hereditary HP bisimulation~\cite{Bed91,JNW96}---originated on behaviour structures~\cite{Rabinovich1988}, and has been reformulated on Petri nets~\cite{Vogler1991}, causal trees~\cite{Darondeau1990a}, process graphs~\cite{vG96}, configuration structures~\cite{vGG01,PhillipsU12}, modal logics~\cite{BaldanC14,PhillipsU14}, and even
automata~\cite{FAHRENBERG2013165}. 
In the setting of reversible process calculi, HP has been studied in different forms~\cite{PU07,PU07a,Aubert2020b,LaneseP21}, some of which syntactical.
A purely syntactical \enquote{local} HP bisimulation has been defined on \ccs processes~\cite[Sect.\ A.2]{Castellani1995}, but, to our knowledge, the usual HP---\eg that considers the global causal ordering---was never defined on \ccs.
However, it was known that a strong version of causal bisimulation~\cite[Definition 4.1]{Kiehn94} coincided with HP bisimulation~\cite[Theorem 2.1]{Darondeau1990a}, but it required to use conflict labelled event structures.

In conclusion, process algebra fell short on defining dependence and independence on all combinations of transitions (coinitial and composable), while causal equivalences were expressed on models that required to map back-and-forth notions of cause and concurrency.

\Comment{
 \irek{Clément, since you are expert on this, could you please list their works, and that of others (Degano, De Nicola, Montanari, van Glabbeek\& Vaandrager, your own work and any recent work...) relating to concurrency/dependence}.\clem{Ok, will do, but I must admit I am not familiar with Vaandrager's work -- this can be fixed, though.}\irek{Regarding Vaandrager, I was thinking of 
 Rob J. van Glabbeek, Frits W. Vaandrager: Petri Net Models for Algebraic Theories of Concurrency. PARLE (2) 1987: 224-242
 Boudol and Castellani write in their A NON-INTERLEAVING SEMANTICS FOR CCS
BASED ON PROVED TRANSITIONS about this paper:
 
 ``Similarly, van Glabbeek and Vaandrager define in [13]
a partial order semantics for ( one-safe) Petri nets, using the structure
of the net to associate a causality ordering with each firing sequence.''  Perhaps is not that closely related...}
}

	\section{CCSK with Proof Labels} %
\label{sec:proved-lts}

This section recalls the extension of \ccsk with proof labels (\pccsk) \cite{Aub22,aubert2023c}, an LTS in bijection with \ccsk as discussed in \autoref{app:sub-systems}.

\begin{definition}[(Co-)names,  labels and keys]
	\label{def:co-names}
	Let \(\names\) be a set of \emph{names}, ranged over by $a$, $b$ and $c$. %
	A bijection \(\out{\cdot}:\names \to \out{\names}\), whose inverse is also written \(\out{\cdot}\), 
	gives the \emph{complement} of a name. %
	The set of \emph{labels} \(\labelset\) is \(\names \cup \out{\names} \cup\{\tau\}\), and we use \(\alpha\), \(\beta\) (\resp \(\lambda\)) to range over \(\labelset\) (\resp \(\labelset \bs \{\tau\}\)).
	It is convenient to extend the complement mapping to labels by letting \(\out{\tau} = \tau\).
	
	Let \(\keyset\) be a denumerable set of \emph{keys}, ranged over by \(k\), \(m\) and \(n\).
	\emph{Keyed labels}, denoted \(a[k]\), \(a[m]\), \(a[n]\), \(b[k]\), etc., are elements of \(\labelset \times \keyset = \klabelset\).
\end{definition}

\begin{definition}[Operators]
	\label{def:operators}
	The set \(\kprocset\) of \ccsk processes is defined as usual:%

	\noindent\begin{minipage}{.46\linewidth}
		\begin{align}
			X, Y \coloneqq ~ %
			& \nil \tag{Inactive process} \\
			\|~ &  \alpha. X \tag{Prefix} \\
			\|~ &   X \bs \lambda \tag{Restriction}
		\end{align}
	\end{minipage}
	\begin{minipage}{.5\linewidth}
		\begin{align}
			\|~ & X + Y  \tag{Sum} \\ 
			\|~ & X \mid Y \tag{Parallel composition}\\
			\|~ & \alpha[k].X \tag{Keyed prefix}
		\end{align}
	\end{minipage}
	
	As usual, the inactive process \(\nil\) is omitted when preceded by a (keyed) prefix, and the binding power of the operators~\cite[p. 68]{milner80lncs}, from highest to lowest, is 
	\(\bs \lambda\), \(\alpha[k].\) \(\alpha.\), \(\mid\) and  \(+\).
	
	We write \(\keys{X}\) for the set of keys in \(X\). The set of \ccs processes is denoted \(\procset = \{X \setst \keys{X} = \emptyset\}\), %
	and we let \(P\), \(Q\) range over it.
\end{definition}

\begin{definition}[Proof labels]
	\label{def:proof-keyed-labels}
	We let \(\D\) range over the \emph{directions} \(\L\)(eft) and \(\R\)(ight), \(\upsilon\), \(\upsilon_{1}\) and \(\upsilon_{2}\) range over strings in \(\{\lmidd, \lplusd\}^*\), and \(\theta\) range over \emph{proof keyed labels}: %
	\begin{align*}
		\theta & \coloneqq \upsilon \alpha[k]  ~\|~ \upsilon \cpair{\upsilon_{1} \lambda[k]}{\upsilon_{2} \out{\lambda}[k]}%
	\end{align*}

	We write \(\kplabelset\) for the set of proof keyed labels, %
	define \(\ell : \kplabelset \to \labelset\) and \(\kayop: \kplabelset \to \keyset\) %
	as %
	\begin{align*}
	\ell(\upsilon \alpha[k]) & = \alpha &   &   & \ell(\upsilon \bpair{\upsilon_{1}\lambda[k]}{\upsilon_{2} \out{\lambda}[k]}) & = \tau &&& 
	\kay{\upsilon \alpha[k]} & = k      &   &   & \kay{\upsilon \bpair{\upsilon_{1}\lambda[k]}{\upsilon_{2} \out{\lambda}[k]}} & = k %
\end{align*}
\end{definition}

\begin{notation}
	\label{notation:key}
	We sometimes write \(\theta\) with \(\kay{\theta} = k\) as \(\theta[k]\). %
	We let \(\OpDir{\D} = \R\) if \(\D = \L\), else \(\OpDir{\D} = \L\).
	We generally omit \enquote{keyed} and simply write \enquote{proof label}.%
\end{notation}

\begin{definition}[LTS for \ccsk with proof labels~\protect{\cite{Aub22,aubert2023c}}]
	The \emph{labelled transition system (LTS) for \ccsk with proof labels}, denoted by \pccsk, is 
	$(\kprocset, \kplabelset, \pr{fb}[\theta])$ where $\pr{fb}[\theta]$ is the union of transition relations generated by 
	the forward and backward rules given in \autoref{fig:provedltsrulesccskfw}.
	As usual, we let \(\pr{fb}^*\) be the reflexive transitive closure of \(\pr{fb}\).%
\end{definition}

\begin{figure}[ht]
	\begin{tcolorbox}[title = {Action, Prefix and Restriction}, sidebyside]
		\begin{tcolorbox}[adjusted title=Forward]
			\begin{prooftree}
				\hypo{}
				\infer[left label={\(\keys{X} = \emptyset%
					\)}]1[act]{\alpha. X \pr{f}[\alpha][k]  \alpha[k].X}
			\end{prooftree}
			\\[.9em]
			\begin{prooftree}
				\hypo{X \pr{f}[\theta] X'}
				\infer[left label={\(\kay{\theta} \neq k\)}]1[pre]{\alpha[k]. X \pr{f}[\theta] \alpha[k].X'}
			\end{prooftree}
			\\[.9em]
			\begin{prooftree}
				\hypo{ X \pr{f}[\theta] X '}
				\infer[left label={\(\labl{\theta} \notin \{a, \out{a}\}\)}]1[res]{X  \bs a  \pr{f}[\theta] X ' \bs a}
			\end{prooftree}
		\end{tcolorbox}
		\tcblower
		\begin{tcolorbox}[adjusted title=Backward]
			\begin{prooftree}
				\hypo{}
				\infer[left label={\(\keys{X} = \emptyset
					\)}]1[\tRev{act}]{ \alpha[k].X \pr{b}[\alpha][k] \alpha. X}
			\end{prooftree}
			\\[.9em]
			\begin{prooftree}
				\hypo{X' \pr{b}[\theta] X}
				\infer[left label={\(\kay{\theta} \neq k\)}]1[\tRev{pre}]{\alpha[k].X'\pr{b}[\theta] \alpha[k]. X }
			\end{prooftree}
			\\[.9em]
			\begin{prooftree}
				\hypo{ X' \pr{b}[\theta] X}
				\infer[left label={\(\labl{\theta} \notin \{a, \out{a}\}\)}]1[\tRev{res}]{X ' \bs a\pr{b}[\theta] X  \bs a  }
			\end{prooftree}
		\end{tcolorbox}
	\end{tcolorbox}
	\begin{tcolorbox}[title = Parallel, sidebyside]
		\begin{tcolorbox}[adjusted title=Forward]
			\begin{prooftree}
				\hypo{X \pr{f}[\theta] X'}
				\infer[left label={\(\kay{\theta} \notin \keys{Y}\)}]
				1[\(\lmidl\)]{X \Par Y \pr{f}[\lmidl\theta] X' \Par  Y}
			\end{prooftree}
			\\[.9em]
			\begin{prooftree}
				\hypo{X \pr{f}[\upsilon_{\L} \lambda][k]  X'}
				\hypo{Y \pr{f}[\upsilon_{\R} \out{\lambda}][k]  Y'}
				\infer2[syn]{X \Par  Y \pr{f}[\cpair{\upsilon_{\L} \lambda \protect{[k]}}{\upsilon_{\R} \out{\lambda} \protect{[k]}}] X' \Par  Y'}
			\end{prooftree}
		\end{tcolorbox}
		\tcblower
		\begin{tcolorbox}[adjusted title=Backward]
			\begin{prooftree}
				\hypo{X' \pr{b}[\theta] X}
				\infer[left label={\(\kay{\theta} \notin \keys{Y}\)}]
				1[\rlmidl]{X' \Par Y \pr{b}[\lmidl\theta] X \Par  Y}
			\end{prooftree}
			\\[.9em]
			\begin{prooftree}
				\hypo{X' \pr{b}[\upsilon_{\L} \lambda][k]  X}
				\hypo{Y' \pr{b}[\upsilon_{\R} \out{\lambda}][k]  Y}
				\infer%
				2[\tRev{syn}]{X' \Par  Y' \pr{b}[\cpair{\upsilon_{\L} \lambda \protect{[k]}}{\upsilon_{\R} \out{\lambda} \protect{[k]}}] X \Par  Y}
			\end{prooftree}
		\end{tcolorbox}
	\end{tcolorbox}
	
	\begin{tcolorbox}[adjusted title=Sum, sidebyside]
		\begin{tcolorbox}[adjusted title=Forward]
			\begin{prooftree}
				\hypo{X \pr{f}[\theta] X'}
				\infer[left label={\(\keys{Y} = \emptyset\)}]1[\( \lplusl \)]{X + Y \pr{f}[\lplusl \theta] X' + Y}
			\end{prooftree}
		\end{tcolorbox}
		\tcblower
		\begin{tcolorbox}[adjusted title=Backward]
			\begin{prooftree}
				\hypo{X' \pr{b}[\theta] X}
				\infer[left label={\(\keys{Y} = \emptyset\)}]1[\(\rlplusl\)]{X' + Y \pr{b}[\lplusl \theta] X + Y}
			\end{prooftree}
		\end{tcolorbox}
	\end{tcolorbox}
	
	\caption{Forward and backward transition rules for \pccsk (\(\lmidr\),\(\rlmidr\), \(\lplusr\) and \(\rlplusr\) omitted).} %
	
	\label{fig:provedltsrulesccskfw}
\end{figure}

We note that the LTS for \ccsk%
can be obtained from \pccsk by replacing $\kplabelset$ with \(\klabelset\) and $\theta$
with $\ell(\theta)[\kay{\theta}]$ in \autoref{fig:provedltsrulesccskfw}, which corresponds to erasing the \enquote{proved} part of labels.
In the following, we write \(\base{\cdot} = {(\prov{\cdot}})^{-1}\) the bijection between transitions in \ccsk and \pccsk~\cite[Lemma 1]{Aub22}, given in detail in \autoref{app:sub-systems}.

\begin{definition}[Transitions]
	\label{def:transitions-paths}
	A \emph{transition} \(t: X \pr{bf}[\theta][k] Y\) has for \emph{source} \(\srcof{t} = X\), for \emph{target} \(\tgtof{t} = Y\), for \emph{proof label} \(\lblof{t} = \theta\), and for \emph{key} \(\key{t} = k\).
	Transitions \(t_1\), \(t_2\) are \emph{coinitial} if \(\srcof{t_1} = \srcof{t_2}\), \emph{cofinal} if \(\tgtof{t_1} = \tgtof{t_2}\), \emph{composable} if \(\tgtof{t_1} = \srcof{t_2}\), and \emph{adjacent} if they are either coinitial, cofinal or composable (in either order).

	A \emph{path} is a sequence of transitions \(r = t_1 t_2 \cdots t_n\)
	such that \(t_i\) and \(t_{i+1}\), for \(1 \leqslant i < n\), are composable.
	Its source \(\srcof{r}\) is \(\srcof{t_1}\)%
	, its target \(\tgtof{r}\) is \(\tgtof{t_n}\)%
	, its length \(\len{r}\) is \(n\), and 
	its set of keys \(\keys{r}\) is \(\key{t_1} \cup \cdots \cup \key{t_n}\).
	We let $r$ and $s$ range over paths, and $t$ and $u$ range over transitions.
	A path $r$ is \emph{rooted} if $\srcof{r}$ cannot perform a backward transition.

	A transition \(t_1\) \emph{is connected to} a transition \(t_2\) if there exists a path $r$ \st \(\srcof{r} = \srcof{t_1}\) and \(\tgtof{r} = \tgtof{t_2}\). 
	Two transitions are \emph{connected} if one is connected to the other.
\end{definition}

\begin{definition}[Standard and reachable processes]
	\label{def:std}
	We say that \(X\) is \emph{standard} and write \(\stdop(X)\) iff \(\keys{X} = \emptyset\)---equivalently, if \(X\) is a \ccs process.
	If there exists an \emph{origin process} \(\orig{X}\) \st \(\stdop(\orig{X})\) and a rooted path \(r_{X}: \orig{X} \pr{fb}^* X\) %
	then \(X\) is \emph{reachable}.
\end{definition}

Note that \(\orig{X}\) is easily obtained by erasing the keys in \(X\)~\cite{LaneseP21}.
We only consider reachable processes in the rest of the paper.
As \(\pr{f}\) and \(\pr{b}\) are symmetric, we easily obtain:%

\begin{lemma}[%
	Loop Lemma]
	\label{lem:loop_proved}
	For all \(t: X \pr{f}[\theta] Y\), there exists \(\Rev{t}: Y \pr{b}[\theta] X\), and conversely.
	Furthermore, %
	\(\Rev{\Rev{t}} = t\).
\end{lemma}

\Comment{
Note that thanks to the Loop Lemma, all the transitions in \autoref{fig:example-transitions} could be reversed (from \(\pr{f}\) to \(\pr{b}\) and reciprocally) and the diagram would still be correct.
}

\Comment{
As usual, we restrict ourselves to reachable processes: %

\begin{example}
	\label{ex:transitions}
	Some of the processes reachable from \(a[k] \Par (\out{a} + b)\) are presented in \autoref{fig:example-transitions}: note that since we supposed infinitely many keys, \pccsk is infinitely branching as hinted toward by the two transitions labelled \(\lmidr \lplusr b[n]\) and \(\lmidr \lplusr b[m]\).	
	Note that the origin of this process is \(a \Par (\out a + b)\), and that \eg the transitions labelled \(\lmidr \lplusr b[n]\) and \(\lmidr \lplusl \out{a}[m]\) are coinitial.
	Of course, all the transitions pictured are connected and all the processes are reachable.
\end{example}
}

\begin{figure}%
	\begin{center}
	\tikzset{
	lbl/.style = {
		scale=.8,
	}
}

\begin{tikzpicture}[
	x={(1.5, 0)},  %
	y={(0, 1.1)}, %
	baseline,
	anchor=base
	]
	\node (aab) at (0, 0){\(a \Par (\out{a}+b)\)};
	\node (amamb) at (1.2, 1.2){\(a[m] \Par (\out{a}[m] + b)\)};
	\node (aamb) at (1.2, -1.2){\(a \Par (\out{a}[m] + b)\)};
	\node (akab) at (2.4, 0){\(a[k] \Par (\out{a} + b)\)};
	\node (akamb) at (3.5, -1.2){\(a[k] \Par (\out{a}[m] + b)\)};
	\node (akabm) at (5, 0){\(a[k] \Par (\out{a} + b[m])\)};
	\node (akabn) at (3.5, 1.2){\(a[k] \Par (\out{a} + b[n])\)};
		
	\draw[{Bar[]}->] (aab) -- node[lbl, xshift=-45]{\(\cpair{a[m]}{\lplusl \out{a}[m]}\)} (amamb);
	\draw[{Bar}-{Straight Barb[scale=0.8]}, reverse] (akab)  node[lbl, xshift=-50, yshift=6]{\(\lmidl a[k]\)} -- (aab);
	\draw[{Bar[]}->] (akab) -- node[lbl, xshift=0, yshift=5]{\(\lmidr \lplusr b[m]\)} (akabm);
	\draw[{Bar[]}->] (akab) -- node[lbl, xshift=-26, yshift=0]{\(\lmidr \lplusr b[n]\)} (akabn);
	\draw[{Bar[]}->] (akab) -- node[lbl, xshift=25, yshift=0]{\(\lmidr \lplusl \out{a}[m]\)} (akamb);
	\draw[{Bar[]}->] (aab) -- node[lbl, xshift=25, yshift=0]{\(\lmidr \lplusl \out{a}[m]\)} (aamb);
	\draw[{Bar}-{Straight Barb[scale=0.8]}, reverse] (akamb)  node[lbl, xshift=-62, yshift=5]{\(\lmidl a[k]\)} -- (aamb);
\end{tikzpicture}\vspace{-0.5cm}
	\end{center}	
	\caption{A sample of processes reachable from \(a[k] \Par (\out{a} + b)\).}
	\label{fig:example-transitions}
\end{figure}

\begin{example}\label{ex:transitions}
Some of the processes reachable from \(a[k] \Par (\out{a} + b)\) are presented in~\autoref{fig:example-transitions}. 
Since we suppose infinitely many keys, \pccsk is infinitely branching, as suggested by the transitions labelled \(\lmidr \lplusr b[n]\) and \(\lmidr \lplusr b[m]\): there are others for keys different from $m$, $n$ which are not displayed.	
Following \autoref{lem:loop_proved}, all transitions %
could be reversed (from \(\pr{f}\) to \(\pr{b}\) and vice versa). %
The origin process is \(a \Par (\out a + b)\), %
and all transitions are connected.%
\end{example}

\begin{notation}\label{not:inv-path}
	The empty path is denoted
	\(\es\). Given a path $r = t_1 \cdots t_n$, we write $\rev{r}$ for its inverse path $\rev{t_n} \cdots \rev{t_1}$.
	We also let %
	 $\rev{\es} = \es$, and $\fwdof{t} = t$ if $t$ is forward, $\rev{t}$ otherwise.
\end{notation}

	\section{Complementary Relations for Independence and Dependence}
\label{sec:ind-complem}

This section manipulates three relations defined using only proof keyed labels: an independence relation, a dependence relation, and their \enquote{union}, which captures connectedness of transitions (\autoref{prop:connectednessadequacy}).
The important point is that any two proof keyed labels in the connectedness relation are either dependent or independent (\autoref{thm:complementarity}).
While the dependence relation is inspired by existing works on \pccs\footnote{%
	In a nutshell, \pccs is \pccsk without keyed prefixes and where choices are discarded if not executed: its definition is provided in \autoref{ssec:conservativity}.}
 and \pccsk~\cite{aubert2023c,DeganoGP03}, we are not aware of any direct characterisation of dependence and independence that does not postulate their complementarity.
We prove that our independence relation is a conservative extension over the concurrency relation for \pccs in \autoref{ssec:conservativity}, and prove \autoref{prop:connectednessadequacy} and \autoref{thm:complementarity} in \autoref{ssec:complementarity}.

\begin{definition}[Relations on proof labels]
	\label{def:relation-proof-labels}
	Two proof keyed labels \(\theta_1\), \(\theta_2\) are \emph{connected} (\resp \emph{independent}, \emph{dependent}) if \(\theta_1 \conn \theta_2\) (\resp \(\theta_1 \ind \theta_2\), \(\theta_1 \sdep \theta_2\)) can be derived using the rules in \autoref{fig:pccsk-relations}.
\end{definition}

\begin{remark}
	\label{rem:ind-irreflexivity}
	It is easy to prove that \(\ind\) is irreflexive and symmetric, as S\(^1\) is the mirror of S\(^2\).
\end{remark}

\begin{example}
	\label{ex:part2}
	Re-using the labels from \autoref{fig:example-transitions}, we have \eg 
	\[\begin{alignedat}{3}
		\lmidr \lplusl \out{a}[m] & \conn && \cpair{a[m]}{\lplusl \out{a}[m]} & \text{ by S\(^1\) and  A\(^2\) for \(\conn\).}\\
	\lmidr \lplusl \out{a}[m] & \sdep && \lmidr \lplusr b[n] & \text{ by P\(^1\) and  C\(^2\) for \(\sdep\).} \\
	\lmidl a[k] 		& \ind  && \lmidr \lplusr b[n] & \text{ by P\(^2_k\) for \(\ind\).}
\end{alignedat}\]
\end{example}

We first prove that our notion of connectedness of labels is correct \wrt to our notion of connected transitions (\autoref{def:transitions-paths}):

\begin{restatable}{proposition}{propconnectednessadequacy}
	\label{prop:connectednessadequacy}
	\begin{enumerate}
		\item If \(t_1 : X_1 \pr{fb}[\theta_1] X'_1\) and \(t_2 : X_2 \pr{fb}[\theta_2] X'_2\) are connected  then \(\theta_1 \conn \theta_2\). \label{prop:connectednessadequacy-1}
		\item If \(\theta_1 \conn \theta_2\), then there exist \(t_1 : X_1 \pr{fb}[\theta_1] X'_1\) and \(t_2 : X_2 \pr{fb}[\theta_2] X'_2\) such that \(t_1\) and \(t_2\) are connected.  \label{prop:connectednessadequacy-2}
	\end{enumerate}
\end{restatable}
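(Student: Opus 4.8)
The pivot of the argument is that \pccsk transitions only attach or detach keys and never reshape a term, so every transition $t$ satisfies $\orig{\srcof{t}} = \orig{\tgtof{t}}$, and hence all the processes occurring along a path share a single standard origin. Combined with reachability and the Loop Lemma (\autoref{lem:loop_proved}), this yields a convenient reformulation of connectedness: \(t_1\) and \(t_2\) are connected if and only if \(\orig{\srcof{t_1}} = \orig{\tgtof{t_2}}\). The left-to-right implication is immediate from origin preservation along the connecting path; conversely, writing \(S\) for the common origin, one reverses the rooted path witnessing reachability of \(\srcof{t_1}\) down to \(S\) and then replays the rooted path of \(\tgtof{t_2}\) forward, producing a path from \(\srcof{t_1}\) to \(\tgtof{t_2}\). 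Both items then reduce to matching this syntactic invariant against the rules of \autoref{fig:pccsk-relations}.

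For item~\ref{prop:connectednessadequacy-1}, I extract from the hypothesis the common origin \(S = \orig{\srcof{t_1}} = \orig{\tgtof{t_2}}\), a standard \ccs process in which both proof labels designate positions: the prefix strings \(\upsilon_1, \upsilon_2 \in \{\lmidd, \lplusd\}^*\) trace the route through the parallel and sum operators of \(S\) down to the acting subterm. I then proceed by structural induction on \(S\). Under a prefix \(\alpha.S'\) the leading action causally precedes everything fired inside \(S'\), giving dependence through the action/prefix axiom, while two labels internal to \(S'\) are handled by the induction hypothesis; restriction is transparent since it leaves proof labels untouched. For \(S = S_1 \mid S_2\) the two labels either descend into the same component (strip the common \(\lmidl\) or \(\lmidr\) and apply the induction hypothesis under the matching parallel congruence rule), sit in opposite components (independence, as in \(\lmidl a[k] \ind \lmidr \lplusr b[n]\) from \autoref{ex:part2}), or are synchronisations whose halves are dispatched across both branches. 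For \(S = S_1 + S_2\), labels in the same branch recurse while labels in opposite branches are in conflict and hence dependent. Every co-addressable pair is covered, so the induction never stalls.

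For item~\ref{prop:connectednessadequacy-2}, I argue by induction on the derivation of \(\theta_1 \conn \theta_2\), building at each step a standard process \(S\) together with two transitions whose proof labels are \(\theta_1\) and \(\theta_2\) and whose sources and targets all have origin \(S\); connectedness is then delivered for free by the reformulation above. Axioms furnish atomic witnesses such as a single prefix or a parallel/sum of two prefixes; each congruence rule tells me how \(\theta_1\) and \(\theta_2\) are assembled from smaller labels related by a sub-derivation, and I place the witnessing subprocesses supplied by the induction hypothesis under the corresponding operator (prefix, a side of \(\mid\), or a side of \(+\)), renaming keys to fresh ones so that the side conditions of \autoref{fig:provedltsrulesccskfw} are met. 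The symmetry rules \(\mathrm{S}^1\)/\(\mathrm{S}^2\) are absorbed by swapping the two witnesses.

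The step I expect to resist is the induction in item~\ref{prop:connectednessadequacy-1}, where a genuinely semantic, path-based hypothesis must be turned into a syntactic derivation; the reformulation of connectedness in terms of a common origin is what makes this possible, but it has to be established first as a lemma, leaning on origin preservation, reachability, and \autoref{lem:loop_proved}. Within the induction, the synchronisation (pair) labels are the delicate case, since a single such label straddles both operands of a parallel composition, so its two halves must be reconciled simultaneously with the shapes of \(S_1\) and \(S_2\), and its relation to any companion label in the same \(\mid\) has to be recovered from those halves rather than from a single address.
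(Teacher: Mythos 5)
Your proposal is correct and follows essentially the same route as the paper: connectedness is first reduced to sharing a common standard origin (the paper's Lemma~\ref{lem:path-origin} and Corollary~\ref{cor:connected-origin}), item~\ref{prop:connectednessadequacy-1} is then proved by structural induction on that origin with the case split on how the two proof labels descend through prefix, restriction, sum and parallel, and item~\ref{prop:connectednessadequacy-2} by induction on the derivation of \(\theta_1 \conn \theta_2\), assembling realising processes (the paper packages this as \(\real{\theta}\) in Proposition~\ref{prop:realis}). The only cosmetic difference is that you occasionally phrase the goal in terms of dependence or independence where the formal target is the connectivity rules A\(^1\)/A\(^2\), C\(^2\), P\(^2\), but the rules you would invoke are the right ones, so this is presentation rather than substance.
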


\begin{remark}
	Note that the converse of \itemreflipics{prop:connectednessadequacy-1} does not hold, as \eg \(a[k] \conn \lmidr b[m]\), but \(a \pr{f}[a][k] a[k]\) and \(\nil \Par b \pr{f}[\lmidr b][m] \nil \Par b[m]\) are not connected.
	However, \eg  \(t_1: a.(\nil \Par b) \pr{f}[a][k] a[k]. (\nil \Par b)\) and \(a[k]. (\nil \Par b) \pr{f}[\lmidr b][m] a[k].(\nil \Par b[m])\) are connected, illustrating \itemreflipics{prop:connectednessadequacy-2}.
\end{remark}

\begin{figure}[ht]
	\begin{tcolorbox}[title = {Connectivity Relation}, fontupper=\linespread{.9}\small, sidebyside, sidebyside align=top]
		\begin{tcolorbox}[adjusted title=Action]
			\makebox[.4\textwidth][c]{
				\begin{prooftree}
					\hypo{}
					\infer[]1[A\(^1\)]{\alpha[k] \conn \theta}
				\end{prooftree}
			}
			\makebox[.4\textwidth][c]{
				\begin{prooftree}
					\hypo{\theta \text{ is not a prefix}}
					\infer[]1[A\(^2\)]{\theta \conn \alpha[k]}
				\end{prooftree}
			}
		\end{tcolorbox}
		\begin{tcolorbox}[adjusted title=Parallel]
			\makebox[.4\textwidth][c]{
				\begin{prooftree}
					\hypo{\theta \conn \theta'}
					\infer[]1[P\(^1\)]{\lmidd \theta \conn \lmidd \theta'}
				\end{prooftree}
			}
			\makebox[.4\textwidth][c]{
				\begin{prooftree}
					\hypo{}
					\infer[]1[P\(^2\)]{\lmidd \theta \conn \lmidod \theta'}
				\end{prooftree}
			}
		\end{tcolorbox}
		\tcblower
		\begin{tcolorbox}[adjusted title=Choice]
			\makebox[.4\textwidth][c]{
				\begin{prooftree}
					\hypo{\theta \conn \theta'}
					\infer[]1[C\(^1\)]{\lplusd \theta \conn \lplusd \theta'}
				\end{prooftree}
			}
			\makebox[.4\textwidth][c]{
				\begin{prooftree}
					\hypo{}
					\infer[]1[C\(^2\)]{\lplusd \theta \conn \lplusod \theta'}
				\end{prooftree}
			}
		\end{tcolorbox}
		
		\begin{tcolorbox}[adjusted title=Synchronisation]
			\begin{prooftree}
				\hypo{\theta \conn \theta_{\D}}
				\infer[]1[S\(^1\)]{\lmidd \theta  \conn \cpair{\theta_{\L}}{\theta_{\R}}}
			\end{prooftree}
			\hfill
			\begin{prooftree}
				\hypo{\theta_{\D} \conn \theta}
				\infer[]1[S\(^2\)]{\cpair{\theta_{\L}}{\theta_{\R}} \conn \lmidd \theta}
			\end{prooftree}
			\\[.8em]
			\begin{prooftree}
				\hypo{\theta_1 \conn \theta'_1}
				\hypo{\theta_2 \conn \theta'_2}
				\infer[]2[S\(^3\)]{\cpair {\theta_1} {\theta_2} \conn \cpair {\theta'_1} {\theta'_2}}
			\end{prooftree}
		\end{tcolorbox}
	\end{tcolorbox}
	\begin{multicols}{2}
		\begin{tcolorbox}[title = {Dependence Relation}, fontupper=\linespread{.9}\small]%
			\begin{tcolorbox}[adjusted title=Action]
				\raisebox{-1.25em}{ %
					\makebox[.4\textwidth][c]{
						\begin{prooftree}
							\hypo{}
							\infer[]1[A\(^1\)]{\alpha[k] \sdep \theta}
						\end{prooftree}
					}
				}
				\raisebox{-1.25em}{ %
					\makebox[.4\textwidth][c]{
						\begin{prooftree}
							\hypo{\theta \text{ is not a prefix}}
							\infer[]1[A\(^2\)]{\theta \sdep \alpha[k]}
						\end{prooftree}
					}
				}
			\end{tcolorbox}
			\begin{tcolorbox}[adjusted title=Choice]
				\makebox[.4\textwidth][c]{
					\begin{prooftree}
						\hypo{\theta \sdep \theta'}
						\infer[]1[C\(^1\)]{\lplusd \theta \sdep \lplusd \theta'}
					\end{prooftree}
				}
				\makebox[.4\textwidth][c]{
					\begin{prooftree}
						\hypo{}
						\infer[]1[C\(^2\)]{\lplusd \theta \sdep \lplusod \theta'}
					\end{prooftree}
				}
			\end{tcolorbox}
			\begin{tcolorbox}[adjusted title=Parallel]
				\makebox[.4\textwidth][c]{
					\begin{prooftree}
						\hypo{\theta \sdep \theta'}
						\infer[]1[P\(^1\)]{\lmidd \theta \sdep\ \lmidd \theta'}
					\end{prooftree}
				}
				\makebox[.4\textwidth][c]{
					\begin{prooftree}
						\hypo{\kay{\theta} = \kay{\theta'}}
						\infer[]1[P\(^2_k\)]{\lmidd \theta \sdep\ \lmidod \theta'}
					\end{prooftree}
				}
			\end{tcolorbox}
			\begin{tcolorbox}[adjusted title=Synchronisation]
				\begin{prooftree}
					\hypo{\theta \sdep \theta_{\D}}
					\infer[]1[S\(^1\)]{\lmidd \theta  \sdep \cpair {\theta_{\L}} {\theta_{\R}}}
				\end{prooftree}
				\hfill
				\begin{prooftree}
					\hypo{\theta_{\D} \sdep \theta}
					\infer[]1[S\(^2\)]{\cpair {\theta_{\L}} {\theta_{\R}} \sdep\ \lmidd \theta}
				\end{prooftree}
				\\[.8em]
				\begin{prooftree}
					\hypo{\theta_i \sdep \theta'_i}
					\hypo{\theta_j \conn \theta'_j}
					\hypo{i, j \in \{1, 2\}, i \neq j}
					\infer[]3[S\(^3\)]{\cpair {\theta_1} {\theta_2} \sdep \cpair {\theta'_1} {\theta'_2}}
				\end{prooftree}
			\end{tcolorbox}
		\end{tcolorbox}
		\begin{tcolorbox}[title = {Independence Relation}, fontupper=\linespread{.9}\small]%
			\begin{tcolorbox}[adjusted title = {Action}]
				\vbox to 23
				pt {\vfil
					\hfill \emph{(empty)} \hfill~
					\vfil
				}
			\end{tcolorbox}
			\begin{tcolorbox}[adjusted title=Choice]
				\makebox[.4\textwidth][c]{
					\begin{prooftree}
						\hypo{\theta \ind  \theta'}
						\infer[]1[C\(^1\)]{\lplusd \theta \ind  \lplusd \theta'}
					\end{prooftree}
				}
				\makebox[.4\textwidth][c]{ %
				}
			\end{tcolorbox}
			\begin{tcolorbox}[adjusted title=Parallel]
				\makebox[.4\textwidth][c]{
					\begin{prooftree}
						\hypo{\theta \ind  \theta'}
						\infer[]1[P\(^1\)]{\lmidd \theta \ind\  \lmidd \theta'}
					\end{prooftree}
				}
				\makebox[.4\textwidth][c]{
					\begin{prooftree}
						\hypo{\kay{\theta} \neq \kay{\theta'}}
						\infer[]1[P\(^2_k\)]{\lmidd \theta \ind \ \lmidod \theta'}
					\end{prooftree}
				}
			\end{tcolorbox}
			\begin{tcolorbox}[adjusted title=Synchronisation]
				\begin{prooftree}
					\hypo{\theta \ind  \theta_{\D}}
					\infer[]1[S\(^1\)]{\lmidd \theta  \ind  \cpair{\theta_{\L}}{\theta_{\R}}}
				\end{prooftree}
				\hfill
				\begin{prooftree}
					\hypo{\theta_{\D} \ind \theta}
					\infer[]1[S\(^2\)]{\cpair{\theta_{\L}}{\theta_{\R}} \ind\ \lmidd \theta}
				\end{prooftree}
				\\[.8em]
				\begin{prooftree}
					\hypo{\theta_1 \ind  \theta'_1}
					\hypo{\theta_2 \ind  \theta'_2}
					\infer[]2[S\(^3\)]{\cpair {\theta_1} {\theta_2} \ind  \cpair {\theta'_1} {\theta'_2}}
				\end{prooftree}
			\end{tcolorbox}
		\end{tcolorbox}
	\end{multicols}
	\caption{Relations on proof labels%
	}
	\label{fig:pccsk-relations}
\end{figure}

The notion of connectedness is now leveraged to reduce our search space on proof keyed labels: for example, \(\lplusl a[k]\) and \(\lmidr b[m]\) are neither dependent nor independent, but they also cannot belong to connected transitions (since no process can have both \(+\) and \(\Par\) at top level).

\begin{restatable}[Complementarity on labels]{theorem}{thmcomplementarity}
\label{thm:complementarity}
For all \(\theta_1\), \(\theta_2\),
\begin{enumerate}
	\item If $\theta_1 \ind \theta_2$ then $\theta_1 \conn \theta_2$.
	\item If $\theta_1 \sdep \theta_2$ then $\theta_1 \conn \theta_2$.
	\item If $\theta_1 \conn \theta_2$ then either $\theta_1 \ind \theta_2$ or $\theta_1 \sdep \theta_2$, but not both.
\end{enumerate}
\end{restatable}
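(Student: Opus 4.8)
The three claims split into a soundness direction (parts 1 and 2) and a completeness\hyph with\hyph exclusivity direction (part 3), and I would treat the latter as two separate statements: \emph{exhaustiveness} (at least one of \(\ind\), \(\sdep\) holds) and \emph{exclusivity} (not both).

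For parts 1 and 2 I would argue by rule induction on the derivation of \(\theta_1 \ind \theta_2\) (\resp \(\theta_1 \sdep \theta_2\)). The point is that every rule defining \(\ind\) or \(\sdep\) has a namesake rule for \(\conn\) with an identical or weaker side condition: the congruence rules \(\mathrm{C}^1\), \(\mathrm{P}^1\), \(\mathrm{S}^1\), \(\mathrm{S}^2\), \(\mathrm{S}^3\) have \(\conn\)-counterparts whose premises are exactly the \(\conn\)-facts delivered by the induction hypothesis; rule \(\mathrm{P}^2_k\) (with side condition \(\kay{\theta}\neq\kay{\theta'}\) for \(\ind\), \resp \(\kay{\theta}=\kay{\theta'}\) for \(\sdep\)) is subsumed by the unconditional \(\mathrm{P}^2\) for \(\conn\); likewise \(\mathrm{C}^2\) for \(\sdep\) is subsumed by \(\mathrm{C}^2\) for \(\conn\); and the action rules \(\mathrm{A}^1\), \(\mathrm{A}^2\) for \(\sdep\) are literally those for \(\conn\). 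Each case therefore closes immediately, so I expect parts 1 and 2 to be routine.

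For exhaustiveness I would induct on the derivation of \(\theta_1 \conn \theta_2\), analysing the last rule. The congruence rules are handled uniformly: from a premise \(\psi \conn \psi'\) the induction hypothesis gives \(\psi \ind \psi'\) or \(\psi \sdep \psi'\), and the matching rule (\(\mathrm{C}^1\), \(\mathrm{P}^1\), \(\mathrm{S}^1\) or \(\mathrm{S}^2\)) for \(\ind\) or \(\sdep\) propagates it to the conclusion. For the pair rule \(\mathrm{S}^3\) the hypothesis yields \(\ind\) or \(\sdep\) on each component; if both components are independent I conclude by \(\mathrm{S}^3\) for \(\ind\), and otherwise some component is dependent while the other is already \(\conn\) (it is a premise of the \(\conn\)-derivation), so \(\mathrm{S}^3\) for \(\sdep\) applies. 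The forced cases are exactly those where \(\ind\) is unavailable: \(\mathrm{A}^1\) and \(\mathrm{A}^2\) give \(\sdep\) (no \(\ind\)-rule has a bare prefix as an operand), \(\mathrm{C}^2\) gives \(\sdep\) (there is no \(\mathrm{C}^2\) for \(\ind\)), and \(\mathrm{P}^2\) splits on the keys, giving \(\ind\) when \(\kay{\theta}\neq\kay{\theta'}\) and \(\sdep\) when \(\kay{\theta}=\kay{\theta'}\).

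For exclusivity I would show, \emph{unconditionally}, that no pair satisfies both \(\theta_1\ind\theta_2\) and \(\theta_1\sdep\theta_2\), by induction on \(\len{\theta_1}+\len{\theta_2}\) with a case split on the leading operators of \(\theta_1,\theta_2\). If either label is a bare prefix \(\alpha[k]\), or the two are choices in opposite directions, then \(\ind\) is not derivable at all (no \(\ind\)-rule produces such a conclusion) and the conjunction fails vacuously; if the two are parallels in opposite directions then \(\ind\) forces \(\kay{\theta}\neq\kay{\theta'}\) whereas \(\sdep\) forces \(\kay{\theta}=\kay{\theta'}\), again impossible. For mismatched heads (\eg a parallel against a choice) neither relation is derivable. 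In every other case both relations, if derivable, must end with the same congruence rule over a strictly smaller sub-pair, so peeling it off and applying the induction hypothesis settles the case. The one case needing care is two synchronisations \(\cpair{\sigma_1}{\sigma_2}\) and \(\cpair{\sigma'_1}{\sigma'_2}\): here \(\mathrm{S}^3\) for \(\ind\) forces \(\sigma_1\ind\sigma'_1\) and \(\sigma_2\ind\sigma'_2\), while \(\mathrm{S}^3\) for \(\sdep\) forces \(\sigma_i\sdep\sigma'_i\) for some \(i\); that \(i\)-th sub-pair would then be simultaneously \(\ind\) and \(\sdep\), contradicting the induction hypothesis. I expect this synchronisation case, together with the (routine but numerous) inversions showing that the leading operators pin down which rule could have been applied last, to be the main obstacle.
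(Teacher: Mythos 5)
Your proposal is correct, and for parts 1 and 2 as well as the \enquote{at least one of \(\ind\), \(\sdep\)} half of part 3 it follows essentially the same route as the paper: rule-by-rule transformation of derivations, with the only delicate points being the relaxation of the key condition in P\(^2_k\) versus P\(^2\) and the fact that S\(^3\) for \(\sdep\) already carries a \(\conn\) premise. Where you genuinely diverge is in the \enquote{not both} half: the paper proves exhaustiveness and exclusivity \emph{simultaneously}, in a single induction on \emph{the} (unique, by the paper's Proposition on uniqueness of \(\conn\)-derivations) proof of \(\theta_1 \conn \theta_2\), showing at each case both that one relation is derivable and that the other cannot be, because the head operators pin down the only applicable last rule and its premises fail by the induction hypothesis. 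You instead factor out exclusivity as an unconditional disjointness statement proved by a separate induction on \(\len{\theta_1}+\len{\theta_2}\) with inversion on the leading operators. Both work; your decomposition has the small advantages of not needing the uniqueness of the \(\conn\)-derivation for the exclusivity half and of yielding disjointness of \(\ind\) and \(\sdep\) on \emph{all} label pairs rather than only connected ones (an extra generality that is vacuous given parts 1 and 2), at the cost of duplicating the head-operator case analysis across two inductions where the paper does it once.
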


The relations $\ind$ and $\sdep$ from \autoref{def:relation-proof-labels} are easily extended to \pccsk and \ccsk's transitions by using the bijection \(\base{\cdot} = {(\prov{\cdot}})^{-1}\) from \pccsk to \ccsk defined in \autoref{app:sub-systems}.

\begin{definition}[Relations on \pccsk and \ccsk transitions]
	\label{def:ind-on-ccsk}
	For transitions \(t_1\), \(t_2\) in \pccsk (\resp \ccsk), we let \(t_1 \ind t_2\) iff $t_1$ and $t_2$ are connected and \(\lblof{t_1} \ind \lblof{t_2}\) (\resp \(t_1 \ind t_2\) iff  $t_1$ and $t_2$ are connected and \(\lblof{\prov{t_1}} \ind \lblof{\prov{t_2}}\)), and correspondingly for \(\sdep\).
\end{definition}

We easily deduce from \autoref{thm:complementarity}:
\begin{proposition}[Complementarity on transitions]
\label{prop:complementarity transitions}
If $t_1$ and $t_2$ are connected then exactly one of $t_1 \ind t_2$ and $t_1 \sdep t_2$ holds.
\end{proposition}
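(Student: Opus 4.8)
The plan is to deduce the statement directly from the label-level complementarity (\autoref{thm:complementarity}) together with the adequacy of connectedness (\autoref{prop:connectednessadequacy}), exploiting the fact that \autoref{def:ind-on-ccsk} bakes the connectedness requirement into both \(t_1 \ind t_2\) and \(t_1 \sdep t_2\). Once the two transitions are known to be connected, the transition-level relations reduce to their label-level counterparts, and the trichotomy on labels transfers verbatim.

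Concretely, I would first fix connected transitions \(t_1\) and \(t_2\) and set \(\theta_1 = \lblof{t_1}\) and \(\theta_2 = \lblof{t_2}\) in the \pccsk case (for \ccsk, apply the bijection \(\prov{\cdot}\) and take \(\theta_i = \lblof{\prov{t_i}}\) instead, connectedness being preserved by the bijection between \ccsk and \pccsk transitions recalled in \autoref{app:sub-systems}). By \itemreflipics{prop:connectednessadequacy-1}, connectedness of the transitions yields \(\theta_1 \conn \theta_2\) at the level of proof labels.

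Next I would invoke the third item of \autoref{thm:complementarity}: from \(\theta_1 \conn \theta_2\) it follows that exactly one of \(\theta_1 \ind \theta_2\) and \(\theta_1 \sdep \theta_2\) holds. Finally, unfolding \autoref{def:ind-on-ccsk} and using that \(t_1\) and \(t_2\) are connected, the defining \enquote{connected} conjunct is already satisfied, so \(t_1 \ind t_2\) is equivalent to \(\theta_1 \ind \theta_2\) and \(t_1 \sdep t_2\) is equivalent to \(\theta_1 \sdep \theta_2\). Combining these two equivalences with the label-level trichotomy gives that exactly one of \(t_1 \ind t_2\) and \(t_1 \sdep t_2\) holds.

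I do not expect any genuine obstacle here, which is exactly why the statement is flagged as an easy consequence: all the substantive work lives in \autoref{thm:complementarity} and \autoref{prop:connectednessadequacy}, proven separately. The only point requiring a little care is the \ccsk case, where one must confirm that connectedness of transitions is transported along the bijection \(\prov{\cdot}\), so that the same reduction to proof labels applies unchanged.
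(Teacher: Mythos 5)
Your argument is correct and matches the paper's intended reasoning: the paper states this proposition as an easy consequence of \autoref{thm:complementarity}, and your expansion via \itemreflipics{prop:connectednessadequacy-1} and the unfolding of \autoref{def:ind-on-ccsk} (with the bijection \(\prov{\cdot}\) for the \ccsk case) is exactly the deduction the authors have in mind. Nothing is missing.
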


	\section{Basics of the Axiomatic Approach}
\label{ssec:reminders-axiom}

We shall now recall the basics of the axiomatic approach~\cite{LanesePU20,LPU24}. Once a model of computation (presented as a labelled transition system with independence~\cite{SNW96}) is reversed, for example a process calculus, it is challenging to prove that it satisfies desired properties such as
\emph{causal consistency}~\cite{DK04} or \emph{causal safety} and \emph{causal liveness}~\cite{LanesePU20,LPU24}.
The axiomatic approach allows us to obtain such properties---among others---from simpler axioms.
We recall the basic axioms, introduce
\emph{polychotomies} and the conditions under which these properties hold (\autoref{prop:poly}).
We conclude the section by showing that the independence relation for reversible calculi satisfying appropriate axioms is unique (\autoref{thm:uniqueness}).

\begin{definition}[LTSI~\protect{\cite[Defs 2.1--2.3]{LPU24}}]
	\label{def:ltsi}
	Let $\Proc$ be a set of \emph{processes}, ranged over by $P,Q,\ldots$, and $\Lab$ a set of \emph{labels}, ranged over by $a,b,\ldots$.
	A \emph{combined LTS} is a forward LTS $(\Proc,\Lab,\r{f})$ together with a backward LTS $(\Proc,\Lab,\r{b})$ satisfying the Loop Lemma: $P \r{f}[a] Q$ iff $Q \r{b}[a] P$.
	To refer to transitions which may be either forward or backward, we introduce backward labels $\rev{\Lab} = \{\rev{a} : a \in \Lab\}$, and let $\alpha,\beta$ range over \emph{directed labels}, \ie members of the disjoint union $\Lab \union \rev{\Lab}$.
	Then $P \r{fb}[\alpha] Q$ denotes $P \r{f}[a] Q$ if $\alpha = a$ and $P \r{b}[a] Q$ if $\alpha = \rev{a}$.
	We let $\rev{\rev{a}} = a$ and define the \emph{inverse} $\rev{t}$ of a transition $t: P \r{fb}[\alpha] Q$ to be 
	$\rev{t}: Q \r{fb}[\rev{\alpha}] P$.
	The \emph{underlying label} $\und\alpha$ is defined as $\und{a} = \und{\rev{a}} = a$. %
	
	We say that $(\Proc,\Lab,\r{fb},\ind)$ is a \emph{labelled transition system with independence} (LTSI) if $(\Proc,\Lab,\r{fb})$ is a combined LTS and $\ind$
	is an irreflexive symmetric binary relation on transitions---the \emph{independence} relation.
\end{definition}

\begin{remark}
	Observe that the notation differs from \pccsk's notation, since $X \pr{fb}[\theta] X'$ can mean either $X \pr{f}[\theta] X'$ or $X \pr{b}[\theta] X'$, but $P \r{fb}[\alpha] Q$ means $P \r{f}[a] Q$ or $P \r{b}[a] Q$ depending on $\alpha$.
	This added precision is more consistent with previous work, and will simplify some notations, but we import from \autoref{sec:proved-lts} %
	the notions of (rooted) path, coinitial transitions, etc.
\end{remark}

\begin{figure}[h]
	\Comment{
		\begin{tabularx}{\textwidth}{>{\bfseries}l X}
			\multicolumn{2}{l}{\normalfont{\emph{An LTSI satisfies} \textbf{Property}\Comment{\dots} \emph{if the condition on the right holds:\Comment{\ldots}}}}\\ 
			SP & whenever $t:P \r{fb}[\alpha] Q$, $u:P \r{fb}[\beta] R$ with $t \ind u$ then there are cofinal transitions $u': Q \r{fb}[\beta] S$ and $t':R \r{fb}[\alpha] S$.\\
			BTI & whenever $t:P \r{b}[a] Q$ and $t': P \r{b}[b] Q'$ and $t \neq t'$ then $t \ind t'$.\\
			WF & there is no infinite reverse computation, \ie there are no $P_i$ (not necessarily distinct) such that $P_{i+1} \r{f}[a_i] P_i$ for all $i \in\mathbb{N}$.\\
			PCI & whenever $t:P \r{fb}[\alpha] Q$, $u:P \r{fb}[\beta] R$, $u': Q \r{fb}[\beta] S$ and $t':R \r{fb}[\alpha] S$ with $t \ind u$, then $u' \ind \rev{t}$.\\
			BLD & whenever $t: P \rtran a Q$ and $u: P \rtran a R$ are coinitial backward transitions with the same label then $t = u$.\\
			ID & whenever $t:P \r{fb}[\alpha] Q$, $u:P \r{fb}[\beta] R$, $u': Q \r{fb}[\beta] S$ and $t':R \r{fb}[\alpha] S$, with $Q \neq R$ if $t$ and $u$ have the same direction; $P \neq S$ otherwise; then $t \ind u$.\\
			NRE & for any rooted path $r$ and any event $e$ we have $\cte(r, e)\leq 1$.\\
			\hline
			IRE & whenever $t \sqeqt t' \ind u$ then $t \ind u$. \\
			RPI & whenever $t \ind t'$ then $\rev{t} \ind t'$.\\
		\end{tabularx}	
		\caption{Main properties studied by the axiomatic approach. %
			Properties above the line hold for pre-reversible LTSIs.}
	}
	
	\begin{tcolorbox}[title = {Basic Axioms}]
		\begin{tabularx}{\textwidth}{>{\bfseries}l X}
			SP & whenever $t:P \r{fb}[\alpha] Q$, $u:P \r{fb}[\beta] R$ with $t \ind u$ then there are cofinal transitions $u': Q \r{fb}[\beta] S$ and $t':R \r{fb}[\alpha] S$.\\
			BTI & whenever $t:P \r{b}[a] Q$ and $t': P \r{b}[b] Q'$ and $t \neq t'$ then $t \ind t'$.\\
			WF & there is no infinite reverse computation, \ie there are no $P_i$ (not necessarily distinct) such that $P_{i+1} \r{f}[a_i] P_i$ for all $i \in\mathbb{N}$.\\
			PCI & whenever $t:P \r{fb}[\alpha] Q$, $u:P \r{fb}[\beta] R$, $u': Q \r{fb}[\beta] S$ and $t':R \r{fb}[\alpha] S$ with $t \ind u$, then $u' \ind \rev{t}$.
		\end{tabularx}
	\end{tcolorbox}
	\begin{tcolorbox}[title = {Other Useful Properties}]
		\begin{tabularx}{\textwidth}{>{\bfseries}l X}
			ID & whenever $t:P \r{fb}[\alpha] Q$, $u:P \r{fb}[\beta] R$, $u': Q \r{fb}[\beta] S$ and $t':R \r{fb}[\alpha] S$, with $Q \neq R$ if $t$ and $u$ have the same direction; $P \neq S$ otherwise; then $t \ind u$.\\
			IRE & whenever $t \sqeqt t' \ind u$ then $t \ind u$.  \tikzmark{top} \\
			RPI & whenever $t \ind t'$ then $\rev{t} \ind t'$.\\
		\end{tabularx}
		
	\end{tcolorbox}
	\caption{%
		Main properties studied by the axiomatic approach.
		In the tables above, an LTSI satisfies \textbf{Property} if the condition on the right holds.
	}
	\label{fig:axiomatic-properties}
\end{figure}

\begin{definition}[Axiomatic properties and pre-reversible LTSI]
	\label{def:basic}
	\label{def:other-properties}
	\label{def:prerev}
	\autoref{fig:axiomatic-properties} presents
	\begin{description}
		\item[the basic axioms] \emph{Square property} (SP), \emph{Backward transitions are independent} (BTI), \emph{Well-founded} (WF) and \emph{Propagation of coinitial independence} (PCI)~\cite[Defs.\ 3.1, 4.2]{LPU24};
		\item[other useful properties] %
		\emph{Independence of diamonds} (ID)%
		, %
		\emph{Independence respects events} (IRE) and \emph{Reversing preserves independence} (RPI). %
	\end{description}
	An LTSI is \emph{pre-reversible} if it satisfies the basic axioms.%
\end{definition}

\Comment{
	\clem{Maybe we can get rid of the definition environment, to gain a bit of space and merge with the following paragraph? Or we could merge  \autoref{def:other-properties} with it.}
	\begin{definition}[Basic axioms~\protect{\cite[Def. 3.1 and 4.2]{LPU24}}]
		\label{def:basic}
		The \emph{Square property} (SP), \emph{Backward transitions are independent} (BTI), \emph{Well-founded} (WF) and \emph{Propagation of coinitial independence} (PCI) properties are given in %
		\autoref{fig:axiomatic-properties}.
	\end{definition}
}

SP and BTI are complementary: SP expresses soundness of a definition of independence, such as the one given in \autoref{sec:ind-complem},
while BTI expresses its completeness.
We save for \autoref{ssec:instantiating} the proof that \pccsk and \ccsk fulfill the requirements to leverage the axiomatic approach.

\Comment{
	\begin{definition}[Pre-reversible LTSI~\protect{\cite[Def. 4.3]{LPU24}}]
		\label{def:prerev}
		An LTSI is \emph{pre-reversible} if it satisfies SP, BTI, WF and PCI.
	\end{definition}
}

\begin{definition}[Event~\protect{\cite[Def. 4.1]{LPU24}}]
	\label{def:event-general}
	Consider a pre-reversible LTSI and let $\sqeqt$ be the smallest equivalence relation satisfying:
	if $t:P \r{fb}[\alpha] Q$, $u:P \r{fb}[\beta] R$,
	$u':Q \r{fb}[\beta] S$, $t':R \r{fb}[\alpha] S$,
	and $t \ind u$, then $t \sqeqt t'$ and $\rev{t} \sqeqt t'$.
	The equivalence classes of transitions, written $[t]$, are the \emph{events}.
		We say that an event is \emph{forward} if it is the equivalence class of a forward transition; similarly for reverse events.
		Given an event \(e = [t]\) we let \(\rev{e} = [\rev{t}]\).
\end{definition}

\begin{definition}[Counting events in path~\protect{\cite[Def. 4.11.]{LPU24}}]
	\label{def:count-events}
	Let $r$ be a path and $e$ be an event, %
	we define $\cte(r,e)$ %
	as follows, for \(tr\) the path made of \(t\) followed by \(r\):
	\begin{align*}
		\cte(\es,e) & = 0 &&& 
		\cte(tr,e) & =
		\begin{cases}
			\cte(r,e)+1 & \text{if } t \in e \\ %
			\cte(r,e) -1 & \text{if } t \in \rev e \\
			\cte(r,e) & \text{otherwise} %
		\end{cases}
	\end{align*}
\end{definition}

\begin{definition}[Relations on events, transitions~\protect{\cite[Defs 4.14, 4.23, 4.27]{LPU24}}]\label{def:event relations}
	Two events $e$, $e'$ are 
	\begin{itemize}
		\item \emph{Core independent}\footnote{Called \enquote{coinitially independent} in \cite{LPU24}, but this is confusing when $\coind$ is extended to (not necessarily coinitial) transitions.}, written $e \coind e'$, iff there are coinitial transitions \(t\), \(t'\) such that $[t] = e$, $[t'] = e'$ and $t \ind t'$.
	\end{itemize}
	Two forward events $e$, $e'$ are 
	\begin{itemize}
		\item \emph{Causally related}, written $e \leq e'$, iff for all rooted paths $r$, if $\cte(r,e') > 0$ then $\cte(r,e) > 0$.
		\item In \emph{conflict}, written $e \Cf e'$, iff there is no rooted path $r$ such that $\cte(r,e) >0$ and $\cte(r,e') > 0$.
	\end{itemize}
	We write $e < e'$, $e$ is a \emph{cause} of $e'$, if $e \leq e'$ and $e \neq e'$.
	We also extend those relations to transitions by letting $t_1 \coind t_2$ iff $[t_1] \coind [t_2]$, and similarly for forward transitions for causal ordering and conflict.
\end{definition}

\begin{proposition}[Polychotomies~\protect{\cite[Def. 4.28, Proposition 4.29]{LPU24}}]
	\label{prop:poly}
	Pre-reversible LTSIs satisfy \emph{polychotomy for forward events}, \ie for all forward events \(e\), \(e'\), exactly one of the following holds: %
	\begin{multicols}{5}
		\begin{enumerate}
			\item $e = e'$;
			\item $e < e'$;
			\item $e' < e$;
			\item $e \Cf e'$; or
			\item $e \coind e'$.
		\end{enumerate}
	\end{multicols}
	Being defined on events, $<$, $\Cf$ and $\coind$ on transitions are closed under $\sim$, and pre-reversible LTSIs also satisfy \emph{polychotomy for forward transitions}: for all transitions $t$, $t'$, exactly one of the following holds: 
	\begin{multicols}{5}
		\begin{enumerate}
			\item $t\sim t'$;
			\item  $t< t'$;
			\item $t'< t$;
			\item $t \Cf t'$; or
			\item $t\coind t'$.
		\end{enumerate}
	\end{multicols}
\end{proposition}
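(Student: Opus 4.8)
The plan is to derive the transition polychotomy from the event polychotomy, and to prove the latter by establishing separately that \emph{at least one} and \emph{at most one} of the five cases holds. For the reduction, observe that by \autoref{def:event relations} the relations $<$, $\Cf$ and $\coind$ on transitions are defined solely through the classes $[t]$, $[t']$, so they are automatically closed under $\sqeqt$, while $t \sqeqt t'$ holds exactly when $[t] = [t']$. Hence, writing $e = [t]$ and $e' = [t']$, the five transition cases correspond bijectively to the five event cases, and polychotomy for transitions is immediate once it is proved for events. It therefore suffices to treat two forward events $e$, $e'$.

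Throughout I would use two facts about pre-reversible LTSIs. First, the no-repeated-events property: along any rooted path each forward event occurs at most once, and its count is always nonnegative. Second, realisability: since we work with reachable processes, if $e = [t]$ with $t$ forward then $\srcof{t}$ is reached by a rooted path $\rho$ along which, by the first fact, $\cte(\rho,e) = 0$; appending $t$ yields a rooted path $r$ with $\cte(r,e) = 1$ whose final step lies in $e$. These already give irreflexivity of $\coind$: if $e \coind e$ via coinitial independent $t$, $s$ with $[t] = [s] = e$, then SP closes a square whose opposite side again lies in $e$ (\autoref{def:event-general}), so prefixing a rooted path to the common source and traversing two adjacent sides of that square exhibits a rooted path on which $e$ occurs twice, contradicting no-repeated-events.

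For \emph{at most one} I would argue by pairs. Equality excludes $<$ and $\coind$ by their irreflexivity, and excludes $\Cf$ because a realising path for $e = e'$ has positive count for both. The two strict orders exclude each other by antisymmetry of $\leq$: realise $e'$ by a rooted path $r$ whose last step lies in $e'$, so the prefix $r'$ has $\cte(r',e') = 0$; if $e \leq e'$ then $e$ occurs in $r$, necessarily within $r'$ since $e \neq e'$, so $\cte(r',e) > 0$, whence $e' \leq e$ would force $\cte(r',e') > 0$, a contradiction. Causality excludes conflict, since realising $e'$ forces $e$ to occur along the same path. Causality excludes $\coind$: if $t \in e$ and $s \in e'$ are coinitial at $P$, both counts vanish on the rooted path to $P$, so prepending it to $t$ alone realises $e$ with $e'$ absent, and symmetrically, contradicting $e \leq e'$ or $e' \leq e$. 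Finally conflict excludes $\coind$, because SP applied to coinitial independent representatives builds a reachable state in which both events have occurred.

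The main work, and the anticipated obstacle, is \emph{at least one}. Assuming $e \neq e'$, $e \not< e'$, $e' \not< e$ and $\neg(e \Cf e')$, I must exhibit coinitial independent representatives. From $\neg(e \Cf e')$ there is a rooted path reaching a state in which both $e$ and $e'$ have occurred. The core is a path-sorting lemma: using SP to commute adjacent independent transitions and WF to guarantee that the successive rearrangements terminate, one transforms this path so that the occurrences of $e$ and $e'$ become the final two steps, the assumptions $e \not< e'$ and $e' \not< e$ being exactly what permit commuting them past the intervening transitions until they are concurrent. This reaches a state $S$ at which $e$ and $e'$ are undone by two distinct coinitial backward transitions; BTI (completeness of the independence relation on backward transitions) makes these independent, and PCI transports the independence across the resulting square to a coinitial forward pair representing $e$ and $e'$, yielding $e \coind e'$. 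Organising the sort so that it terminates while keeping the two occurrences in their classes $e$ and $e'$ is the delicate point, and it is here that all four basic axioms are used simultaneously.
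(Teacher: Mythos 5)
First, a point of reference: the paper does not prove this proposition at all --- it is imported from \cite{LPU24} (their Definition 4.28 and Proposition 4.29) and used as a black box --- so there is no in-paper proof to compare against. Judged on its own terms, your reduction from transitions to events is correct, and your entire \emph{at most one} half is sound: the pairwise exclusions via NRE, non-negativity of $\cte$ on rooted paths, and realisability of events all go through, and deriving irreflexivity of $\coind$ from SP plus NRE is the right argument.

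The genuine gap is in the \emph{at least one} direction, precisely where you flag the delicacy, and the sketch as written does not close it. Two concrete problems. First, to sort the witnessing path so that the occurrences $t \in e$ and $t' \in e'$ become adjacent, you must commute $t'$ (say) past each \emph{intervening} transition $u$, and for that you need independence of $t'$ with $u$. The hypotheses $e \not< e'$ and $e' \not< e$ say nothing about the relation between $[u]$ and $e'$; the event $[u]$ may well be a cause of $e'$, in which case the swap is impossible and $u$ must instead be routed elsewhere. The natural repair --- classify each intervening event against $e$ and $e'$ by polychotomy and move it left or right accordingly --- is circular, since polychotomy is what is being proved. (Compare the proof of \autoref{lem:immed pred composable} in \autoref{app:immed pred composable}, which performs exactly this kind of sort but is entitled to invoke polychotomy because it comes \emph{after} \autoref{prop:poly} in the logical order.) Second, even granting the sort, a path ending $P \pr{fb}[\;] Q \pr{fb}[\;] S$ with the last two steps in $e$ and $e'$ does not yield ``two distinct coinitial backward transitions at $S$ undoing $e$ and $e'$'': from $S$ you have $\rev{t'}$, but the existence of a backward transition from $S$ in the class $\rev{e}$ is not automatic --- it is essentially equivalent to the coinitial independence you are trying to establish, and BTI cannot be applied until you have both backward transitions in hand. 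The proof in \cite{LPU24} gets around both obstacles by first establishing rearrangement-style lemmas (in particular, that $e \not\leq e'$ yields a rooted path realising $e'$ without $e$, which can then be merged with the given path using causal-consistency machinery derived from the basic axioms). Your argument needs some such intermediate lemma, stated and proved without appeal to polychotomy, before the BTI-plus-PCI finish becomes available.
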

\begin{definition}[Immediate predecessor]\label{def:immed pred}
Let $e_1,e_2$ be forward events, $e_1$ is an \emph{immediate predecessor} of $e_2$ (written $e_1 \ip e_2$) if $e_1 < e_2$ and there is no event $e$ such that $e_1 < e < e_2$.
\end{definition}

\begin{definition}[Composable events]\label{def:compose events}
Let $e_1,e_2$ be events, $e_1$ is \emph{composable} with $e_2$ if there are transitions $t_1 \in e_1$ and $t_2 \in e_2$ such that $t_1$ is composable with $t_2$.
\end{definition}

The following result (proved in \autoref{app:axioms}) will be used in the proof of \autoref{thm:ordering-event-key}.
\begin{restatable}[Immediate predecessor is not compatible with core independence]{lemma}{immedpredcomp}\label{lem:immed pred composable}
Let $e_1,e_2$ be forward events in a pre-reversible LTSI satisfying \ire and RPI.  Then $e_1 \ip e_2$ iff $e_1$ is composable with $e_2$ and not $e_1 \coind e_2$.
\end{restatable}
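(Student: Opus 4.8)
The plan is to derive both implications from the polychotomy for forward events (\autoref{prop:poly}), combined with the counting characterisations of $<$ and $\Cf$ from \autoref{def:event relations} and the standard fact of the axiomatic approach (NRE~\cite{LPU24}) that along any rooted path $r$ a forward event is executed at most once, so that $\cte(r,e)\in\{0,1\}$ for rooted $r$. The easy half of the forward direction is then immediate: if $e_1 \ip e_2$ then $e_1 < e_2$, and since polychotomy asserts that exactly one of the five alternatives holds, $e_1 \coind e_2$ is ruled out.

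For the converse, assume $e_1$ is composable with $e_2$ and not $e_1 \coind e_2$. I would pick forward representatives $t_1 \in e_1$ and $t_2 \in e_2$ with $\tgtof{t_1} = \srcof{t_2}$, prepend any rooted path $s$ reaching $\srcof{t_1}$, and set $r = s\,t_1\,t_2$. Since $t_1$ (\resp $t_2$) is enabled at the end of $s$ (\resp $s\,t_1$), NRE forces $\cte(s,e_1)=\cte(s,e_2)=0$, whence $\cte(s\,t_1,e_1)=1$, $\cte(s\,t_1,e_2)=0$, and $\cte(r,e_1)=\cte(r,e_2)=1$. These values kill three polychotomy cases: $e_1 = e_2$ would give $\cte(r,e_1)=2$; $e_2 < e_1$ fails because $\cte(s\,t_1,e_1)>0$ while $\cte(s\,t_1,e_2)=0$; and $e_1 \Cf e_2$ fails because $r$ witnesses $\cte(r,e_1),\cte(r,e_2)>0$. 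As $e_1 \coind e_2$ is excluded by hypothesis, polychotomy leaves $e_1 < e_2$. For immediacy, suppose $e_1 < e < e_2$; then $\cte(r,e_2)>0$ together with $e<e_2$ gives $\cte(r,e)=1$, and since $t_1,t_2 \notin e \cup \rev{e}$ we get $\cte(s,e)=1$, contradicting $e_1 < e$ because $\cte(s,e_1)=0$. Hence no event lies strictly between, and $e_1 \ip e_2$.

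It remains to establish composability in the forward direction, which I expect to be the main obstacle, as it requires rearranging a witnessing computation rather than a pure counting argument. Starting from $e_1 < e_2$, I would choose a forward $t_2 \in e_2$ with source $Q_0$ and a rooted path $s$ to $Q_0$; as above NRE gives $\cte(s,e_2)=0$, and $e_1 < e_2$ then forces $\cte(s,e_1)=1$, so $e_1$ is among the events executed at $Q_0$, while every cause of $e_2$ is executed at $Q_0$ since $e_2$ is enabled there. The key idea is to reverse from $Q_0$ exactly those executed events that are not causes of $e_2$: invoking causal safety and causal liveness of pre-reversible LTSIs~\cite{LPU24}, this reaches a state $Q$ whose executed events are precisely the strict causes of $e_2$, with $e_2$ still enabled by some forward $t_2' \in e_2$. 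Immediacy guarantees that $e_1$ has no executed consequence at $Q$ (any such event would lie strictly between $e_1$ and $e_2$), so $e_1$ is a maximal executed event and can be reversed, yielding a forward $t_1 \in e_1$ with $\tgtof{t_1} = Q = \srcof{t_2'}$; thus $t_1$ and $t_2'$ are composable.

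The delicate points concentrated in this last step are justifying that the non-causes of $e_2$ can all be reversed while keeping $e_2$ enabled, and that the maximal executed event $e_1$ is reversible. Both are exactly the content of the causal safety and causal liveness results obtained from the basic axioms (with \ire and RPI ensuring that $\coind$ is well behaved on events), so I would phrase the argument so as to cite these results rather than re-derive them.
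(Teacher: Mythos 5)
Your backward direction, and the ``not $e_1 \coind e_2$'' half of the forward direction, follow the paper's proof exactly: polychotomy (\autoref{prop:poly}) plus the counting characterisations of $<$ and $\Cf$, with NRE doing the bookkeeping on a rooted path $s\,t_1\,t_2$. The composability half of the forward direction is where you genuinely diverge. The paper takes a forward-only rooted path $r'\,t_1\,s\,t_2$ with $t_1 \in e_1$, $t_2 \in e_2$ and inducts on $\len{s}$: immediacy forces every transition of $s$ to be either not above $t_1$, or above $t_1$ but not below $t_2$; in either case polychotomy yields core independence, and IRE, RPI and SP let adjacent transitions be swapped until $t_1$ and $t_2$ become adjacent. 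You instead run backwards from a state where $e_2$ is enabled, undoing all executed non-causes of $e_2$ until only the strict causes remain, where $e_1$ is maximal and can be popped off. That decomposition is sound and arguably more conceptual, and it buys a picture (configurations shrinking to the causal cone of $e_2$) that the paper's permutation argument leaves implicit.

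The one step I would push back on is the claim that keeping $e_2$ enabled while reversing the non-causes is ``exactly the content'' of causal safety and liveness. It is not: CS/CL govern which backward transitions exist; they say nothing about a forward event remaining enabled after a reversal. What you actually need, at each step where a backward transition $u$ from $Q_0$ to $Q'$ undoes a maximal executed non-cause $e$, is that $e \coind e_2$ --- which polychotomy gives, since $e \neq e_2$, $e \not< e_2$ by choice of $e$, $e_2 \not< e$ because $e$ but not $e_2$ is executed at $Q_0$, and they are not in conflict --- and then IRE and RPI to obtain $u \ind t_2$, and SP to complete the square and recover a forward transition in $e_2$ from $Q'$. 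In other words, the gap is filled by precisely the swapping machinery the paper deploys in its induction; once written out, your route is the paper's argument run in the opposite direction. So: correct overall, but re-attribute that step to polychotomy together with IRE, RPI and SP rather than to causal safety/liveness, and note that the well-definedness of ``the executed events of a state'' (needed for your set of non-causes to make sense) is itself a consequence of pre-reversibility rather than a given.
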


We conclude this section by proving that under the mild condition of \enquote{admitting} pre-reversibility (\autoref{def:admit-prever}), any LTSI requiring its independence relation to satisfy SP, BTI and PCI accepts a unique independence relation, and that this relation further uniquely determines the notions of event equivalence, core independence, causal ordering and conflict. Additional material and proofs are in \autoref{app:axioms}. %
\begin{definition}[Admitting pre-reversibility]\label{def:admit-prever}
	A combined LTS $(\Proc,\Lab,\r{fb})$ \emph{admits pre-reversibility} if %
	there exists $\ind$ \st $(\Proc,\Lab,\r{fb},\ind)$ is a pre-reversible LTSI. %
\end{definition}	

\begin{restatable}[Uniqueness]{proposition}{uniqueness}\label{thm:uniqueness}
	If a combined LTS admits pre-reversibility %
		and we require any independence relation to satisfy SP, BTI and PCI,
	then the notions of %
	event equivalence, core independence $\coind$, causal ordering $\leq$ and conflict $\#$ %
	are uniquely determined.
\end{restatable}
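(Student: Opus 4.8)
The plan is to prove something slightly stronger than the statement: that any independence relation satisfying SP, BTI and PCI agrees with any other such relation on \emph{coinitial} transitions, and then to observe that each of the four derived notions depends on independence only through its restriction to coinitial pairs. First note that WF is a property of the combined LTS alone, hence inherited by every candidate relation; since the LTS admits pre-reversibility, any $\ind$ satisfying SP, BTI and PCI yields a pre-reversible LTSI (irreflexivity and symmetry being built into the notion of LTSI), so the event machinery of \autoref{def:event-general} applies uniformly to each candidate.

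The core step is the following lemma: if $\ind_1$ and $\ind_2$ both satisfy SP, BTI and PCI, then for coinitial $t: P \r{fb}[\alpha] Q$ and $u: P \r{fb}[\beta] R$ we have $t \ind_1 u$ iff $t \ind_2 u$. I would prove the implication $t \ind_1 u \Rightarrow t \ind_2 u$; the converse is identical with the roles of $\ind_1$ and $\ind_2$ swapped, since the argument invokes only SP for the first relation and BTI and PCI for the second. Assume $t \ind_1 u$. By SP there are cofinal $u': Q \r{fb}[\beta] S$ and $t': R \r{fb}[\alpha] S$, giving a commuting square on $\{P,Q,R,S\}$. The idea is to locate, among the directed edges of this square, a pair of \emph{distinct coinitial backward} transitions, which are $\ind_2$-independent by BTI, and to transport this independence around the square to $P$ using PCI, which propagates the independence of a coinitial pair $(t,u)$ to that of the coinitial pair $(u',\rev{t})$ at the adjacent corner $\tgtof{t}$. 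If $t$ and $u$ are both backward, then $(t,u)$ is already such a pair and BTI gives $t \ind_2 u$ at once. If exactly one of them is backward, then $\rev{t}$ and the matching edge out of $\tgtof{t}$ form a coinitial backward pair one step from $P$, so a single PCI step suffices. If both are forward, then $\rev{u'}$ and $\rev{t'}$ are coinitial backward at $S$, and two PCI steps carry their independence back through $Q$ to $P$, yielding $t \ind_2 u$.

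Given the lemma, uniqueness follows by unwinding definitions. Event equivalence $\sqeqt$ is generated (\autoref{def:event-general}) by squares whose distinguished pair $t,u$ is coinitial and independent, so it depends on $\ind$ only through its coinitial restriction and is therefore identical for $\ind_1$ and $\ind_2$; hence the events coincide. Core independence $\coind$ is defined from coinitial independent transitions together with these common events, so it too is determined. Finally $\cte(r,e)$ depends only on the events, so causal ordering $\leq$ and conflict $\Cf$, being defined from $\cte$ over rooted paths, are uniquely determined as well.

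The main obstacle I anticipate is the \emph{degenerate} squares produced by SP, where corners collapse ($Q=R$ when $t,u$ have the same direction, and $P=S$ otherwise) so that the intended coinitial backward pair is not a pair of distinct transitions and BTI does not immediately apply; these configurations, which are exactly the side conditions of ID, must be excluded or handled by a separate argument. A secondary, purely bookkeeping difficulty is checking that each PCI application instantiates its four transitions with the correct orientation, i.e. that the reversed square genuinely presents the hypotheses of PCI with the chosen pair coinitial and the two completing edges cofinal.
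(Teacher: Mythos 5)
Your top-level strategy is exactly the paper's: reduce everything to showing that any two relations $\ind_1,\ind_2$ satisfying SP, BTI and PCI agree on coinitial transitions, and then observe that event equivalence $\eveqt$, and hence $\coind$, $\leq$ and $\Cf$ (via $\cte$ on rooted paths), are all generated from the coinitial restriction of $\ind$. That second half is correct and is verbatim what the paper does. For the coinitial lemma, the paper factors the transfer step through the derived property ID: given $t \ind_1 u$ coinitial, SP for $\ind_1$ yields a square, and ID for $\ind_2$ (which holds because $\ind_2$ satisfies BTI and PCI, \autoref{prop:ID}) then gives $t \ind_2 u$. Your BTI-plus-two-PCI-steps chase is precisely the proof of that ID lemma unfolded inline, so in substance you are on the same route.

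The genuine gap is the one you flag but do not close: the degenerate squares. Your argument needs the backward coinitial pair (e.g.\ $\rev{u'},\rev{t'}$ at $S$) to consist of two \emph{distinct} transitions before BTI for $\ind_2$ applies, and ID carries exactly the side conditions $Q \neq R$ (same direction) or $P \neq S$ (opposite directions) for this reason. Saying these cases ``must be excluded or handled by a separate argument'' is not an argument, and without one the proof does not go through. The missing ingredient is the non-degenerate diamond property (\autoref{lem:non-degenerate}, \cite[Lem.\ 4.7]{LPU24}): in a \emph{pre-reversible} LTSI, any square completed by SP from an $\ind$-independent coinitial pair has four distinct corners. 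Since $(\Proc,\Lab,\r{fb},\ind_1)$ is pre-reversible (this is where WF, which you correctly note is a property of the underlying LTS, gets used) and the square was produced by SP for $\ind_1$ from $t \ind_1 u$, it is automatically non-degenerate, and your BTI/PCI chase for $\ind_2$ then applies without exception. With that lemma invoked, your proof is complete and coincides with the paper's.
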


We are not aware of any previous such uniqueness result.
Its novelty is that, instead of providing a definition of, for example, independence, and then \enquote{manually} proving that it satisfies various properties, we fix the properties (our basic axioms) and show that there can be only one independence 
relation satisfying them, and similarly for the causality and conflict relations. %

	\section{Properties of \texorpdfstring{\pccsk and \ccsk}{CCSK and PCCSK}} %
\label{sec:properties}

Properties of \pccsk and \ccsk may be divided into two classes:
\textcolor{lipicsGray}{\sffamily\bfseries\upshape\mathversion{bold}1.\@\xspace} those that are derivable from general axioms, which hold for \pccsk and \ccsk by \autoref{thm:axioms-hold} below, and
\textcolor{lipicsGray}{\sffamily\bfseries\upshape\mathversion{bold}2.\@\xspace}  those that depend in some way on the key structure, and do not follow from axioms alone.

\subsection{Instantiating the Axiomatic Approach to \texorpdfstring{\pccsk}{CCSKP} and \texorpdfstring{\ccsk}{CCSK}}
\label{ssec:instantiating}

We now show that instantiating the axiomatic approach to \pccsk and \ccsk, where the independence relation $\ind$ is as in 
\autoref{def:ind-on-ccsk},   produces
 pre-reversible LTSIs that also satisfy IRE and RPI. As a result of \autoref{thm:axioms-hold}, the independence relation of \pccsk and \ccsk is 
 \enquote{the only one} thanks to uniqueness (\autoref{thm:uniqueness}).

\Comment{%
	This is the first time that independence has been defined directly for \pccsk.  Independence was defined as the complement of dependence in~\cite{aubert2023c}.  The rules are adapted from those for CCS in~\cite{BC88,BC94}%
	, with the additional requirement that keys must differ.
	\iain{A natural question is how do we know that our definition here is equivalent to that of~\cite{aubert2023c}.  This could be shown using our rules and complementarity result, say in an appendix, or else for coinitial transitions it could be obtained via uniqueness, since the two notions both satisfy the axioms (after correcting for different keys).}}
\Comment{
\autoref{def:relation-proof-labels} is easily extended to \pccsk and \ccsk's transitions by using the bijection from \autoref{def:sub-systems}:

\begin{definition}[Relations on \pccsk and \ccsk transitions]
	\label{def:ind-on-ccsk}
	For all connected \(t_1\), \(t_2\) in \pccsk (\resp \ccsk), we let \(t_1 \ind t_2\) iff \(\lblof{t_1} \ind \lblof{t_2}\) (\resp \(t_1 \ind t_2\) iff \(\lblof{\prov{t_1}} \ind \lblof{\prov{t_2}}\)), and similarly for \(\sdep\).
\end{definition}
}

\begin{restatable}[The axiomatic approach is applicable to the LTSIs of \ccsk and \pccsk]{theorem}{axiomshold}\label{thm:axioms-hold}
	SP, BTI, WF, PCI, IRE and RPI hold for the LTSIs of \pccsk and \ccsk.
\end{restatable}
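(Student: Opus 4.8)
The plan is to establish all six properties first for \pccsk, exploiting the explicit structure carried by proof labels, and then to transfer them to \ccsk through the label-erasing bijection $\base{\cdot} = (\prov{\cdot})^{-1}$, which identifies the two LTSs and, by \autoref{def:ind-on-ccsk}, their independence relations. Since the definition of events (\autoref{def:event-general}) presupposes pre-reversibility, I would respect the logical order: prove WF, SP, BTI and PCI first, so that $\sqeqt$ becomes well-defined, and only then prove IRE; RPI can be dispatched independently of the event machinery.

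Two properties are immediate. For WF, the number of keys $|\keys{X}|$ is a strictly decreasing measure along backward transitions: the reverse act and syn rules each delete exactly one element of $\keys{X}$, while the congruence rules merely recurse. As reachable processes have finite key sets bounded below by the key-free origin, no infinite reverse computation exists. For RPI, note that reversal preserves proof labels ($\lblof{\rev{t}} = \lblof{t}$ by \autoref{lem:loop_proved}) and preserves connectedness: if a path $r$ witnesses that $t$ is connected to $t'$, then $\rev{t}\,r$ witnesses that $\rev{t}$ is connected to $t'$ (and symmetrically). Hence from $t \ind t'$, which by \autoref{def:ind-on-ccsk} means connectedness together with $\lblof{t} \ind \lblof{t'}$, we obtain $\rev{t} \ind t'$.

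The substantive work is SP and BTI, which I would prove by induction on the structure of the common source process (equivalently, on the derivation of the two transitions), casing on the top-level operator. The only delicate cases are parallel composition and synchronisation. For SP, given coinitial independent $t, u$, complementarity (\autoref{thm:complementarity}) lets me discard every combination whose labels are dependent, leaving: both transitions on the same side of a $\mid$ or the same branch of a $+$ (rules P\(^1\), C\(^1\), and the S rules), which recurse to a strictly smaller process; and the genuinely concurrent case where the two actions sit on opposite sides of a $\mid$ with distinct keys (rule P\(^2_k\)), where the completing transitions $u'$ and $t'$ are built by replaying each action in the component untouched by the other, inheriting verbatim the proof labels $\beta$ and $\alpha$. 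For BTI, distinct coinitial backward transitions cannot both disturb the same keyed prefix, so the induction again reduces to opposite sides of a $\mid$; here the side condition $\kay{\theta} \notin \keys{Y}$ on the reverse parallel rule forces the two reversed keys to differ, giving independence via P\(^2_k\). Alternatively, since coinitial transitions are connected, \autoref{prop:complementarity transitions} reduces BTI to showing that two distinct coinitial backward transitions are never dependent. I expect this parallel/synchronisation case analysis---constructing the completing transitions and checking that their labels match---to be the main obstacle, as it is where all the structural content resides.

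Finally, PCI and IRE follow cheaply from the preceding. For PCI, in the diamond completed by SP the opposite edges carry identical proof labels; thus $u'$ and $\rev{t}$ are coinitial, hence connected, with labels $\beta$ and $\alpha$, and since $t \ind u$ gives $\lblof{t} \ind \lblof{u}$, symmetry of $\ind$ on labels yields $u' \ind \rev{t}$. For IRE, the generators of $\sqeqt$ in \autoref{def:event-general} relate only transitions sharing the same proof label (the edges labelled $\alpha$ in the square, and $t$ with $\rev{t}$), so $t \sqeqt t'$ implies $\lblof{t} = \lblof{t'}$ and that $t, t'$ are connected; combining with $t' \ind u$ and the fact that connectedness composes along paths, we obtain that $t$ is connected to $u$ and $\lblof{t} \ind \lblof{u}$, that is, $t \ind u$. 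The same arguments, read through the bijection, give all six properties for \ccsk.
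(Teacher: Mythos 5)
Your proposal is correct and follows essentially the same route as the paper: WF by counting keys, SP and BTI by structural induction with the parallel/synchronisation cases carrying the weight, PCI/IRE/RPI from the observation that independence is determined by proof labels plus connectedness (which the paper packages as the \enquote{locally label-generated} lemma of~\cite{LPU24} rather than unrolling it as you do), and a transfer to \ccsk through the bijection \(\base{\cdot}\) (necessarily so, since \ccsk itself is not label-generated). The one place your sketch underestimates the work is the keyed-prefix case of SP/BTI rather than the parallel case: there the paper needs a dedicated removal-function lemma (\(\rem_{k}^{\alpha}\)) to strip the outer \(\alpha[k]\) so the induction hypothesis applies, but this is a technical elaboration of the same induction, not a different idea.
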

See \autoref{app:axioms-hold} for the proof.
SP, BTI, WF and PCI were already shown for \pccsk~\cite{aubert2023c}, using as an independence relation the complement of a dependence relation, rather than defining it directly as in this work\footnote{
The complementarity result from \autoref{ssec:complementarity} makes it \emph{almost} equivalent---the original definition of dependence was missing a case, as we explain in \autoref{rem:changes}.
Also, we require independent transitions to be connected, whereas in~\cite{aubert2023c} they had to be coinitial or composable.}.
IRE and RPI were already shown for \pccsk in~\cite{LPU24}.
We can transfer these results from \pccsk to \ccsk
using their bijection.

We can lift independence and dependence from transitions to events.
\begin{definition}[Relations on events]\label{def:ind sdep event}
	Let $e_1,e_2$ be events in the LTSI of \pccsk.
	\begin{enumerate}
		\item 
		$e_1,e_2$ are \emph{connected} if there are transitions $t_1 \in e_1$ and $t_2 \in e_2$ such that $t_1,t_2$ are connected;
		\item
		$e_1 \ind e_2$ if there are transitions $t_1 \in e_1$ and $t_2 \in e_2$ such that $t_1 \ind t_2$;
		\item
		$e_1 \sdep e_2$ if there are transitions $t_1 \in e_1$ and $t_2 \in e_2$ such that $\lblof{t_1} \sdep \lblof{t_2}$.
	\end{enumerate}
\end{definition}
Since the LTSI of \pccsk satisfies IRE by \autoref{thm:axioms-hold}, if $e_1 \ind e_2$ then $t_1 \ind t_2$ for any $t_1 \in e_1$, $t_2 \in e_2$;
similarly for $e_1 \sdep e_2$ using complementarity on connected transitions (\autoref{prop:complementarity transitions}).

\begin{remark}
Note that we are not entirely relying on axioms for LTSIs here, since we have defined dependence directly for \pccsk; if we wanted to be purely axiomatic to analyse the LTSI of some other reversible calculus we could instead have \emph{defined} dependence to be the complement of independence.
\end{remark}
The next result will be used in the proof of \autoref{thm:ordering-event-key}.
See \autoref{app:ind sdep coind} for the proof.
\begin{restatable}[Complementarity for events]{lemma}{indsdepcoind}\label{lem:ind sdep coind}
	Let $e_1,e_2$ be connected events in the LTSI of \pccsk.
	\begin{enumerate}
		\item 
		exactly one of $e_1 \ind e_2$ and $e_1 \sdep e_2$ holds;
		\item
		if $e_1 \coind e_2$ then $e_1 \ind e_2$;
		\item
		if $e_1,e_2$ are composable and $e_1 \ind e_2$ then $e_1 \coind e_2$.
	\end{enumerate}
\end{restatable}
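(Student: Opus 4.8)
The plan is to reduce all three parts to complementarity on connected transitions (\autoref{prop:complementarity transitions}) together with one auxiliary fact: since IRE holds for \pccsk (\autoref{thm:axioms-hold}), the transition-level independence relation is \emph{insensitive to the choice of representatives}. Concretely, if $e_1 \ind e_2$ is witnessed by some $s_1 \in e_1$, $s_2 \in e_2$ with $s_1 \ind s_2$, then $t_1 \ind t_2$ for every $t_1 \in e_1$ and $t_2 \in e_2$: from $t_1 \sqeqt s_1 \ind s_2$ IRE gives $t_1 \ind s_2$, hence $s_2 \ind t_1$ by symmetry, and from $t_2 \sqeqt s_2 \ind t_1$ IRE gives $t_2 \ind t_1$. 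Because transition-level independence bundles in connectedness (\autoref{def:ind-on-ccsk}), this incidentally shows that all representatives of $\ind$-related events are connected. I would state and prove this insensitivity first, as it is used in both Part 1 and Part 3.

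For Part 1, connectedness of $e_1,e_2$ yields connected representatives $t_1 \in e_1$, $t_2 \in e_2$, and \autoref{prop:complementarity transitions} gives exactly one of $t_1 \ind t_2$ and $t_1 \sdep t_2$. In the first case $e_1 \ind e_2$; in the second $t_1 \sdep t_2$ unfolds to $\lblof{t_1} \sdep \lblof{t_2}$, so $e_1 \sdep e_2$ by \autoref{def:ind sdep event}; this settles \enquote{at least one}. For \enquote{not both}, assume $e_1 \ind e_2$ and $e_1 \sdep e_2$, and pick $s_1 \in e_1$, $s_2 \in e_2$ witnessing $\lblof{s_1} \sdep \lblof{s_2}$; by representative-insensitivity $s_1 \ind s_2$, so $s_1,s_2$ are connected with $\lblof{s_1} \ind \lblof{s_2}$, contradicting complementarity on labels (\autoref{thm:complementarity}). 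Part 2 is then immediate: $e_1 \coind e_2$ supplies, by \autoref{def:event relations}, coinitial transitions $t \in e_1$, $t' \in e_2$ with $t \ind t'$, which is already a witness for $e_1 \ind e_2$.

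Part 3 is the crux and the step I expect to be most delicate, since it converts \emph{composable} independence into \emph{coinitial} independence. Take composable representatives $t_1 \in e_1$, $t_2 \in e_2$ with $\tgtof{t_1} = \srcof{t_2}$; by representative-insensitivity $t_1 \ind t_2$. Reversing $t_1$ makes $\rev{t_1}$ coinitial with $t_2$ at $\srcof{t_2}$, and RPI yields $\rev{t_1} \ind t_2$. Applying SP to this coinitial independent pair produces a commuting square whose side parallel to $t_2$ is a transition $t_2'$ coinitial with $t_1$ at $\srcof{t_1}$; the event definition (\autoref{def:event-general}) gives $t_2 \sqeqt t_2'$, so $t_2' \in e_2$, and IRE gives $t_1 \ind t_2'$. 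The coinitial independent pair $t_1, t_2'$ then witnesses $e_1 \coind e_2$. The main obstacle is handling directions correctly: one must reverse $t_1$ to obtain a coinitial pair before SP is applicable, then invoke the event-generating square to pull the newly created transition back into $e_2$ and IRE to recover independence with the original $t_1$; a careless attempt to apply SP directly to the composable pair, or to reverse and conclude $\rev{e_1} \coind e_2$ instead of $e_1 \coind e_2$, is exactly the trap to avoid.
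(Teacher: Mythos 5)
Your proof is correct and follows essentially the same route as the paper's: IRE-based representative-insensitivity plus complementarity on connected transitions for parts 1--2, and RPI to turn a composable independent pair into a coinitial one for part 3. The only difference is that in part 3 you additionally complete the square with SP to exhibit a coinitial witness lying in $e_1$ itself rather than $\rev{e_1}$ --- a detail the paper's one-line conclusion leaves implicit.
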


The following example shows that the independence relation 
$\ind$ based on proof labels for \pccsk is more general than core independence $\coind$
for non-composable events.
\begin{example}\label{ex:ind-causal}
	Consider $(a.b\Par \out{b}.c) \bs  b$. Executing $a$ and then $c$ (after the synchronisation on $b$) produces transitions $t_a$ and $t_c$ respectively:
	\begin{align*}
		t_a : (a.b\Par \out{b}.c) \bs  b & \pr{f}[\lmidl a][k] (a[k].b \Par \out{b}.c) \bs  b \\ %
		t_c  : (a[k].b[l]\Par \out{b}[l].c) \bs  b & \pr{f}[\lmidr c][n] (a[k].b[l]\Par \out{b}[l].c[n]) \bs  b
	\end{align*}
	The labels $\lmidl a[k], \lmidr c[n]$ of $t_a, t_c$ respectively are independent by P\(^2_k\) (since producing \(t_c\) required to have $k \neq n$). However, polychotomy (\autoref{prop:poly}) yields
	$[t_a] \not\coind [t_c]$,
	since $c$ causally depends on $a$: $[t_a]<[t_c]$\footnote{This can also be observed considering that \(t_a\) and \(t_c\) are neither coinitial nor event equivalent to coinitial transitions.}.
	\Comment{\begin{align*}
			t_a &: (a.b\Par \out{b}.c) \bs  b \pr{f}[\lmidl a][k] (a[k].b \Par \out{b}.c)  \bs  b &&& t_a \ind t_c \text{ by P\(^2_k\) provided $k \neq l$}\\
			t_c &: (a[k].b[m]\Par \out{b}[m].c) \bs  b \pr{f}[\lmidr c][l] (a[k].b[m]\Par \out{b}[m].c[l]) \bs  b &&& t_a \not\coind t_c \text{ but } t_a<t_c
		\end{align*}
		Executing $a,c$ produces transitions $t_a, t_c$ with labels $ \lmidl a[k], \lmidr c[l]$, respectively,  which are independent by P\(^2_k\) provided $k \neq l$. However, $t_a \not\coind t_c$ since they are not coinitial, but clearly $c$ causally depends on $a$: $t_a<t_c$.}
\end{example}

\Comment{
\subsection{Uniqueness From an Axiomatic Perspective}
\label{ssec:uniqueness}

\clem{This needs to become a simple comment. Do we keep the proofs in appendix?}

We now prove that under the mild condition of \enquote{admitting} pre-reversibility (\autoref{def:admit-prever}), any LTSI requiring its independence relation to satisfy SP, BTI and PCI accepts a unique independence relation, and that this relation further uniquely determines the notions of event equivalence, coinitial independence, causal ordering and conflict (\autoref{thm:uniqueness}).

\begin{lemma}[Non-degenerate diamond~\protect{\cite[Lem. 4.7]{LPU24}}]
	\label{lem:non-degenerate}
	If an LTSI is pre-reversible and we have a diamond
	$t:P \r{fb}[\alpha] Q$, $u:P \r{fb}[\beta] R$ with $t \ind u$
	and cofinal transitions $u': Q \r{fb}[\beta] S$, %
	$t': R \r{fb}[\alpha] S$,
	then the diamond is \emph{non-degenerate},
	meaning that $P,Q,R,S$ are distinct processes.
\end{lemma}

\begin{lemma}[\protect{\cite[Prop. 4.10]{LPU24}}]\label{prop:ID}
	If an LTSI satisfies BTI and PCI then it satisfies ID.
\end{lemma}

\begin{proposition}[Uniqueness of coinitial independence]
	\label{prop:prerev-coinitial-unique}
	Suppose that $(\Proc,\Lab,\r{fb},\ind_1)$ and $(\Proc,\Lab,\r{fb},\ind_2)$ are two pre-reversible LTSIs with the same underlying combined LTS. Then $\ind_1$ and $\ind_2$ agree on coinitial transitions. 
\end{proposition}
\begin{proof}
	Suppose that $t_1$ and $t_2$ are coinitial transitions with $t_1 \ind_1 t_2$.
	By SP for $\ind_1$ we get a square.
	This is non-degenerate (all states are distinct) by \autoref{lem:non-degenerate}.
	We deduce that $t_1 \ind_2 t_2$ using ID which holds by \autoref{prop:ID}.
	By symmetry we deduce the result.
\end{proof}

\Comment{
	\begin{definition}\label{def:admit-prever}
		A combined LTS $(\Proc,\Lab,\r{fb})$ \emph{admits pre\hyph reversibility} if it satisfies WF, and an independence relation $\ind$ can be added to form an LTSI $(\Proc,\Lab,\r{fb},\ind)$ satisfying SP, BTI and PCI.
	\end{definition}
}
\begin{definition}[Admitting pre-reversibility]\label{def:admit-prever}
	A combined LTS $(\Proc,\Lab,\r{fb})$ \emph{admits pre-reversibility} if %
	there exists $\ind$ \st $(\Proc,\Lab,\r{fb},\ind)$ is a pre-reversible LTSI. %
\end{definition}

\begin{theorem}[Uniquenesses]\label{thm:uniqueness}
	If a combined LTS admits pre-reversibility %
		and we require any independence relation to satisfy SP, BTI and PCI,
	then the notions of %
	event equivalence, coinitial independence $\coind$, causal ordering $\leq$ and conflict $\#$ %
	are uniquely determined.
\end{theorem}
\begin{proof}
	The independence relation $\ind$ is determined for coinitial transitions
	by \autoref{prop:prerev-coinitial-unique}.  %
	From \autoref{def:event-general}, event equivalence \clem{R3 found this term confusing, as it is an equivalence on transitions. I would support a renaming.} $\eveqt$ %
	is determined by $\ind$ on coinitial transitions.  Furthermore, it is clear from \autoref{def:event relations} that $\coind$, $\leq$
	and $\Cf$ are then uniquely determined.
\end{proof}

We are not aware of any previous such uniqueness result.
Its novelty is that, instead of providing a definition of, for example, independence, and then \enquote{manually} proving that it satisfies various properties, we fix the properties (our basic axioms) and show that there can be only one independence 
relation satisfying them, and similarly for the causality and conflict relations. %
We now apply this result to \pccsk after endowing it with an independence relation. %

}

\subsection{Key-Based Properties of \texorpdfstring{\pccsk and \ccsk}{CCSK and PCCSK}}
\label{ssec:key-props-ccsk}
The results in this section depend in some way on the key structure of  \pccsk and \ccsk.
A basic design decision in both systems
is that fresh keys are chosen when computing forwards, so that past events will have different keys.
We shall see that we can decide whether transitions belong to the same event simply using keys (\autoref{prop:event_coincide}). Moreover, it is possible to decide whether one event caused another by purely syntactic means (\autoref{thm:ordering-event-key}).
Proofs and additional material for this section are in Sections~\ref{app:key-based properties} and \ref{app:ordering-event-key} (for \autoref{thm:ordering-event-key}).

\begin{restatable}[Independence implies different keys]{lemma}{inddiffkeys}\label{lem:ind-diff-keys}
For both \pccsk and \ccsk, if $t_1,t_2$ are %
 transitions such that $t_1 \ind t_2$, then $t_1$ and $t_2$ have different keys.
\end{restatable}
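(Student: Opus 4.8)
The plan is to reduce the statement about transitions to a statement about their proof labels, and then argue by induction on the derivation of label independence.

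First I would observe that for a \pccsk transition $t$ we have $\key{t} = \kay{\lblof{t}}$, and that $t_1 \ind t_2$ entails $\lblof{t_1} \ind \lblof{t_2}$ by \autoref{def:ind-on-ccsk} (the connectedness conjunct plays no role here). For \ccsk the bijection $\prov{\cdot}$ carries a \ccsk transition to the \pccsk transition with the same key, so the \ccsk case follows from the \pccsk case once we establish the purely label-level claim: for all proof labels $\theta_1,\theta_2$, if $\theta_1 \ind \theta_2$ then $\kay{\theta_1} \neq \kay{\theta_2}$.

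Next I would record two facts about $\kay$ that follow immediately from \autoref{def:proof-keyed-labels}: prepending a direction does not change the key, i.e.\ $\kay{\lmidd \theta} = \kay{\lplusd \theta} = \kay{\theta}$; and a synchronisation pair has a single key shared by both sides, i.e.\ $\kay{\cpair{\theta_\L}{\theta_\R}} = \kay{\theta_\L} = \kay{\theta_\R}$. The second fact is forced by the grammar of proof keyed labels, which requires both components of a pair to carry a common key, so $\kay$ of a pair is well defined.

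I would then induct on the derivation of $\theta_1 \ind \theta_2$ using the rules of the Independence Relation in \autoref{fig:pccsk-relations}. The decisive observation is that this relation has \emph{no} Action axiom, so P\(^2_k\) is the only rule whose premise is not itself an independence; every derivation is therefore anchored at instances of P\(^2_k\), which is exactly where distinctness of keys is demanded. Concretely: in the P\(^2_k\) case, $\lmidd \theta \ind \lmidod \theta'$ carries the side condition $\kay{\theta} \neq \kay{\theta'}$, and the key-invariance fact gives $\kay{\lmidd \theta} = \kay{\theta} \neq \kay{\theta'} = \kay{\lmidod \theta'}$. The propagating cases C\(^1\) and P\(^1\) preserve the inequality since prefixing a direction leaves the key unchanged. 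For S\(^1\) (and symmetrically S\(^2\)) we use that $\kay{\theta_\D}$ equals the common key $\kay{\cpair{\theta_\L}{\theta_\R}}$, so the induction hypothesis $\kay{\theta} \neq \kay{\theta_\D}$ yields the claim. For S\(^3\), from the induction hypothesis $\kay{\theta_1} \neq \kay{\theta'_1}$ together with the shared-key fact we obtain $\kay{\cpair{\theta_1}{\theta_2}} = \kay{\theta_1} \neq \kay{\theta'_1} = \kay{\cpair{\theta'_1}{\theta'_2}}$.

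I expect the only real subtlety—rather than a genuine obstacle—to be the bookkeeping in the synchronisation cases: one must invoke the grammar of proof keyed labels to justify that both components of a pair $\cpair{\cdot}{\cdot}$ carry the same key, so that $\kay$ of a pair is well defined and matches $\kay{\theta_\D}$. Everything else is a direct consequence of the absence of an Action axiom for $\ind$ together with the key-invariance of direction prefixes.
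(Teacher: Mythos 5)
Your proposal is correct and follows essentially the same route as the paper's proof: reduce to the label-level claim that $\theta_1 \ind \theta_2$ implies $\kay{\theta_1} \neq \kay{\theta_2}$, and transfer to \ccsk via the bijection $\prov{\cdot}$. The paper simply asserts the label-level claim by inspection of the rules for $\ind$, whereas you spell out the induction on the derivation (anchored at P\(^2_k\), with key-invariance of direction prefixes and the shared key of synchronisation pairs), which is exactly the argument the paper leaves implicit.
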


\begin{restatable}[Backward key determinism]{lemma}{bwdkeydet}\label{lem:bwd_key_det}
For both \pccsk and \ccsk, if $t_1, t_2$ are both backward transitions and $\key{t_1} = \key{t_2}$, then $t_1 = t_2$.
\end{restatable}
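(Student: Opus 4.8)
The substance of the lemma is that a key singles out a backward transition, and the form in which such a result can hold and be used—since $t_1 = t_2$ in particular forces $\srcof{t_1} = \srcof{t_2}$—is for coinitial $t_1,t_2$ (as with the coinitial hypothesis in BTI). So the plan is to show that a reachable process performs at most one backward transition carrying a given key. The shortest route reuses two facts already in hand: BTI, that coinitial \emph{distinct} backward transitions are independent, which holds for both \pccsk and \ccsk by \autoref{thm:axioms-hold}; and \autoref{lem:ind-diff-keys}, that independent transitions carry different keys.

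Concretely, I would argue by contradiction. Suppose $t_1$ and $t_2$ are coinitial backward transitions with $\key{t_1} = \key{t_2}$ but $t_1 \neq t_2$. By BTI we obtain $t_1 \ind t_2$, whereupon \autoref{lem:ind-diff-keys} yields $\key{t_1} \neq \key{t_2}$, contradicting the hypothesis; hence $t_1 = t_2$. The argument is uniform over \pccsk and \ccsk, since both satisfy BTI and both validate \autoref{lem:ind-diff-keys} (the \ccsk case being transported across the bijection, as elsewhere). In effect this upgrades backward determinism from determinism by \emph{label} to determinism by \emph{key}.

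An alternative, self-contained proof avoids the axiomatic machinery and proceeds by induction on the derivation of a backward transition $t: X \pr{b}[\theta] Y$ with $\key{t} = k$, showing that $\theta$ and $Y$ are determined by $X$ and $k$. The essential input is that in a reachable process each key occurs exactly once—either on a single keyed prefix $\alpha[k]$, or on a matched synchronising pair $\lambda[k]$, $\out{\lambda}[k]$—so that at each operator the location of $k$ selects a unique applicable backward rule: $\tRev{act}$ versus $\tRev{pre}$ for a keyed prefix, the correct side $\lmidl$ or $\lmidr$ for a parallel composition, or $\tRev{syn}$ for a synchronisation. I expect the main obstacle to be the synchronisation case: one must check that a key shared by the two components can only have arisen from a single $\tRev{syn}$ step, which is precisely where the uniqueness-of-keys invariant for reachable processes is needed. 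Establishing (or invoking) that invariant, and verifying that the $\tRev{act}$ side condition $\keys{X'} = \emptyset$ leaves no competing inner backward transition with the same key, is the crux; once it is available, the inductive determination of $\theta$ and $Y$ is routine.
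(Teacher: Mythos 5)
Your primary argument is exactly the paper's proof: assume coinitial backward transitions $t_1 \neq t_2$ with equal keys, derive $t_1 \ind t_2$ from BTI, and contradict this with \autoref{lem:ind-diff-keys}, uniformly for \pccsk and \ccsk. You also correctly spotted that the statement is implicitly about coinitial transitions (the paper's proof silently assumes a common source $X$), and your alternative inductive argument, while plausible, is not needed.
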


By definition, for any LTSI if $t_1 \eveqt t_2$ then $t_1$ and $t_2$ must have the same labels.
However, the converse is false.  In \pccsk consider $t_1: a[m].b \pr{f}[b][k] a[m].b[k]$ and $t_2: a[n].b \pr{f}[b][k] a[n].b[k]$.  These transitions have the same proof labels, but are not the same event, since they are caused by different events.  However, we can show that if \pccsk or \ccsk transitions $t_1,t_2$ have the same key then $t_1 \eveqt t_2$, provided that there is a path connecting the target processes which does not use their common key.
The key-based definition of events for \pccsk and \ccsk below is simpler than in the general axiomatic approach~\cite{LPU24}---recalled in \autoref{def:event-general}---, as well as the earlier definition of~\cite{PU07a}.

\begin{definition}[Event key equivalence]
	\label{def:event-key}
	Two forward \pccsk transitions with the same key $t_1:X_1 \pr{f}[\theta_1][k] X'_1$ and $t_2:X_2 \pr{f}[\theta_2][k] X'_2$ are \emph{event key equivalent} ($t_1 \evkeqt t_2$) if there is a path $r: X'_1 \pr{bf}^* X'_2$ such that $k \notin \keys{r}$.
	We extend to all transitions (forward or backward) by letting $t_1 \evkeqt t_2$ iff $\fwdof {t_1} \evkeqt \fwdof {t_2}$.
	Similarly for \ccsk.
\end{definition}

\Comment{
\begin{restatable}{proposition}{eventcoincide}\label{prop:event_coincide}
The two definitions of event equivalence coincide: $t_1 \eveqt t_2$ (\autoref{def:event-general}) iff $t_1 \evkeqt t_2$ (\autoref{def:event-key}). \iain{need to clarify \pccsk vs \ccsk---holds for both.}
\end{restatable}
}

\begin{restatable}[Event equivalences coincide]{proposition}{eventcoincide}\label{prop:event_coincide}
For %
 \pccsk and \ccsk, $t_1 \eveqt t_2$ %
 iff $t_1 \evkeqt t_2$.%
\end{restatable}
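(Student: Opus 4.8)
I would prove the two implications separately, first reducing to forward transitions. By \autoref{def:event-key} one has \(t_1 \evkeqt t_2\) iff \(\fwdof{t_1} \evkeqt \fwdof{t_2}\), so it suffices to treat forward \(t_1 : X_1 \pr{f}[\theta_1][k] X'_1\) and \(t_2 : X_2 \pr{f}[\theta_2][k] X'_2\) (the backward cases reduce to these via \autoref{lem:loop_proved} and RPI from \autoref{thm:axioms-hold}), and both relations already force \(t_1,t_2\) to carry the same key \(k\). For the direction \(\eveqt \Rightarrow \evkeqt\), I would check that \(\evkeqt\) is an equivalence relation—reflexivity via the empty path, symmetry via \(\rev{r}\) using \(\keys{\rev{r}}=\keys{r}\), and transitivity by concatenating two key-\(k\)-avoiding paths—and that it is closed under the generating rule of \autoref{def:event-general}. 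Indeed, in a diamond \(t,u,u',t'\) with \(t \ind u\) the parallel edges \(t,t'\) share a label hence a key \(k\), and \(u'\) (for forward \(t\), connecting the targets) or \(u\) (for backward \(t\)) is a one-step path whose key \(\key{u'}=\key{u}\) differs from \(k\) by \autoref{lem:ind-diff-keys}; so \(t \evkeqt t'\). Minimality of \(\eveqt\) then gives \(\eveqt \subseteq \evkeqt\).

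The substantial direction is \(\evkeqt \Rightarrow \eveqt\). Suppose \(t_1,t_2\) are forward with common key \(k\) and a path \(r : X'_1 \pr{bf}^* X'_2\) with \(k \notin \keys{r}\); I would argue by contradiction, assuming \([t_1] \neq [t_2]\). Choosing a rooted path \(\rho_0 : \orig{X_1} \pr{bf}^* X_1\) (which exists by reachability), I form the rooted path \(\pi = \rho_0\, t_1\, r\, \rev{t_2} : \orig{X_1} \pr{bf}^* X_2\). Since \(\pi\) ends with the backward transition \(\rev{t_2} \in \rev{[t_2]}\), causal safety (obtained from the axioms, \autoref{thm:axioms-hold}) forces \([t_2]\) to have been performed earlier, i.e. \(\cte(\rho_0\, t_1\, r,[t_2]) \geq 1\). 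I then compute this count: every transition lying in \([t_2]\) or \(\rev{[t_2]}\) carries key \(k\) (event-equivalent transitions share keys), so the key-\(k\)-free segment \(r\) contributes \(0\); \(t_1\) contributes \(0\) because \(t_1 \not\eveqt t_2\); and \(\rho_0\) contributes \(0\), since \(\cte(\rho_0,[t_2]) \geq 1\) would make \([t_2]\)—hence its key \(k\)—present in \(X_1\), contradicting \(k \notin \keys{X_1}\) (the forward transition \(t_1\) introduces a fresh \(k\)). Hence \(\cte(\rho_0\, t_1\, r,[t_2]) = 0\), contradicting \(\geq 1\); therefore \([t_1] = [t_2]\), i.e. \(t_1 \eveqt t_2\).

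The crux, and the main obstacle, is the bookkeeping linking keys to events: the fact that a forward event with key \(k\) has positive count along a rooted path exactly when \(k\) is present in the reached process. I would isolate this as a short lemma, proved from the freshness discipline (each forward, resp. backward, key-\(k\) step adds, resp. removes, \(k\) from \(\keys{\cdot}\)) together with \autoref{lem:bwd_key_det}, which pins down backward key-\(k\) transitions and prevents two genuinely distinct key-\(k\) events from being simultaneously present. The remaining ingredients—non-negativity of \(\cte\) on rooted paths and that \(\eveqt\)-equivalent transitions share a key—are light.

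A tempting alternative is to induct on \(\len{r}\), peeling off the first step \(v\) of \(r\) and commuting \(t_1\) across the coinitial pair \(\rev{t_1},v\) via SP; this works when \(v\) is backward (BTI) or independent of \(t_1\), but breaks when \(v\) is forward and causally depends on \([t_1]\), since then \(\rev{t_1} \sdep v\) and no commuting square exists. The global counting argument sidesteps this difficulty, which is why I would prefer it. As a sanity check, the cofinal sub-case \(X'_1 = X'_2\) is also immediate directly: there \(\rev{t_1},\rev{t_2}\) are coinitial backward transitions with key \(k\), hence equal by \autoref{lem:bwd_key_det}, so \(t_1 = t_2\).
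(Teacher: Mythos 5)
Your proof of the direction $\eveqt \Rightarrow \evkeqt$ is essentially the paper's: both reduce to a single diamond, note that the transition joining the two targets has a key different from $k$ by \autoref{lem:ind-diff-keys}, and conclude by minimality and transitivity of $\evkeqt$. For the hard direction $\evkeqt \Rightarrow \eveqt$ you take a genuinely different route. The paper inducts on the length of the connecting path $r$, first applying a strengthened Parabolic Lemma (\autoref{lem:PL event}) to reshape $r$ into $\rev{s}s'$ without introducing new keys; this guarantees that either the first transition of $r$ is backward or its last transition is forward, so the relevant pair of coinitial backward transitions have distinct keys and BTI, SP and PCI let one peel that transition off while remaining in the same event. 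That is precisely how the paper evades the obstacle you correctly identify for naive front-peeling (a forward first step that depends on $[t_1]$). Your alternative is a global counting argument on the rooted path $\rho_0\,t_1\,r\,\rev{t_2}$, and I believe it is sound: the fact you attribute to causal safety is really non-negativity of $\cte(\cdot,e)$ on rooted paths, which is a derived property of pre-reversible LTSIs in~\cite{LPU24}, and the key/count correspondence you isolate as a lemma (a key-$k$ event has positive count along a rooted path iff $k$ occurs in the reached process, with at most one such event at a time) does go through by induction on the path using freshness of forward keys, \autoref{lem:bwd_key_det} and non-negativity. Your base observation for the cofinal sub-case coincides with the paper's base case. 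What each approach buys: yours dispenses with the Parabolic Lemma and the diamond-chasing via SP/PCI in the hard direction, but requires developing the key-counting lemma, which is real extra bookkeeping specific to the keyed calculi; the paper's induction stays entirely within the axiomatic machinery it has already established and is the more self-contained of the two.
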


Besides being simpler to work with than \autoref{def:event-general}, a crucial difference between the two definitions is that \autoref{def:event-key} does not make any use of independence.  We could build on this independence-free notion of event to obtain causation and conflict between events (\autoref{def:event relations}) without using independence.
We can also formulate coinitial independence using keys.
We do this for \ccsk, since the proof labels are not relevant.
\begin{definition}[Key independence for \ccsk]\label{def:key ind}
\begin{enumerate}
\item
Whenever $t:P \r{fb}[\alpha][m] Q$, $u:P \r{fb}[\beta][n] R$, $u': Q \r{fb}[\beta][n] S$ and $t':R \r{fb}[\alpha][m] S$, with $m \neq n$ then $t,u$ are \emph{directly key independent};
\item Connected \ccsk transitions $t,u$ are \emph{key independent} if $t \evkeqt t',u \evkeqt u'$ with $t',u'$ directly key independent.
\end{enumerate}
\end{definition}
Key independence corresponds to coinitial independence as defined using the axiomatic approach and independence on proof labels.
\begin{restatable}[Coinitial independences coincide]{lemma}{directkeyind}\label{prop:direct key ind}
For all coinitial transitions $t,u$ in \ccsk, $t,u$ are directly key independent iff $t \ind u$.
\end{restatable}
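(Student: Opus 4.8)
The plan is to prove the two implications separately, leaning on the axiomatic properties established for \ccsk in \autoref{thm:axioms-hold}. Throughout I would use that coinitial transitions are automatically connected (the one-transition path $u$ witnesses that $t$ is connected to $u$), so for coinitial $t,u$ the relation $t \ind u$ from \autoref{def:ind-on-ccsk} reduces to independence of the underlying proof labels $\lblof{\prov{t}}$ and $\lblof{\prov{u}}$. I would also use the basic key invariants of \ccsk: a forward transition $t:P \r{f}[\alpha][m] Q$ picks a fresh key, so $m \notin \keys{P}$ and $\keys{Q} = \keys{P} \cup \{m\}$, while dually a backward transition removes its key, so $m \in \keys{P}$ and $\keys{Q} = \keys{P} \setminus \{m\}$.

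For the left-to-right implication, suppose $t \ind u$ with coinitial $t:P \r{fb}[\alpha][m] Q$ and $u:P \r{fb}[\beta][n] R$. \autoref{lem:ind-diff-keys} immediately gives $m \neq n$. Since \ccsk satisfies SP (\autoref{thm:axioms-hold}) and $t,u$ are coinitial with $t \ind u$, there are cofinal transitions $u':Q \r{fb}[\beta] S$ and $t':R \r{fb}[\alpha] S$. Because in \ccsk the key is part of the directed label, SP preserves it, so $u'$ carries key $n$ and $t'$ carries key $m$; this is exactly the square required by \autoref{def:key ind} with $m \neq n$, whence $t,u$ are directly key independent.

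For the converse, suppose $t,u$ are directly key independent, witnessed by the square $t:P \r{fb}[\alpha][m] Q$, $u:P \r{fb}[\beta][n] R$, $u':Q \r{fb}[\beta][n] S$, $t':R \r{fb}[\alpha][m] S$ with $m \neq n$. I would invoke property ID (\autoref{fig:axiomatic-properties}), which holds for \ccsk because BTI and PCI do (\autoref{thm:axioms-hold}) and together imply ID \cite{LPU24}; it then remains only to check ID's non-degeneracy hypothesis. If $t,u$ have the same direction, the key invariants give $\keys{Q}$ and $\keys{R}$ differing from $\keys{P}$ by $\{m\}$ and $\{n\}$ respectively (both added if forward, both removed if backward), and since $m \neq n$ this forces $Q \neq R$. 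If $t,u$ have opposite directions, say $t$ forward and $u$ backward (the other case is symmetric), then $\keys{S} = (\keys{P} \cup \{m\}) \setminus \{n\}$ with $n \in \keys{P}$, so $n \in \keys{P} \setminus \keys{S}$ and hence $P \neq S$. In either case the diamond is non-degenerate, and ID yields $t \ind u$.

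The main obstacle is precisely this non-degeneracy check feeding ID: everything rests on the elementary key bookkeeping (freshness of forward keys, removal of backward keys) combined with $m \neq n$, so the real work is confirming these invariants hold for all reachable \ccsk processes and handling the forward/backward direction combinations uniformly. The complementarity result (\autoref{prop:complementarity transitions}) offers an alternative route for the converse—arguing that a \emph{dependent} coinitial pair cannot close a non-degenerate diamond with distinct keys—but the ID-based argument above is the more direct one.
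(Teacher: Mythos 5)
Your proposal is correct and follows essentially the same route as the paper's proof: one direction via \autoref{lem:ind-diff-keys} and SP (noting that keys travel with the labels), the other via ID obtained from BTI and PCI, with the key $m \neq n$ forcing non-degeneracy of the diamond. The only difference is that you spell out the key-bookkeeping behind the non-degeneracy check (the $Q \neq R$ versus $P \neq S$ case split), which the paper leaves implicit in a single sentence.
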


\begin{restatable}[Independences coincide]{proposition}{keyindcoind}\label{prop:key ind coind}
For all transitions $t,u$ in \ccsk, $t,u$ are key independent iff $t \coind u$.
\end{restatable}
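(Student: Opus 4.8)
The plan is to translate the key-theoretic \autoref{def:key ind} into the event language of core independence and back, relying on two bridging results: \autoref{prop:event_coincide}, which identifies $\evkeqt$ with event equivalence $\eveqt$, and \autoref{prop:direct key ind}, which identifies direct key independence with $\ind$ on coinitial transitions. Together with the definition of $\coind$ (\autoref{def:event relations}), these reduce the whole statement to shuffling a single witnessing pair of coinitial independent transitions. Before the two directions I would record a preliminary fact: if $t \evkeqt t'$ then $t$ and $t'$ are connected, and connectedness is an equivalence relation on transitions. For the first part, passing to forward representatives via $\fwdof{t}$, \autoref{def:event-key} provides a path from $\tgtof{t}$ to $\tgtof{t'}$; prefixing $t$ yields a path from $\srcof{t}$ to $\tgtof{t'}$, so $t$ is connected to $t'$. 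Symmetry and transitivity of connectedness then follow by reversing and concatenating paths, which is legitimate by the Loop Lemma (\autoref{lem:loop_proved}).

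For the direction key independent $\Rightarrow$ core independent, I would unfold \autoref{def:key ind}: there are directly key independent $t',u'$ with $t \evkeqt t'$ and $u \evkeqt u'$. Since directly key independent transitions are coinitial by construction, \autoref{prop:direct key ind} gives $t' \ind u'$, whence $[t'] \coind [u']$ straight from the definition of core independence. As $\evkeqt$ coincides with $\eveqt$, we have $[t]=[t']$ and $[u]=[u']$, so $[t]\coind[u]$, i.e. $t \coind u$.

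For the converse, assume $t \coind u$, that is $[t]\coind[u]$. The definition of core independence yields coinitial $t',u'$ with $[t']=[t]$, $[u']=[u]$ and $t' \ind u'$; \autoref{prop:direct key ind} makes $t',u'$ directly key independent, and \autoref{prop:event_coincide} gives $t \evkeqt t'$ and $u \evkeqt u'$. The remaining obligation imposed by \autoref{def:key ind} is that $t$ and $u$ be connected, which the preliminary fact supplies: $t$ is connected to $t'$, $u'$ to $u$, and the coinitial pair $t',u'$ is connected, so transitivity links $t$ to $u$. Hence $t,u$ are key independent.

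The step I expect to be most delicate is the bookkeeping for the forward/backward extension: $\evkeqt$ and direct key independence are stated on directed transitions, whereas $\eveqt$ and $\coind$ are most naturally read through forward representatives. I would discharge this uniformly by replacing transitions with their $\fwdof{\cdot}$ and appealing to the Loop Lemma wherever a path must be reversed or composed; once that is in place, the rest is a mechanical substitution of \autoref{prop:event_coincide} and \autoref{prop:direct key ind}.
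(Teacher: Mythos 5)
Your proof is correct and takes essentially the same route as the paper, whose own argument is simply \enquote{by the definitions, \autoref{prop:direct key ind} and \autoref{prop:event_coincide}}; you have expanded exactly that reduction, including the (worthwhile) explicit check that $\evkeqt$-related transitions are connected so that the connectedness clause of \autoref{def:key ind} is discharged in the converse direction.
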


Note that since \ccsk satisfies IRE (\autoref{thm:axioms-hold}),
$t \coind u$ implies $t \ind u$.  The converse does not hold by \autoref{ex:ind-causal}, adapted from \pccsk to \ccsk.
Thus key independence is strictly finer than independence via proof labels.

We conclude this section by showing that the causal ordering on past events (using \autoref{def:event relations}) can be equivalently computed by a syntactic ordering on keys.

\begin{definition}[Partial order on keys~\protect{\cite[Def. 3.1]{LaneseP21}}]
	\label{def:partial-order-on-keys}
	Given a process \(X\), its \emph{partial order on \(\keys{X}\)}, written \(\leq_{X}\), is the reflexive and transitive closure of \(\ord{X}\):%
	\begin{align*}
		\ord{\nil} & = \emptyset  & \ord{X + Y} &= \ord{X} \cup \ord{Y} \\
		\ord{\alpha.X} & = \ord{X} & \ord{X \Par  Y} &= \ord{X} \cup \ord{Y} \\
		\ord{\alpha[n].X} & =  \ord{X} \cup \{n < k \setst k \in \keys{X}\} &   \ord{X \bs \lambda} & = \ord{X} 
	\end{align*}
\end{definition}

\begin{restatable}[Events in a process]{definition}{eventsprocess}\label{def:events process}
	We let \(\events\) be the set of all forward events, and given %
	a reachable process \(X\), we let 
	\(\ev{X} = \{e \in \events \setst \exists \text{ rooted path } r \text{ to } X, \cte(r,e) > 0\}\).
\end{restatable}

\begin{restatable}[Orderings coincide]{theorem}{thmorderings}
	\label{thm:ordering-event-key}
	For any process \(X\), if $e_1,e_2 \in \ev{X}$ we have: $e_1 \leq e_2$ iff $\key {e_1} \leq_X \key {e_2}$.
\end{restatable}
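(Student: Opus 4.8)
The plan is to prove both implications by first pinning down the correspondence between the past events of $X$ and the keys occurring in $X$, then reducing each of the two orders to its \enquote{generating steps} and matching them. First I would record that, since event equivalence coincides with event key equivalence (\autoref{prop:event_coincide}) and the latter only relates transitions carrying the \emph{same} key, every event $e$ has a well-defined key $\key e$, and that $\key{\cdot}$ restricts to a bijection between $\ev X$ and $\keys X$: surjectivity holds because each $k\in\keys X$ is installed by some forward transition on a rooted path to $X$ (\autoref{def:events process}), and injectivity follows from \autoref{prop:event_coincide} applied to a path inside $X$ avoiding the shared key. I would also note that for a rooted path $r$ with target $W$ one has $\cte(r,e)\in\{0,1\}$, with $\cte(r,e)=1$ iff $e\in\ev W$ iff $\key e\in\keys W$. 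This lets me restate $e_1\le e_2$ purely as the presence condition: \emph{along every rooted path, presence of $e_2$ in the target forces presence of $e_1$}.

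Next I would reduce both sides to generators. Because rooted paths to $X$ are finite, the causal order on $\ev X$ is the reflexive–transitive closure of immediate predecessor $\ip$ (\autoref{def:immed pred}); on the syntactic side, $\leq_X$ is by definition the reflexive–transitive closure of $\ord X$ (\autoref{def:partial-order-on-keys}). Since these two relations have the same closure precisely when each generating pair of one lies below the other, it suffices to prove two generator-level inclusions: (i) every pair $k_1<k_2$ contributed to $\ord X$ by a subterm $\alpha[k_1].Z$ with $k_2\in\keys Z$ satisfies $e_1\le e_2$; and (ii) every immediate predecessor $e_1\ip e_2$ satisfies $k_1\leq_X k_2$.

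For (i) I would argue by a preserved invariant rather than unwinding closures. The backward rule \tRev{act} can reverse the action with key $k$ only when the subterm under $\alpha[k]$ is keyless, i.e.\ only when $k$ is $\leq_W$-maximal; a routine induction on reachable $W$ shows a key is removable by a backward transition exactly when it is maximal, and that forward transitions neither remove keys nor alter the domination relation among existing keys. Hence the predicate \enquote{$k_2\in\keys W \Rightarrow k_1\in\keys W$} holds vacuously at the keyless origin $\orig X$ and is preserved by every forward and backward transition whenever $\alpha[k_1]$ encloses $k_2$. Propagating along any rooted path and using the presence reformulation of $\le$ from the first step yields $e_1\le e_2$; this same preservation of domination under forward steps is what also lets me transport a local domination at any $W$ on a rooted path realizing $X$ back to $X$ itself.

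For (ii) I would use \autoref{lem:immed pred composable} to rewrite $e_1\ip e_2$ as \enquote{$e_1$ composable with $e_2$ and not $e_1\coind e_2$}, and then \autoref{lem:ind sdep coind}, parts (1) and (3), to turn this into \enquote{$e_1$ composable with $e_2$ and $e_1\sdep e_2$} (composability gives connectedness, and for composable connected events $\ind$ coincides with $\coind$). Picking composable $t_1\in e_1$, $t_2\in e_2$ with $\tgtof{t_1}=\srcof{t_2}=W$, I would read the dependence of their proof labels through the rules of the dependence relation—chiefly P\(^2_k\) (equal keys on opposite sides of a $\mid$) and S\(^3\)—to conclude that in $W$ the keyed prefix $\alpha[k_1]$ must enclose $k_2$, whence $k_1\leq_W k_2$ and, by the domination transport above, $k_1\leq_X k_2$. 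I expect the main obstacle to lie exactly here: translating \enquote{composable, dependent proof labels} into \enquote{enclosing keyed prefix} requires a careful case analysis, with the synchronisation case—where an event is carried by a pair $\cpair{\theta_\L}{\theta_\R}$ contributing key constraints from both operands via S\(^3\)—and the \textsf{res} case checked separately against the corresponding clauses of \autoref{def:partial-order-on-keys}; the other delicate point, already neutralised by the first paragraph, is that the global quantification over \emph{all} rooted paths in the definition of $\le$ is tamed only because $\key{\cdot}$ and \autoref{prop:event_coincide} guarantee that executing $e_2$ on any branch genuinely pins down the key $k_2$.
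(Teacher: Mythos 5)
Your overall skeleton matches the paper's: the bijection between \(\ev{X}\) and \(\keys{X}\) is the paper's lemma on event keys, the reduction of both orders to their generators (\(\ip\) on one side, the pairs contributed by \(\ord{X}\) on the other) is exactly how the paper splits the theorem into two lemmas, and your step (ii) --- turning \(e_1 \ip e_2\) into composability plus \(\sdep\) via \autoref{lem:immed pred composable} and \autoref{lem:ind sdep coind}, then reading the nesting of keyed prefixes off the dependence derivation --- is the paper's proof of the events-to-keys direction almost verbatim. (The paper organises that case analysis as a structural induction on \(X\) along a forward-only path \(rt_1t_2\), which spares it your separate \enquote{domination transport} step; that step is true and routine, so this is only a difference of bookkeeping.)

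The gap is in step (i), the keys-to-events direction. The invariant you propagate, \(k_2 \in \keys{W} \Rightarrow k_1 \in \keys{W}\), is a statement about raw keys and is \emph{not} preserved along arbitrary rooted paths: starting from \(\orig{X}\) one may key the enclosing prefix \(\alpha\) with some fresh \(k_3 \neq k_1\) and then fire the inner action with key \(k_2\), reaching a \(W\) with \(k_2 \in \keys{W}\) but \(k_1 \notin \keys{W}\). The theorem survives because on such a path the \(k_2\)-transition does not belong to \(e_2\) --- but that is precisely what your invariant cannot see, since it never mentions events. What has to be shown is that any transition of \(e_2\) occurring on \emph{any} rooted path sits under the prefix \(\alpha\) carrying specifically the key \(k_1\), i.e.\ that the event \(e_1\) has already occurred; this follows from \autoref{prop:event_coincide} because the connecting path witnessing \(\evkeqt\) avoids \(k_2\) and therefore cannot undo and re-key \(\alpha\), but it is a claim about event identity, not key presence. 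Your closing remark identifies the wrong delicate point: what needs pinning down is not \(k_2\) but the key \(k_1\) on the enclosing prefix at the moment \(k_2\) is introduced on an arbitrary branch. The paper's \autoref{lem:ord} does exactly this work by structural induction on \(X\): it takes the witnessing transition of \(e_2\) in an arbitrary rooted forward-only path, relates it by \(\evkeqt\) to a transition on a rooted path to \(X\), strips the prefix or projects onto a parallel component, and derives a contradiction with the induction hypothesis. You need to replace the key-presence invariant with such an event-level argument; as written, the induction step that first introduces \(k_2\) has nothing anchoring the key of the enclosing prefix to \(k_1\).
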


	\section{KP and DP Bisimulations}%
\label{sec:bisimulations}
Our aim is to define bisimulations for \pccsk which have the same distinguishing power as History Preserving (HP) 
bisimulation~\cite{Rabinovich1988}\footnote{%
	Proving that they actually coincide would require us to develop a formal correspondence between behaviour 
	structures~\cite{Rabinovich1988} and \pccsk, but bisimulations defined on \eg automata~\cite{FAHRENBERG2013165}, process graphs~\cite{vG96} or Petri nets~\cite{Vogler1991} are routinely named HP %
using informal arguments.
}. Since we define them for standard processes, our bisimulations can be seen as for \ccs or \pccs.
In van Glabbeek and Goltz's definition of HP bisimulation on configuration structures in~\cite{vGG01}, events of bisimilar structures have 
matching labels and there is a bijection between causal orders on events in the matching structures.  
Here we show how to define syntactically causal orders and bijections between them using our new results presented so far.

\Comment{Two natural questions that emerge at this point are whether causal order--as captured through keys, thanks to \autoref{thm:ordering-event-key}--and dependence (or independence) are powerful enough to define interesting bisimulations, namely, History Preserving ones.
The first section motivates the reasons why dependence is \enquote{enough} \emph{provided the focus is on adjacent transitions}.
}

\subsection{Pinpointing the Relevant Relations}
\label{ssec:dep-is-enough}
We have seen in \autoref{thm:ordering-event-key} that causal order between events of any reachable \pccsk process can be equally represented by the order among their keys. This gives rise to our first new bisimulation relation called Key-Preserving (KP) bisimulation (\autoref{def:HPbisim}). 

A question arises if we can use the dependence relation $\sdep$ or the independence relation $\ind$ to capture the causal order between events.  We have seen in \autoref{ex:ind-causal} that some causally dependent connected transitions, which are not composable, have independent labels, so simply using 
$\sdep$ would not be sufficient to represent $<$. The example below additionally shows that $\tsdep$, the transitive closure of $\sdep$, captures 
more than just the causal order. %
Hence, $\tsdep$ cannot be used instead of
$\leq$ on connected forward transitions in a potential characterisation of HP bisimulation.  

\begin{figure}
\centering
	\begin{tikzpicture}[
	x={(1, 0)},  %
	y={(0, 1)}, %
	baseline,
	anchor=base
	]
	\node (z) at (-1, 0){};	
	\node (a) at (0, 0){};
	\node (b) at (1, -1){};
	\node (c) at (1, 1){};
	\node (d) at (2, 0){};
	\node (e) at (3.41, 0){};
	\node (f) at (2.41, -1){};
	
	\draw[->] (z) -- node[above, pos=.3]{\(t\)} (a);
	\draw[->] (a) -- node[below, pos=.3]{\(t_2\)} (b);
	\draw[->] (a) -- node[above, pos=.3]{\(t_1\)} (c);
	\draw[->] (b) -- node[above, pos=.3]{\(t_1'\)} (d);
	\draw[->] (c) -- node[above, pos=.6]{\(t_2'\)} (d);
	\draw[->] (d) -- node[above, pos=.5]{\(t_3'\)} (e);
	\draw[->] (b) -- node[below, pos=.5]{\(t_3\)} (f);
	\draw[->] (f) -- node[above, pos=.3]{\(t_1''\)} (e);	
\end{tikzpicture}
\caption{LTSI for \(a. (b \Par c.d)\) in \autoref{conc-tran-clo}.}
\label{fig:LTSIs}
\end{figure}

\begin{restatable}{example}{conctranclo}
	\label{conc-tran-clo}
	In \autoref{fig:LTSIs}, letting \(t \sdep t_1\), \(t \sdep t_2\), \(t_2 \sdep t_3\) and \(t_2' \sdep t_3'\) and all other pairs of transitions have independent labels %
	gives a pre-reversible (and LLG) LTSI.
	Clearly $t_1, t_2$ are adjacent, $t_1\coind t_2$ and $t_1\tsdep  t_2$ since they depend on $t$ and \(\sdep\) is symmetric (\autoref{rem:ind-irreflexivity}).
	Correspondingly, $t_1\coind t_3$ and $t_1\tsdep  t_3$ but they are not adjacent.
\end{restatable}

Recall that we have defined the immediate predecessor order $\ip$ on events (\autoref{def:immed pred}), which when transitively closed gives us
back the original causal order $<$. This means that $\ip$ is a concise representation of our causal order. Noting 
that $e_1 < e_2$ iff $e_1 \ip e_2$ for composable forward events, Lemmas~\ref{lem:immed pred composable} and \ref{lem:ind sdep coind} show that causal order between composed events is precisely represented by dependence of their proof labels:

\begin{restatable}{lemma}{conntrandepiscausality}
	\label{lem:im-fcau-iff-dep}
	Let $e_1, e_2$ be composable forward events, then $e_1 < e_2$ iff $e_1 \sdep e_2$.
	\Comment{
	Then  \begin{enumerate}
		\item $t_1 < t_2$ iff $t_1 \sdep t_2$; \iain{this will follow from \autoref{lem:immed pred composable} and \autoref{lem:ind sdep coind}, noting that $t_1 < t_2$ iff $t_1 \ip t_2$ for composable forward transitions} \irek{Noted below.}
		
		\item $[t_1] < [t_2]$ iff $[t_1] \sdep [t_2]$.
	\end{enumerate}
	} 
\end{restatable}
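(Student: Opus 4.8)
The plan is to prove the biconditional by routing it through core independence $\coind$, establishing for composable forward events the chain $e_1 < e_2 \iff \text{not } e_1 \coind e_2 \iff e_1 \sdep e_2$. Both the causal ordering and the dependence of proof labels will be pinned down by their relationship to $\coind$, and the two halves of the chain are supplied almost verbatim by \autoref{lem:immed pred composable} and \autoref{lem:ind sdep coind} respectively, with polychotomy (\autoref{prop:poly}) providing the missing link. Throughout I keep the hypothesis that $e_1,e_2$ are composable (which also makes them connected, as witnessed by any composable pair $t_1\in e_1$, $t_2\in e_2$ giving a path $t_1t_2$), so both lemmas apply.

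First I would establish $e_1 < e_2 \iff \text{not } e_1 \coind e_2$. Since $e_1,e_2$ are composable, \autoref{lem:immed pred composable} reads simply as $e_1 \ip e_2 \iff \text{not } e_1 \coind e_2$. It therefore suffices to show $e_1 < e_2 \iff e_1 \ip e_2$ for composable forward events. The direction $e_1 \ip e_2 \Rightarrow e_1 < e_2$ is immediate from \autoref{def:immed pred}. For the converse, $e_1 < e_2$ excludes $e_1 \coind e_2$ by the mutual exclusivity in polychotomy (\autoref{prop:poly}); combined with composability and \autoref{lem:immed pred composable} this gives $e_1 \ip e_2$. Hence $e_1 < e_2 \iff e_1 \ip e_2 \iff \text{not } e_1 \coind e_2$.

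Next I would establish $\text{not } e_1 \coind e_2 \iff e_1 \sdep e_2$. Because $e_1,e_2$ are composable (hence connected), \autoref{lem:ind sdep coind} applies: part (2) gives $e_1 \coind e_2 \Rightarrow e_1 \ind e_2$, and part (3) gives (for composable events) $e_1 \ind e_2 \Rightarrow e_1 \coind e_2$, so $e_1 \coind e_2 \iff e_1 \ind e_2$; part (1) gives that exactly one of $e_1 \ind e_2$ and $e_1 \sdep e_2$ holds. Chaining these yields $\text{not } e_1 \coind e_2 \iff \text{not } e_1 \ind e_2 \iff e_1 \sdep e_2$. Concatenating with the first equivalence closes the proof: $e_1 < e_2 \iff e_1 \sdep e_2$.

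I do not expect a genuine obstacle here, since each equivalence is delivered by an already-proved lemma; the only point demanding care is the interplay between the \emph{directionality} of composability and the fact that $\sdep$ (unlike $<$) is symmetric on labels. This is precisely why the composability hypothesis cannot be dropped: it fixes the orientation so that the single causal possibility between the pair is $e_1 < e_2$ rather than $e_2 < e_1$. Since every cited lemma takes composability (a directional condition) as an explicit hypothesis, this bookkeeping is handled automatically, and the main work is just verifying that composability is invoked at each step where it is needed.
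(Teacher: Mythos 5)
Your proposal is correct and follows essentially the same route as the paper, which derives the lemma from \autoref{lem:immed pred composable} and \autoref{lem:ind sdep coind} after noting that $e_1 < e_2$ iff $e_1 \ip e_2$ for composable forward events. Your use of polychotomy (\autoref{prop:poly}) to justify that observation, rather than ruling out intermediate events directly, is exactly the intended filling-in of that step.
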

This allows us to define an alternative version of HP bisimulation, called Dependence-Preserving (DP) bisimulation,
where causal order is expressed concisely by dependence $\sdep$ between events and their immediate predecessors.

\Comment{
\iain{Looking at Lemma 6.1 - do we need the case for transitions?  It seems not, apart from the proof of Lemma E.4 which I think we will be dropping. \clem{Yes, E.4 can go.}  It follows more cleanly from 4.11 and 5.2 if we just state it for forward events.}
\clem{Is there any other technical development that needs to be there? We can include some discussion, but it seems that this is all we need.}
\clem{Moving FR's definition here would allow to better motivate our choice / problem I think.}
}

\subsection{Bisimulations}
\label{ssec:bisim}

\Comment{
Only forward transitions are used and syntactic conditions are placed on the matched processes.  
We follow the style of definition as in~\cite{vG96,vGG01}, where a causal order-preserving bijection between the matched systems is used instead of behaviour structures~\cite{Rabinovich1988}.
Our bijections are defined syntactically: 
bijection of KP bisimulation (\autoref{def:HPbisim}) preserves the ordering on keys, and the bijection of DP bisimulation (\autoref{def:Dsim}) is defined using dependence (and implicitly causal order%
) relation on maximal 
events (\autoref{def:maximal-event}).
}

We first define label- and order-preserving bijections:

\begin{definition}[Label- and order-preserving bijection]\label{def:ord-pres}
Given \pccsk processes \(X\), \(Y\), a bijection \(f: \ev{X} \to \ev{Y}\) is \emph{label preserving} if \(\forall e \in \ev{X}\), \(\labl{e} = \labl{f(e)}\), with \(\labl{e}\) defined as \(\labl{t}\) for \(t \in e\).
It is \emph{order preserving} %
if \(\key{e} \leq_{X} \key{e'} \iff \key{f(e)} \leq_{Y} \key{f(e')}\).
\end{definition}
We observe that any order-preserving bijection also preserves causal order by \autoref{thm:ordering-event-key}.

\begin{definition}[KP bisimulation (inspired by \protect{\cite[Sect. 3]{PhillipsU12}})]
	\label{def:HPbisim}
	Let \(X\) and \(Y\) be standard  \pccsk processes.
	A relation \(\Rel{r}[KP] \subseteq \kprocset \times \kprocset \times (\events \to \events)\) is a \emph{Key-Preserving (KP) bisimulation between \(X\) and \(Y\)} if %
	 \((X, Y, \emptyset) \in \Rel{r}[KP]\), %

	and if whenever \((X', Y', f') \in \Rel{r}[KP]\), then \(f'\) is a label- and order-preserving bijection from \(\ev{X'}\) to \(\ev{Y'}\) and
	\begin{align}
	\forall t:  X' \pr{f}[\theta] X'' \Rightarrow \; & 
	\exists t': Y' \pr{f}[\theta']  Y''
	\text{ and } (X'', Y'', f'\cup\{[t] \mapsto [t']\}) \in \Rel{r}[KP];  \label{cond1-kp}\\
	\forall t':  Y' \pr{f}[\theta]  Y'' \Rightarrow \; &  
	\exists t: X' \pr{f}[\theta']  X'' \text{ and } (X'', Y'', f'\cup\{[t] \mapsto [t']\})\in \Rel{r}[KP]. \label{cond2-kp}
\end{align}
	Given any \pccsk processes $X, Y$, if there is a KP bisimulation \(\Rel{r}[KP]\) 
between \(\orig{X}\) and \(\orig{Y}\), 
	such that \((X, Y, f)\in \Rel{r}[KP]\) for some $f$, 
	then we say that $X$ and $Y$ are KP bisimilar, written $X \Rel{b}[KP] Y$.
\end{definition}
\begin{remark}
If we only consider KP bisimilarity $\Rel{b}[KP]$ on standard processes (as in \autoref{thm:bisim}), then we can take 
the mappings $f$ to be the identity on keys, as in FR bisimilarity.
\end{remark}

\Comment{
\iain{Note that a KP bisimulation between $X$ and $Y$ does not have to include a triple $(X,Y,f)$.
Not sure what is intended but a definition of KP bisimulation between $X$ and $Y$ could require minimal and that $(X,Y,f)$ belongs to it for some $f$.}\irek{The definition has been changed following Iain's suggestions. I have also added ``reachable'' because otherwise 
$\orig{X}$ may not be defined. I am not sure we require the ``minimal'' requirement.
}
\clem{The counter-example in the previous definition, as Iain pointed out, that you could have \(a[m].b[n].c, b[m].a[n].c, \{a[m] \mapsto b[m], b[n] \mapsto a[n]\}\) \enquote{sneak in} a KP bisim, and conclude that \(a[m].b[n].c\) and \(b[m].a[n].c\) are KP, while they should not. But requiring the function to be label-preserving seem to fix that, so we may be able to do without this requirement indeed.}
\irek{
 Could we please discuss and agree this definition before moving to DP?}\clem{Yes of course, less editing.}
 \iain{Perhaps it is best to omit `minimal' in the definition of KP bisimulation.}
}

The bijection above %
is constructed in a step by step fashion starting from a pair of standard processes and the empty mapping, ensuring that it preserves the labels and the order on keys between the matched events. Such a construction provides \emph{KP-grounded triples} (\autoref{def:grounded-triples}). %
Note that only action labels and keys are used to define KP bisimulation; the proof part of labels will be used to formulate the next bisimulation.

\begin{definition}[KP-grounded triples]
	\label{def:grounded-triples}
	Let $\Rel{r}[KP]$ be a KP bisimulation between standard $X$ and $Y$. A triple \((X', Y', f) \in \Rel{r}[KP]\), is 
	\emph{KP-grounded (for $\Rel{r}[KP]$)} if either 
	\(X'=X\), \(Y'=Y\) and \(f = \emptyset\), or if there exists a KP-grounded triple \((X'', Y'', f')\) such that  
	\((X', Y', f)\) was obtained from it using either (\ref{cond1-kp}) or (\ref{cond2-kp}) from \autoref{def:HPbisim}.
\end{definition}

Note that given a KP bisimulation between standard $X, Y$, any triple $(X', Y', f)$ derived from $(X,Y,\emptyset)$ using \autoref{def:HPbisim} is KP-grounded for that KP bisimulation, something that will prove useful when studying the corresponding
notion of DP-grounded triples below.

\begin{example}
Consider $a$ and $a+a$.
Although their transitions have different proof keyed labels, \(\Rel{r}[KP]\) only matches their labels  
(in both cases $a$). As there are no causal dependencies in the executed processes, %
there is an empty order on keys. %
There are multiple minimal KP bisimulations for $a$ and $a+a$, where label-preserving $f$s are omitted:
for each $m,n,h,k \in \keyset$ we have
$\{(a,a+a),(a[m],a[n]+a),(a[h],a+a[k])\}$. 
We also note that union of these KP bisimulations (with $f$s omitted), namely $\{(a,a+a)\} \union \{(a[m],a[n]+a):m,n \in \keyset\} \union \{(a[h],a+a[k]):h,k \in\keyset\}$ is also a KP bisimulation for $a$ and $a+a$. Overall, $a\Rel{b}[KP] a+a$. 

\end{example}

\begin{example}
	\label{ex:SB}
Consider $P=a\mid a$ and $Q=a\Par a + a.a$. Although transitions of \(P\), \(Q\) have matching labels and are step bisimilar~\cite[Def. 7.3]{vGG01}, they have different causal behaviour. When $Q$ executes to $a\Par a + a[l'].a[k']$, for some
keys $l', k'$, it has two causally ordered $a$ events with the ordering $l'<k'$. Executing $P$ to $a[l']\mid a[k']$ we get 
label-preserving bijections between the events of thus executed $P$ and $Q$: $\{(a[l], a[l']), (a[k], a[k'])\}$ for any
$l, l', k, k'$. However, none of such bijections is order preserving because $a[l]$ and $a[k]$ are 
core independent whereas $a[l']$ and $a[k']$ are causally ordered (or have ordered keys).
Hence, $P\not  \Rel{b}[KP] Q$. 
\end{example}

\begin{example}
	\label{ex:not-kp}
	Processes $X = a[m].b[n].c$ and $Y = b[m].a[n].c$ are not KP bisimilar because, although there is an order-preserving mapping $f$ with $f(a[m]) = b[m]$, $f(b[n]) = a[n]$, the mapping is not label preserving. Also, $X$ and $Y'= b[n].a[m].c$ are not KP bisimilar since although there is a label-preserving $g(a[m]) = a[n]$, $g(b[n]) = b[m]$, it is not order preserving.  
\Comment{	
	the events with the same keys together is label- and order-preserving \irek{I don't think it is label preserving. Will do this later.}, any KP relation containing \(X\) and \(Y\) will have to construct the mapping starting from \(\orig{X}\) and \(\orig{Y}\), and will not be able to construct a label-preserving mapping between the events corresponding to \(a[m]\) in \(X\) and to \(b[m]\) in \(Y\).
	}
\end{example}
\begin{definition}
	\label{def:maximal-event}
	Given a process \(X\), an event \(e \in \ev{X}\) is \emph{maximal},  if \(\forall e' \in \ev{X}, \neg (e < e')\).
	We write \(\Max{\ev{X}}\) for the set of maximal events in \(\ev{X}\).
\end{definition}

\begin{definition}[DP bisimulation (inspired by \protect{\cite[Sect. 4.3]{bisim-applied-pi}})]
	\label{def:Dsim}	
	Let \(X\) and \(Y\) be standard \pccsk processes.
	A relation $\Rel{r}[DP] \subseteq \kprocset \times \kprocset \times (\events \to \events)$ is a \emph{Dependence-Preserving (DP) bisimulation between \(X\) and \(Y\)} if 
	 \((X, Y, f) \in \Rel{r}[DP]\), %

	 and if whenever $(X', Y', f') \in \Rel{r}[DP]$, then \(f'\) is a label-preserving bijection between \(\ev{X'}\) and \(\ev{Y'}\) and
	\begin{align*}
		\forall t: X' \pr{f}[\theta] X'' \implies & \exists t': Y' \pr{f}[\theta'] Y'' \text{ and } \forall e \in \Max{\ev{X'}},  e \sdep [t] \iff f'(e) \sdep [t']\\
		& \text{ and } (X'', Y'', f' \cup \{[t] \mapsto [t']\})%
		\in  \Rel{r}[DP]\\
 		\forall t': Y' \pr{f}[\theta'] Y'' \implies & \exists t: X' \pr{f}[\theta] X'' \text{ and } \forall e \in \Max{\ev{Y'}}, e \sdep [t'] \iff f'^{-1}(e) \sdep [t] \\
	     & \text{ and } (X'', Y'', f' \cup \{[t] \mapsto [t']\})%
	     \in \Rel{r}[DP]
	\end{align*}
	
	Given any \pccsk processes $X$ and $Y$, if there exists a DP bisimulation $\Rel{r}[DP]$ between $\orig{X}$ and $\orig{Y}$ and $(X,Y, f)\in\Rel{r}[DP]$ for some $f$, 
	then we say that \(X\) and \(Y\) are DP bisimilar, written $X \Rel{b}[DP] Y$.
\end{definition}
\Comment{
\iain{Consider $a[m].b[n].c[k]+b.a.c$ and $a.b.c+b[n].a[m].c[k]$.
These are not KP bisimilar, though the origin processes are.
However we can use the identity mapping on events to deduce they are DP bisimilar.
Requiring that maximal events map to maximal events does not help here.}

\irek{Very good example. Given $X,Y$ we should start constructing a bisimulation from $(\orig{X}, \orig{Y},\emptyset)$ using the conditions on lines 441 to 446. Without requiring that $(X,Y,f)$ belongs to $\Rel{r}[DP]$ to start with. Once the bisimulation is generated 
and $(X,Y,f)$ belongs to it, then $X \Rel{b}[DP] Y$. So, instead of 447-448, we should have
 
If there exists a DP bisimulation $\Rel{r}[DP]$ between $X$ and $Y$ such that $(X,Y, f)\in\Rel{r}[DP]$ for some label preserving $f$, then we say that \(X\) and \(Y\) are DP bisimilar, written $X \Rel{b}[DP] Y$.

With this change, in Iain's example a DP bisimulation for the two processes would not contain the tuple with the processes. To be clear: would this definition work better:

Let \(X\) and \(Y\) be \pccsk processes.
	A relation $\Rel{r}[DP] \subseteq \kprocset \times \kprocset \times (\events \to \events)$ is a \emph{dependence preserving (DP) bisimulation between \(X\) and \(Y\)} if 

	 $(\orig{X}, \orig{Y}, \emptyset) \in  \Rel{r}[DP]$, and if whenever $(X', Y', f') \in \Rel{r}[DP]$, then \(f'\) is a label preserving bijection between \(\ev{X'}\) and \(\ev{Y'}\) and
	\begin{align*}
		\forall t: X' \pr{f}[\theta] X'' \implies & \exists t': Y' \pr{f}[\theta'] Y'' \text{ and } \forall e \in \Max{\ev{X'}},  e \sdep [t] \iff f'(e) \sdep [t']\\
		& \text{ and } (X'', Y'', f' \cup \{[t] \mapsto [t']\})%
		\in  \Rel{r}[DP]\\
 		\forall t': Y' \pr{f}[\theta'] Y'' \implies & \exists t: X' \pr{f}[\theta] X'' \text{ and } \forall e \in \Max{\ev{Y'}}, e \sdep [t'] \iff f'^{-1}(e) \sdep [t] \\
	     & \text{ and } (X'', Y'', f' \cup \{[t] \mapsto [t']\})%
	     \in \Rel{r}[DP]
	\end{align*}
	
	If there exists a DP bisimulation $\Rel{r}[DP]$ between $X$ and $Y$ and $(X,Y, f)\in\Rel{r}[DP]$ for some label preserving $f$, 
	then we say that \(X\) and \(Y\) are DP bisimilar, written $X \Rel{b}[DP] Y$.

}
\iain{I don't think that change to the definition makes any difference.  Whatever DP bisimulation one takes between $a.b.c+b.a.c$ and $a.b.c+b.a.c$, one can always add $(a[m].b[n].c[k]+b.a.c, a.b.c+b[n].a[m].c[k], f)$ to get them bisimilar.}
}
\Comment{
\iain{Unlike for KP bisimulation, no requirement that $(X,Y,f)$ belongs to the relation (for some $f$).
Also, no requirement that $f$ maps maximal events to maximal events, unless that can be deduced.}
	
\iain{Consider $X = a[m].b[n].c$ and $Y = b[n].a[m].c$ with $f$ the identity mapping on events (not order-preserving).} \clem{The definition may not be clear, but the point is that DP and KP have to start from the origin process, so the mapping you are describing cannot be obtained since we first need to associate \(a[m]\) and \(b[n]\). We can't "twist" the function like you did.}
\iain{
Then we can extend to $a[m].b[n].c[k]$ and $b[n].a[m].c[k]$ since $b[n] \sdep c[k]$ and $a[m] \sdep c[k]$.
Does this mean that $X$ and $Y$ are DP bisimilar?
They seem not to be KP bisimilar, since $f$ does not preserve the ordering on keys.

Speculation: $\Max{\ev{X}}$ is relying on an ordering (axiomatic or keys?).  Could we reduce reliance on orderings by adding $M_X,M_Y$ as parameters for the maximal sets.  Then $f$ would be a bijection from $M_X$ to $M_Y$.  If the definition was inductive, rather than co-inductive, one might be able to build up an order-preserving bijection by forward computation just using $\sdep$.}
}

In contrast to \(\Rel{r}[KP]\), the function $f$ above %
is constructed using information about dependence between
proof keyed labels. %
When the matching transitions \(t\), \(t'\) are performed by \(X\), \(Y\), respectively, where
$(X, Y, f) \in \Rel{r}[DP]$, we require that, for all maximal events $e, f(e)$ in \(X'\), \(Y'\), respectively,
either both $e, [t]$ and
$f(e), [t']$ have dependent proof keyed labels or, by complementarity,  both have independent proof keyed labels.

We define \emph{DP-grounded triples} in the corresponding way to that in \autoref{def:grounded-triples} but using conditions of
\autoref{def:Dsim} instead. As a result a mapping in a DP-grounded triple maps a maximal event to a maximal event (as proven in \autoref{app:bisimulation} with \autoref{lem:max-events}).
Note that any triple obtained from standard processes is DP-grounded, and that this allows to discard degenerate cases such as the one illustrated with \autoref{ex:DP-not-KP} below.
  
\begin{example}
Consider $P$, $Q$ from \autoref{ex:SB} and transitions $Q\pr{f}[\lplusr a[k]] Q' \pr{f}[\lplusr a[l]] Q''$ for some
$Q',Q''$. The matching transitions from $P$ are $P\pr{f}[\lmidl  a[k']] P' \pr{f}[\lmidr a[l']] P''$, respectively, 
for some $P',P''$.
Although the transitions of $Q$ have dependent labels ($\lmidl  a[k] \sdep \lmidr a[l]$), the labels of the corresponding
transitions from $P$ are independent: $ \lmidl  a[k'] \ind \lmidr a[l']$.
Hence, $P\not  \Rel{b}[DP] Q$. 
\end{example}

\begin{example}
	Consider $X, Y'$ from \autoref{ex:not-kp} for which we have a label preserving mapping $g$, which can be extended to include
	$g(c[k])=c[k]$ when actions $c$ are performed. Moreover, transitions with labels $b[n]$ and  $c[k]$ of $X$ have dependent labels, and correspondingly for transitions with labels $a[m]$ and $c[k]$ of $Y'$. However, $X$ and $Y'$ are not DP bisimilar
	since we cannot construct a DP bisimulation relation from $\orig{X}$ and $\orig{Y'}$, which are standard,  because their initial actions do not match.
\end{example}

\begin{example}
	\label{ex:DP-not-KP}
Consider non-standard processes $R=a[m].b[n].c[k]+b.a.c$ and $S=a.b.c+b[n].a[m].c[k]$ which cannot compute forwards.
These are not KP bisimilar because, although there is a label-preserving bijection between their events, the 
bijection is not order preserving. However, the origin processes of $R, S$ are equal (to $a.b.c +b.a.c$), so are KP bisimilar.
As for DP bisimulation, using the label-preserving mapping on events, we deduce that $R, S$ are DP bisimilar. We note that such a mapping cannot be constructed starting from the origin processes of $R, S$ using the conditions of \autoref{def:Dsim}
(since $a[m]$ does not match $b[n]$), and hence is not part of any DP-grounded triple.
\end{example}

\autoref{ex:DP-not-KP} shows that KP and DP bisimilarity differ on non-standard processes, where bisimilaity triples are not required to be DP-grounded.
Nevertheless, they coincide on standard \pccsk processes, which produce only KP- and DP-grounded triples.

\begin{restatable}{thm}{thmbisimkpdp}\label{thm:bisim}
	Let \(P\), \(Q\) be any standard \pccsk processes. Then $P \Rel{b}[KP] Q \iff P \Rel{b}[DP] Q$.
\end{restatable}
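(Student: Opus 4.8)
The plan is to prove the two inclusions separately: on standard processes I would show that every KP bisimulation already satisfies the DP transfer conditions, and conversely that the DP-grounded triples of any DP bisimulation carry order-preserving bijections and hence constitute a KP bisimulation. The engine for both directions is \autoref{thm:ordering-event-key} (causal order coincides with the syntactic order on keys, so order-preservation in the sense of \autoref{def:ord-pres} is the same as preservation of $<$) together with \autoref{lem:im-fcau-iff-dep} (for composable forward events, $e_1 < e_2$ iff $e_1 \sdep e_2$). First I would record two structural facts used throughout. (i) A maximal event $e \in \Max{\ev{X'}}$ is composable with any forward transition $t$ out of $X'$: a maximal event is reversible at $X'$, so some representative $t_1 \in e$ is a forward transition with $\tgtof{t_1} = X' = \srcof{t}$, whence $e$ is composable with $[t]$; the same holds on the $Y$-side. (ii) An order-preserving bijection maps maximal events to maximal events.

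For KP $\Rightarrow$ DP, take a KP bisimulation and a matched step $t:X' \pr{f}[\theta] X''$, $t':Y' \pr{f}[\theta'] Y''$, so that $f = f' \cup \{[t]\mapsto[t']\}$ is label- and order-preserving on $\ev{X''}$. For maximal $e \in \Max{\ev{X'}}$, fact (i) makes both $e,[t]$ and $f(e),[t']$ composable, so using \autoref{lem:im-fcau-iff-dep} twice and order-preservation of $f$ we get $e \sdep [t] \iff e < [t] \iff f(e) < [t'] \iff f(e) \sdep [t']$, which is exactly the DP condition of \autoref{def:Dsim}. Since $f'$ is in particular a label-preserving bijection and the matched $t'$ is provided by the KP clause, a KP bisimulation is a DP bisimulation; as $(P,Q,\emptyset)$ sits in it (standard processes have no events, so $\emptyset$ is the required bijection), this yields $P \Rel{b}[KP] Q \Rightarrow P \Rel{b}[DP] Q$.

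For the converse I would fix a DP bisimulation $\Rel{r}[DP]$ between standard $P,Q$ and prove, by induction on the length of the grounding derivation, that every DP-grounded triple $(X', Y', f')$ (defined as in \autoref{def:grounded-triples} but with the clauses of \autoref{def:Dsim}) has $f'$ order-preserving; \autoref{thm:ordering-event-key} then turns this into the KP order condition, and label-preservation is built in, so the DP-grounded triples form a KP bisimulation containing $(P,Q,\emptyset)$. The base case is the empty map. In the inductive step we pass from order-preserving $(X_1,Y_1,f_1)$ through a matched pair $t,t'$ to $(X_2,Y_2,f_2)$ with $f_2 = f_1 \cup \{[t]\mapsto[t']\}$; since $[t],[t']$ are maximal in $\ev{X_2},\ev{Y_2}$, the only new comparabilities to check are $e < [t] \iff f_1(e) < [t']$ for $e \in \ev{X_1}$. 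The maximal case $e \in \Max{\ev{X_1}}$ is settled exactly as in the first direction using the DP condition at this step. The remaining, genuinely delicate case is a non-maximal $e$: because $<$ is the transitive closure of $\ip$ (\autoref{def:immed pred}) and, by \autoref{lem:immed pred composable} with complementarity (\autoref{lem:ind sdep coind}), a composable pair satisfies $d \ip [t]$ iff $d \sdep [t]$, we have $e < [t]$ iff $e \le d \ip [t]$ for some $d$ composable with $[t]$ and $d \sdep [t]$; but when such a witness $d$ is not maximal in $\ev{X_1}$ the DP check at the current step never inspects it.

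The main obstacle is precisely this non-maximal immediate-predecessor case: the DP clause only interrogates maximal events, so the full causal order has to be recovered from the totality of interleavings rather than from the single path that built $(X_2,Y_2,f_2)$. To handle it I would invoke the pre-reversible structure of \pccsk (\autoref{thm:axioms-hold}): using SP/BTI to permute the concurrent events lying above $d$ past the transition $t$, one obtains another path of the same DP bisimulation reaching a process $X_1^{*}$ at which $d$ is maximal and $[t]$ is still enabled; the DP condition at that reordered step forces $f(d) \sdep [t']$, hence $f(d) < [t']$, and the chain $e \le d$ with the inductive order-preservation of $f_1$ gives $f_1(e) \le f_1(d) < [t']$ (and symmetrically for the reverse implication). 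Making this rigorous requires two supporting points, which are the true crux: that the permutation of independent transitions exists inside the given DP bisimulation, and that the induced event bijection agrees on the reordered and original paths. The latter path-independence follows from the key-based identification of events (\autoref{prop:event_coincide}) together with the fact that reversing preserves independence (RPI), which lets the reordering be matched step-for-step on the $Y$-side. Once this consistency is secured, the two inclusions combine to give $P \Rel{b}[KP] Q \iff P \Rel{b}[DP] Q$ on standard processes. Should the path-independence argument prove unwieldy, a robust fallback is to construct a single global order-preserving bijection by forward computation from $(P,Q)$, prove its well-definedness directly against all interleavings via the diamond property, and then verify that it is simultaneously a KP and a DP witness.
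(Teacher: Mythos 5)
Your overall architecture is the same as the paper's: both directions are driven by \autoref{thm:ordering-event-key} (order on keys equals causal order) and \autoref{lem:im-fcau-iff-dep} (for composable forward events, $<$ coincides with $\sdep$), with \autoref{lem:max-events} supplying that DP-grounded bijections send maximal events to maximal events, and the KP $\Rightarrow$ DP direction is carried out essentially verbatim as in the paper (maximality forces composability of $e$ with $[t]$, then one translates $\sdep$ into $<$, into key order, across $f$, and back). That half of your proposal is complete and matches the paper's argument.

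The divergence is in the DP $\Rightarrow$ KP direction, and it is worth being precise about it. The paper's proof establishes order preservation of a DP-grounded $f$ only for pairs $(e,[t])$ with $e \in \Max{\ev{X}}$ composable with $[t]$, and then concludes that $f$ is order preserving tout court. You have correctly observed that this does not by itself cover all new comparabilities: an immediate predecessor of the freshly added event need not be maximal at the moment the transition fires. Concretely, for $a.(b \Par c)$ along the path $a;b;c$, when $c$ fires we have $[a] \ip [c]$ but $\Max{\ev{a[k].(b[l]\Par c)}} = \{[b]\}$, so the DP clause at that step never interrogates the pair $([a],[c])$, and $[a] < [c]$ cannot be recovered from the maximal checks plus transitivity (the only maximal event $[b]$ is not below $[c]$). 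Your proposed repair---permute the independent events above the offending predecessor using SP/BTI/RPI so that it becomes maximal, apply the DP clause on the reordered path, and transfer back via \autoref{prop:event_coincide}---is the right kind of argument, but as you yourself concede it rests on two unproved claims: that the reordered path is matched inside the \emph{given} DP bisimulation, and that the event bijections built along different interleavings agree (a DP bisimulation may a priori extend $f_1$ differently on different branches, so the event matched to $[c]$ through $a;c$ need not be the one matched through $a;b;c$). Until those are established, this case---which you rightly call the crux---remains open in your proposal; note that the paper's own proof does not carry out this step either, so you have isolated the genuinely delicate point rather than missed one, but you have not yet closed it.
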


\begin{example} Consider the Absorption Law (AL)~\cite{Bed91,BC87,vGG01} given below.
\[(a\Par  (b + c)) + (a\Par b) + ((a + c) \Par b) = (a\Par  (b + c)) + ((a + c) \Par b)\]
The right hand side process is a subprocess of the left hand side process, so we only check if the transitions 
of $a\Par b$ can be matched in the right hand side process. 
The transition $a$ needs to be matched by $a$ in $(a + c) \Par b$.
The $b$ transition must be matched by $b$ in $a\Par  (b + c)$. %
Hence, all the transitions match, hence AL holds for strong bisimulation. 
Moreover, we note that there are no composed transitions with dependent labels, so all intermediate maximal events are followed 
by (if any) independent events. Hence, AL is  valid for KP and DP bisimulations. 
\end{example}

Both KP and DP bisimulations are strictly coarser than Hereditary History Preserving (HHP) bisimulation~\cite{Bed91,JNW96,NC95}
and FR bisimulation~\cite{PU07}, defined below on \pccsk, which coincide~\cite{PU07a}.

\begin{definition}[FR bisimulation \protect{~\cite[Def. 5.1]{PU07}}]
	\label{def:FR-bisim}
	Let \(X\) and \(Y\) be \pccsk processes. A relation $\Rel{r}[FR] \subseteq \kprocset \times \kprocset$ is a \emph{Forward-Reverse (FR) bisimulation} between $X$ and $Y$ if $X \Rel{r}[FR] Y$, and whenever $X' \Rel{r}[FR] Y'$, then
	\begin{align*}
	\forall X'', X' \pr{f}[\theta] X'' & \Rightarrow \exists Y'', Y' \pr{f}[\theta'] Y'', \labl{\theta} = \labl{\theta'}, \kay{\theta} = \kay{\theta'},\text{ and } X'' \Rel{r}[FR] Y'' \\
	\forall Y'', Y' \pr{f}[\theta] Y'' & \Rightarrow  \exists X'', X' \pr{f}[\theta'] X'', \labl{\theta} = \labl{\theta'}, \kay{\theta} = \kay{\theta'},\text{ and } X'' \Rel{r}[FR] Y'' \\
	\forall X'', X' \pr{b}[\theta] X'' & \Rightarrow  \exists Y'', Y' \pr{b}[\theta'] Y'', \labl{\theta} = \labl{\theta'}, \kay{\theta} = \kay{\theta'},\text{ and } X'' \Rel{r}[FR] Y'' \\
	\forall Y'', Y' \pr{b}[\theta] Y'' & \Rightarrow \exists X'', X' \pr{b}[\theta'] X'', \labl{\theta} = \labl{\theta'}, \kay{\theta} = \kay{\theta'},\text{ and } X'' \Rel{r}[FR] Y''
\end{align*}
	If there exists an FR bisimulation between $X$ and $Y$, we say that \emph{$X$ and $Y$ are FR bisimilar}, written $X \Rel{b}[FR] Y$.
\end{definition}

AL does not hold for FR bisimulation and HHP bisimulation.
Overall, FR %
bisimulation is  strictly finer than KP and DP bisimulations.

\begin{restatable}{proposition}{propfrimplieskp}
	\label{prop:frimplieskp}
Let \(P, Q\) be any standard \pccsk processes. Then $P \Rel{b}[FR] Q \Rightarrow P \Rel{b}[KP] Q$.
\end{restatable}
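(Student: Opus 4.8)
The plan is to convert a given FR bisimulation into a KP bisimulation, exploiting the fact that FR matching (\autoref{def:FR-bisim}) preserves both the action label and the key of every transition exactly. This means the event bijection required by \autoref{def:HPbisim} can be taken to be the identity on keys, as already observed in the remark following that definition. Concretely, starting from an FR bisimulation $\Rel{r}[FR]$ with $P \Rel{r}[FR] Q$, I would set
\[
\Rel{r}[KP] = \{(X',Y',f') : X' \Rel{r}[FR] Y',\ f' \text{ the bijection sending the event of key } k \text{ in } X' \text{ to the event of key } k \text{ in } Y'\}.
\]
For $f'$ to be well defined I first record that, for a reachable process, the map $e \mapsto \key{e}$ is a bijection from $\ev{X'}$ onto $\keys{X'}$: every event has a single key by \autoref{prop:event_coincide}, distinct events of a reachable process carry distinct keys, and each key of $X'$ is contributed by a forward event. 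Reversing $X'$ and $Y'$ all the way to their (standard) origins and matching the backward steps by FR shows $\keys{X'} = \keys{Y'}$, so $f'$ is a genuine bijection $\ev{X'} \to \ev{Y'}$; reversing only down to the point where a given key $k$ is maximal, and matching that backward step, shows moreover that the event of key $k$ has the same label in $X'$ and in $Y'$, so $f'$ is label preserving.

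The bisimulation bookkeeping is then routine. The triple $(P,Q,\emptyset)$ lies in $\Rel{r}[KP]$ because $P,Q$ are standard, hence $\ev{P}=\ev{Q}=\emptyset$. For the forward clauses, any $t: X' \pr{f}[\theta] X''$ is matched by FR with some $t': Y' \pr{f}[\theta'] Y''$ satisfying $\labl{\theta}=\labl{\theta'}$, $\kay{\theta}=\kay{\theta'}=k$ and $X'' \Rel{r}[FR] Y''$; since $[t]$ and $[t']$ are the unique events of key $k$ in $X''$ and $Y''$, the extended map $f' \cup \{[t] \mapsto [t']\}$ is exactly the key-induced bijection for $(X'',Y'')$, so the new triple is again in $\Rel{r}[KP]$, and symmetrically for transitions of $Y'$.

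The crux is order preservation of $f'$, i.e.\ that $\key{e} \leq_{X'} \key{e'} \iff \key{e} \leq_{Y'} \key{e'}$; since $f'$ is the identity on keys this amounts to $\leq_{X'} = \leq_{Y'}$ on the common key set. Here I would route everything through the semantic causal order rather than through $\sdep$: FR preserves the action label and key of a transition but not the proof part of its label, so proof-label dependence is not FR-invariant, whereas causal order is a semantic invariant. Using \autoref{thm:ordering-event-key} it suffices to show that the causal orders on $\ev{X'}$ and $\ev{Y'}$ agree under $f'$, and I would obtain this from a reversibility characterisation: for $k \neq k'$, one has $k \leq_{X'} k'$ iff there is no backward path from $X'$ reversing $k$ but not $k'$. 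The forward implication uses causal safety (a reversed set of events is upward closed under causal order) and the converse uses causal liveness (every upward-closed set can be reversed), both of which hold for \pccsk through the axiomatic approach (\autoref{thm:axioms-hold}). Because FR matches backward transitions key-for-key, a set $D \subseteq \keys{X'}$ is reversible from $X'$ iff it is reversible from $Y'$; hence the two reachability conditions coincide and $\leq_{X'} = \leq_{Y'}$.

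The main obstacle I anticipate is precisely this last step: pinning down the reachable-configuration characterisation of $\leq_{X'}$ and its FR-invariance. The subtlety is that $f'$ must preserve a global causal order even though $X'$ and $Y'$ are in general syntactically distinct processes with different proof labels, so the argument cannot be local or proof-label based; it must instead compare the families of backward-reachable key-configurations of $X'$ and $Y'$, show these families coincide by FR backward matching, and show they determine the key order via causal safety and liveness together with \autoref{thm:ordering-event-key}. Everything else—well-definedness and bijectivity of $f'$, label preservation, and the forward-matching closure—is then straightforward.
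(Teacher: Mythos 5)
Your proposal is correct and shares its overall skeleton with the paper's proof: both take the event bijection to be the identity on keys, reduce order preservation of that bijection to preservation of the semantic causal order, and convert between causal order and the key order $\leq_X$ via \autoref{thm:ordering-event-key}. Where you diverge is in how the central fact --- that FR bisimulation preserves causal order on events --- is established. The paper simply imports this from Proposition 5.6 of the earlier CCSK work \cite{PU07a}, whose side condition (NRE) it discharges by noting that the LTSI of \pccsk is pre-reversible; the whole proof is then three sentences. You instead re-derive the fact from first principles via a backward-reachability characterisation of causation ($k \leq_{X'} k'$ iff no backward path from $X'$ undoes $k$ without undoing $k'$), justified by causal safety and causal liveness, and observe that FR's key-for-key matching of backward transitions makes the family of backward-reachable key sets an FR invariant. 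Your route is more self-contained and arguably more illuminating --- it makes explicit \emph{why} a purely forward-and-backward, key-matching bisimulation cannot help but preserve causation --- at the cost of being longer and of leaning on causal safety and liveness, which this paper only alludes to as consequences of the axiomatic approach rather than stating as numbered results (you would need to cite them explicitly from the axiomatic-approach paper \cite{LPU24}, where they follow from the axioms established in \autoref{thm:axioms-hold}). The remaining bookkeeping in your proposal (well-definedness and label preservation of the key-induced bijection, the forward-matching closure, $\keys{X'}=\keys{Y'}$ via reversal to the origins) is sound and is essentially what the paper's citation-based argument silently absorbs.
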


	\section{Conclusion}
\label{sec:conclusion}
Our work casts an interesting light not only on the complementarity of dependence and independence, but also on the benefits of the axiomatic approach.
In addition, since KP bisimulations are defined using only keys, they provide a lightweight characterisation of HP for \ccs processes that could trigger in turn efficient algorithms~\cite{Froschle2005}.
Last but not least, our bisimulations generate interesting question about the role of keys in reversibility.
The way they permanently store information allows us to retrieve the causal structure---an element that is lost when constructing a DP bisimulation.
In particular, the status of non-standard processes is of interest: while \(a[n]\) and \(a[m]\) are not FR bisimilar (since their keys are different when backtracking), they are KP bisimilar as a mapping preserving the order on keys can be constructed by the bisimulation game starting from their respective origins.
We aim to leverage this to construct an alternative definition of HHP bisimulation, benefiting from our key- or dependence-based techniques, and compare it to HHP bisimulation in~\cite{NC95,JurdzinskiN00}.

	\bibliography{bib/axrev.bib}
	\endgroup
	\clearpage %
	\begingroup
	\let\clearpage\relax
	\appendix 
	\section{Section~\ref{sec:proved-lts}: Bijection between \texorpdfstring{\pccsk}{PCCSK} and \texorpdfstring{\ccsk}{CCSK}}
\label{app:sub-systems}

This section defines \ccsk~\cite{PU07} %
and proves that \pccsk and \ccsk's transitions are in bijection (\autoref{def:ccsk-pccsk-bijection}).

\begin{definition}[LTS for \ccsk]
	\label{def:ccsk}
		The set for processes for \ccsk %
	is 
	$\kprocset$,%
	and the set of labels is $\kplabelset$ with the proved part removed, namely $\klabelset$.
	The \emph{forward transition relation for \ccsk}, \(\r{f}[\alpha][k]\), is given in \autoref{fig:ltsrulesccskfw}---with \(\keysop\) and \(\stdop\) as in \autoref{def:std}.
	The \emph{backward transition relation for \ccsk}, \(\r{b}[\alpha][k]\), is defined as the symmetric of \(\r{f}[\alpha][k]\)~\cite[Figure 2]{PU07, LaneseP21}.
	The \emph{combined transition relation for \ccsk}, written as \(\r{fb}[\alpha][k]\), is defined as the union of \(\r{f}[\alpha][k]\) and \(\r{b}[\alpha][k]\).
\end{definition}

\Comment{
	\begin{definition}[LTS for \ccsk with proof labels~\protect{\cite{Aub22,aubert2023c}}]
		The \emph{labelled transition system (LTS) for \ccsk with proof labels}, denoted by \pccsk, is 
		$(\kprocset, \kplabelset, \pr{fb}[\theta])$ where $\pr{fb}[\theta]$ is the union of transition relations generated by 
		the forward and backward rules given in \autoref{fig:provedltsrulesccskfw}.
		As usual, we let \(\pr{fb}^*\) be the reflexive transitive closure of \(\pr{fb}\), and similarly for other LTSes.
	\end{definition}
}

\begin{figure}
	\begin{tcolorbox}[title = {Action, Prefix and Restriction}]
		\vspace{.2em}
		\begin{prooftree}
			\hypo{}
			\infer[left label={\(\std{X}\)}]1[act]{\alpha. X \r{f}[\alpha][k]  \alpha[k].X}
		\end{prooftree}
		\hfill 
		\begin{prooftree}
			\hypo{X \r{f}[\beta][k] X'}
			\infer[left label={\(k \neq k'\)}]1[pre]{\alpha[k']. X \r{f}[\beta][k] \alpha[k'].X'}
		\end{prooftree}
		\hfill 
		\begin{prooftree}
			\hypo{ X   \r{f}[\alpha][k]X '}
			\infer[left label={\(\alpha \notin \{\lambda, \out{\lambda}\}\)}]1[res]{X  \bs \lambda  \r{f}[\alpha][k]X ' \bs \lambda}
		\end{prooftree}
	\end{tcolorbox}
	
	\begin{tcolorbox}[adjusted title=Parallel]
		\vspace{.2em}
		\begin{prooftree}
			\hypo{X \r{f}[\alpha][k]X'}
			\infer[left label={\(k \notin \keys{Y}\)}]
			1[\(\lmidl\)]{X \Par   Y \r{f}[\alpha][k] X' \Par  Y}
		\end{prooftree}
		\hfill 
		\begin{prooftree}
			\hypo{Y \r{f}[\alpha][k]Y'}
			\infer[left label={\(k \notin \keys{X}\)}]
			1[\(\lmidr\)]{X \Par   Y  \r{f}[\alpha][k]X \Par  Y'}
		\end{prooftree}
		\hfill 
		\begin{prooftree}
			\hypo{X \r{f}[\lambda][k]  X'}
			\hypo{Y \r{f}[\out{\lambda}][k]  Y'}
			\infer%
			2[syn]{X \Par  Y \r{f}[\tau][k]  X' \Par  Y'}
		\end{prooftree}
	\end{tcolorbox}
	\begin{tcolorbox}[adjusted title=Sum]
		\vspace{.2em}
		\makebox[.4\textwidth][c]{
			\begin{prooftree}
				\hypo{X \r{f}[\alpha][k] X'}
				\infer[left label={\(\std{Y}\)}]1[\( \lplusl \)]{X + Y \r{f}[\alpha][k] X' + Y}
			\end{prooftree}
		}
		\makebox[.4\textwidth][c]{
			\begin{prooftree}
				\hypo{Y \r{f}[\alpha][k] Y'}
				\infer[left label={\(\std{X}\)}]1[\( \lplusr \)]{X + Y \r{f}[\alpha][k] X + Y'}
			\end{prooftree}
		}
	\end{tcolorbox}
	\caption{Forward transition rules for \ccsk. Backward rules are the symmetric versions of the forwards rules, thus are omitted.}
	\label{fig:ltsrulesccskfw}
\end{figure}

\Comment{
The LTSes generated by this definition coincide with the definitions in the literature, \eg for \ccsk~\cite[Figs.~2 and 4]{PU06}, \pccs~\cite[p.~440]{BC88} and \ccs without relabeling, identifier nor conditional~\cite[pp.~69--71]{milner80lncs}. %
We extend the notions of transitions, paths, standardness, etc., to \pccs, \ccsk and \ccs when applicable.
Note that the sets of labels \(\klabelset\) (\resp \(\labelset\), \(\plabelset\)) and of processes \(\kprocset\) (\resp \(\procset\)) for \ccsk (\resp \ccs, \pccs) were already introduced.
}

\begin{restatable}{lemma}{lemderivationuniquenesspccsk}
	\label{lem:derivation-uniqueness}
	For any transition \(t\)%
	, there exists exactly one derivation whose conclusion is $t$  in \pccsk (\resp \ccsk).%
\end{restatable}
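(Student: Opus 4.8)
The plan is to prove uniqueness by structural induction on the source process $\srcof{t}$; existence is immediate, since $t$ is by hypothesis a transition and hence lies in the relation generated by the rules of \autoref{fig:provedltsrulesccskfw} (\resp \autoref{fig:ltsrulesccskfw}), so at least one derivation concludes $t$. The whole content is therefore to show that $t$ admits no two distinct derivations.

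The driving observation is that the last rule of any derivation of $t : X \pr{fb}[\theta] Y$ is forced by the shape of the conclusion, and that, once that rule is fixed, its premise transition(s) are read off uniquely from $X$, $\theta$ and $Y$. I would argue by cases on the top-level operator of $X$. If $X = \nil$ there is nothing to prove. If $X = \alpha.X_0$, then $X$ can only be the source of the axiom act: every backward rule and every other forward rule requires a keyed prefix, a restriction, a parallel, or a sum at top level, and an axiom has a unique derivation. If $X = X_0 \bs a$, the direction of $t$ selects res or $\tRev{res}$, and the single premise, with source $X_0$ and the same label $\theta$, is fixed by reading $X_0'$ off $Y = X_0' \bs a$. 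If $X = X_0 + X_1$, the form of $\theta$ separates $\lmidl$-style choices: $\lplusl$ (\resp $\rlplusl$) applies when $\theta = \lplusl\theta'$ and $\lplusr$ (\resp $\rlplusr$) when $\theta = \lplusr\theta'$. In each of these cases the premise has as source a strict subterm of $X$, so the induction hypothesis applies and pins down the sub-derivation.

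Two cases carry the real work. For a keyed-prefix source $X = \alpha[k].X_0$, a forward $t$ must come from pre, while a backward $t$ comes from either $\tRev{act}$ or $\tRev{pre}$; these last two are separated by comparing $\key{t}$ with the prefix key $k$, since $\tRev{act}$ forces $\kay{\lblof{t}} = k$ (with $\keys{X_0} = \emptyset$ and target $\alpha.X_0$) whereas $\tRev{pre}$ forces $\kay{\lblof{t}} \neq k$ (with the target again a keyed prefix). For a parallel source $X = X_0 \Par X_1$ in \pccsk the three applicable rules are mutually exclusive on the form of $\theta$: it begins with $\lmidl$ for the left rule, with $\lmidr$ for the right rule, and is a synchronisation pair $\cpair{\cdot}{\cdot}$ for syn; the decomposition of such a pair into its two components is syntactically unique, so both premises of syn, with sources $X_0$, $X_1$ and labels read off the pair, are determined, and the induction hypothesis applies to each. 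This parallel case is the main obstacle, chiefly for \ccsk, whose labels carry no proof prefix: there a $\tau$-transition out of $X_0 \Par X_1$ may come from syn or be an internal synchronisation propagated by $\lmidl$ or $\lmidr$, so the rule cannot be recovered from the label. For \ccsk I would instead decompose the target $A \Par B$ relative to $X_0 \Par X_1$, using that a transition never leaves its source unchanged (its key set strictly grows forwards and shrinks backwards) to conclude $\lmidl$ when $B = X_1$, $\lmidr$ when $A = X_0$, and syn when both components differ, the backward parallel rules being handled symmetrically; alternatively, the \ccsk statement transfers from the \pccsk one along the transition bijection $\base{\cdot} = (\prov{\cdot})^{-1}$ of \autoref{app:sub-systems}. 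Throughout, side conditions are irrelevant to uniqueness: they merely constrain applicability and hold here because $t$ is a genuine transition.
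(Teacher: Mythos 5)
Your proof is correct and follows essentially the same route as the paper's: the paper argues by induction on the proof label, observing that the label stores all the rule choices except for \textsc{pre} and \textsc{res}, which are read off the source, and that for \ccsk the derivation is recovered from the key added or removed between source and target---exactly the observations your case analysis on the source makes explicit. One caveat: your fallback of transferring the \ccsk case along the bijection $\base{\cdot} = (\prov{\cdot})^{-1}$ would be circular, since \autoref{def:ccsk-pccsk-bijection} constructs that bijection by appealing to this very lemma; your primary target-decomposition argument for \ccsk is the right one (and should also be invoked for the sum case, where the same $\lplusl$/$\lplusr$ ambiguity arises when both summands are standard).
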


\begin{proof}
	By induction on the label for \pccsk: %
	all the information is stored in the proof (keyed) labels, except for the application of the res\ and pre\ rules (or their reverse), but this information can be read off of the structure of the source of the transition. %
	For \ccsk, it suffices to notice that the source and target of the transition will differ only in the presence or absence of a key, from which the derivation can be uniquely obtained.
\end{proof}

\begin{remark}
	\label{rem:note-on-uniqueness}
	Note that \autoref{lem:derivation-uniqueness} would not hold as stated if our LTSes were using a structural congruence containing \eg \(P \Par Q \equiv Q \Par P\), as the processes \(P\) and \(Q\) could be swapped any even number of times in the derivations.
\end{remark}

\begin{definition}[Bijection between \ccsk and \pccsk]
	\label{def:ccsk-pccsk-bijection}
	We define the \emph{proof forgetful} (\(\base{\cdot}\)) and \emph{proof enrichment} (\(\prov{\cdot}\))  mappings between transitions
	\begin{align*}
		\base{\cdot}&: (X_1 \pr{bf}[\theta] X_2) \mapsto (X_1 \r{bf}[\ekay{\theta}] X_2) \tag{\pccsk to \ccsk} \\
		\prov{\cdot}&: (X_1 \r{bf}[\alpha][k] X_2) \mapsto (X_1 \pr{bf}[\theta] X_2) & \text{\st $\labl{\theta} = \alpha$, $\kay{\theta} = k$} \tag{\ccsk to \pccsk}
	\end{align*}
	as follows:
	
	\begin{description}
		\item[$\base{\cdot}$] is immediate: since the derivation of \(X_1 \pr{bf}[\theta] X_2\) in \pccsk is unique (\autoref{lem:derivation-uniqueness}), we can use in \ccsk the rule carrying the same name to obtain a derivation whose conclusion is the desired corresponding transition.
		\item[$\prov{\cdot}$] is also immediate, and is the inverse of \(\base{\cdot}\).
	\end{description}
\end{definition}

	\section{\autoref{sec:ind-complem}: Conservativity and Complementarity of the Independence and Dependence Relations}
\label{app:cons-compl}

\subsection{Conservativity Over Concurrency of CCS with Proof Labels}
\label{ssec:conservativity}

Our independence relation is inspired by the concurrency relation\footnote{%
	We prefer to avoid the term \enquote{concurrency}, in general reserved for transitions, but in this section both concurrency and independence describe the same notion.
} defined on \pccs in~\cite[Section 3]{BC94}.
This section reminds of \pccs and of its concurrency relation\footnote{The only differences with the aforementioned paper are that it records restrictions in the label, and admits a fixed point operator, but that does not impact our development here.}, defines a mapping between \pccsk and \pccs, states and proves the conservativity result over proof labels (\autoref{lem:conserv-labels}) and over transitions (Corollary~\ref{lem:conservativity} and \ref{lem:extension}).

\begin{definition}[Proved LTS for \ccs]
	\label{def:pccs}
	The \emph{proved LTS for \ccs} (\pccs) is $(\procset, \plabelset, \pr{c}[\theta])$ where $\pr{c}[\theta]$
	is given in \autoref{fig:ltsrules}. 
\end{definition}
Note that labels in \autoref{fig:ltsrules} are proof labels without keys, over which we quantify using \(\theta\) also--it will be clear from context if labels are keyed or not.
\begin{figure}
	\begin{tcolorbox}[title = Action and Restriction]
		\vspace{.2em}
		\makebox[.4\textwidth][c]{
			\begin{prooftree}
				\hypo{}
				\infer[]1[act]{\alpha. P \pr{c}[\alpha]  P}
		\end{prooftree}}
		\makebox[.4\textwidth][c]{
			\begin{prooftree}
				\hypo{P \pr{c}[\theta]P'}
				\infer[left label={\(\labl{\theta} \notin \{\lambda, \out{\lambda}\}\)}]1[res]{P \bs \lambda \pr{c}[\theta]P ' \bs \lambda}
			\end{prooftree}
		}
	\end{tcolorbox}
	
	\begin{tcolorbox}[adjusted title=Parallel Group]
		\makebox[.4\textwidth][c]{
			\begin{prooftree}
				\hypo{P \pr{c}[\theta]P'}
				\infer%
				1[\(\lmidl\)]{P \mid  Q \pr{c}[ \lmidl  \theta] P' \mid Q}
			\end{prooftree}
		}
		\makebox[.4\textwidth][c]{
			\begin{prooftree}
				\hypo{Q \pr{c}[\theta]Q'}
				\infer%
				1[\(\lmidr\)]{P \mid  Q  \pr{c}[ \lmidr \theta]P \mid Q'}
			\end{prooftree}
		}
		\\[1em]
		\makebox[.4\textwidth][c]{
			\begin{prooftree}
				\hypo{P \pr{c}[\upsilon_{1}\lambda]  P'}
				\hypo{Q \pr{c}[\upsilon_{2}\out{\lambda}]  Q'}
				\infer%
				2[syn]{P \mid Q \pr{c}[\cpair{\upsilon_{1}\lambda}{\upsilon_{2} \out{\lambda}}]  P' \mid Q'}
			\end{prooftree}
		}
	\end{tcolorbox}
	\begin{tcolorbox}[adjusted title=Sum Group]
		\makebox[.4\textwidth][c]{
			\begin{prooftree}
				\hypo{P \pr{c}[\theta] P'}
				\infer1[\( \lplusl \)]{P + Q \pr{c}[ \lplusl \theta] P'}
			\end{prooftree}
		}
		\makebox[.4\textwidth][c]{
			\begin{prooftree}
				\hypo{Q \pr{c}[\theta] Q'}
				\infer1[\( \lplusr \)]{P + Q  \pr{c}[ \lplusr \theta] Q'}
			\end{prooftree}
		}
	\end{tcolorbox}
	\caption{Transition rules with proof labels for \pccs.
	}
	\label{fig:ltsrules}
\end{figure}

\begin{definition}[Concurrency over proof labels{~\cite[p.~257]{BC94}.}]
		The \emph{concurrency relation \(\smile\) over proof labels} is the least symmetric relation that satisfies
		\begin{align}
			\lmidl \theta \smile \lmidr \theta' \span \tag{A1}\\
			\theta \smile \theta' & \Rightarrow \begin{dcases}
				\lmidl \theta \smile \cpair{\theta'}{\theta''}\\
				\lmidr \theta \smile \cpair{\theta''}{\theta'}
			\end{dcases} \tag{A2}\\
			\theta \smile \theta' & \Rightarrow \begin{dcases}
			\lmidd \theta \smile \lmidd \theta'\\
			\lplusd \theta \smile \lplusd \theta'\\
		\end{dcases} \tag{A3}\\
	\theta_{\L} \smile \theta_{\L}'  \text{ and } \theta_{\R} \smile \theta_{\R}' & \Rightarrow \cpair{\theta_{\L}}{\theta_{\R}} \smile  \cpair{\theta_{\L}'}{\theta_{\R}'} \tag{A4}
		\end{align}
\end{definition}

This relation is irreflexive and symmetric~\cite[p.~260]{BC94}, as is our independence relation \(\ind\) (\autoref{rem:ind-irreflexivity}): it suffices to note that S\(^1\) is the mirror version of S\(^2\). 
The relations actually coincide: 

\begin{lemma}[Conservativity over proof labels]
	\label{lem:conserv-labels}
	Let $\theta_1,\theta_2$ be proof labels and $m \neq n$ be keys.
	Then $\theta_1 \smile \theta_2$ iff $\theta_1[m] \ind \theta_2[n]$.
\end{lemma}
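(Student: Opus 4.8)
The plan is to prove both implications by rule induction, exploiting the fact that, under the keying map $\theta \mapsto \theta[m]$ (which attaches the key $m$ to every action occurrence of the \pccs proof label $\theta$, so that $(\lmidd\theta)[m] = \lmidd(\theta[m])$, $(\lplusd\theta)[m]=\lplusd(\theta[m])$ and $(\cpair{\theta'}{\theta''})[m]=\cpair{\theta'[m]}{\theta''[m]}$), the generating rules of $\smile$ are in essentially perfect correspondence with those of $\ind$ from \autoref{fig:pccsk-relations}. Concretely, A3 for parallel matches P\(^1\), A3 for choice matches C\(^1\), A2 matches S\(^1\) and S\(^2\), A4 matches S\(^3\), and the unique base rule A1 (namely \(\lmidl\theta\smile\lmidr\theta'\), unconditional) matches the unique base rule P\(^2_k\) (namely \(\lmidd\theta\ind\lmidod\theta'\), conditioned on \(\kay{\theta}\neq\kay{\theta'}\)). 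I would phrase both statements with the keys universally quantified, i.e.\ \enquote{for all proof labels \(\theta_1,\theta_2\) and all keys \(m\neq n\)}, so that the induction hypothesis is available for any distinct key pair; this makes the symmetry of both relations (\autoref{rem:ind-irreflexivity} for \(\ind\), symmetric closure for \(\smile\)) trivial to propagate, since swapping \(m\) and \(n\) keeps them distinct.

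For the forward direction I would induct on the derivation of $\theta_1 \smile \theta_2$. In the base case A1 we have $\theta_1 = \lmidl\theta$ and $\theta_2 = \lmidr\theta'$, so $\theta_1[m] = \lmidl(\theta[m])$ and $\theta_2[n] = \lmidr(\theta'[n])$; since $\kay{\theta[m]} = m \neq n = \kay{\theta'[n]}$, rule P\(^2_k\) (with $\D = \L$) yields $\theta_1[m] \ind \theta_2[n]$. The inductive cases are immediate applications of the matching $\ind$-rule to the induction hypotheses: A3 gives P\(^1\) and C\(^1\); A2 gives S\(^1\) and S\(^2\), after noting that the keyed synchronisation label splits as $\cpair{\theta'}{\theta''}[n] = \cpair{\theta'[n]}{\theta''[n]}$, so that the premise $\theta\smile\theta'$ feeds the premise of S\(^1\); and A4 gives S\(^3\). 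A symmetry step is discharged using the symmetry of $\ind$ together with the parametricity of the hypothesis in the keys.

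For the converse I would induct on the derivation of $\theta_1[m] \ind \theta_2[n]$, reading each rule backwards. Because every action of $\theta_1[m]$ carries key $m$ and every action of $\theta_2[n]$ carries key $n$, every sublabel encountered during the induction still has uniformly-keyed shape, so each rule application decomposes $\theta_1$ and $\theta_2$ structurally in exactly the way needed to apply the corresponding $\smile$-axiom (A1 from P\(^2_k\), A2 from S\(^1\)/S\(^2\), A3 from P\(^1\)/C\(^1\), A4 from S\(^3\)). The main---indeed the only---subtle point is this key bookkeeping: it is where the hypothesis $m \neq n$ is essential, since it is precisely what guarantees that the side-condition-free base rule A1 corresponds to P\(^2_k\), whose side condition $\kay{\theta}\neq\kay{\theta'}$ must hold at every leaf of the derivation. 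Everything else is a routine unwinding of the two rule systems, and the correspondence is tight enough that the two inductions are mirror images of one another.
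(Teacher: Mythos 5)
Your proof is correct and follows essentially the same route as the paper's: a rule-by-rule correspondence between the generating rules of \(\smile\) and those of \(\ind\) (A1 with P\(^2_k\) via \(m \neq n\), A2 with S\(^1\)/S\(^2\), A3 with P\(^1\)/C\(^1\), A4 with S\(^3\)), handling directionality through the symmetric closure of \(\smile\) and the symmetry of \(\ind\). Your backward direction merely spells out as an explicit induction what the paper dispatches as \enquote{immediate by re-using the previous mapping}, with the same key observation that every sublabel of \(\theta_i\)'s keyed version carries the single key \(m\) or \(n\) uniformly.
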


\begin{proof}
(\(\Rightarrow\)) The proof is by induction on the length of the derivation of \(\theta_1 \smile \theta_2\).
If it is of length \(1\), then it is A1, and since \(m \neq n\), we have $\theta_1[m] \ind \theta_2[n]$ by P\(^2_k\).
All the other cases amount to mapping A2 to S\(^1\), A3 to P\(^1\) or C\(^1\) depending on the operator considered, and A4 to S\(^3\).
No rule is mapped to S\(^2\), but its presence is needed to obtain closure by symmetry of \(\ind\), which is assumed for \(\smile\).

(\(\Leftarrow\)) Immediate by re-using the previous mapping, since A1 relaxes the condition on key in P\(^2_k\). If S\(^2\) is used, then the symmetric closure of \(\smile\) allows us to conclude.
\end{proof}

As with \(\ind\) we can define when transitions are concurrent (\autoref{def:conc-transitions}), and show that a very tight correspondence between independent \pccsk transitions and concurrent \pccs transitions.
However, this requires first to define mappings between their transitions, which in turn requires the following definitions and lemma:

\begin{definition}[Key removal from proof keyed labels]
	\label{def:key-removal}
	We define  \(\krop: \kplabelset \to \plabelset\) as:
	\begin{align*}
		\kr{\upsilon \alpha[k]} &= \upsilon \alpha & & & \kr{\upsilon \cpair{\upsilon_{1} \lambda[k]}{\upsilon_{2} \out{\lambda}[k]}} & = \upsilon \cpair{\upsilon_{1} \lambda}{\upsilon_{2} \out{\lambda}}
	\end{align*}
\end{definition}

\begin{definition}[Pruning function~%
	\protect{\cite[Def. 5.20]{PU07}}]
	\label{def:pruning-function}
	We define %
	$\prunop: \kprocset \to \procset$ %
	as: %
	\begin{align*}
		\prun{\nil} & = \nil  &&& \prun{X \Par  Y} &= \prun{X} \Par \prun{Y} \\
		\prun{\alpha.X} & = \alpha. \prun{X} &&& \prun{\alpha[k].X} & = \prun{X} \\
		\prun{X \bs \lambda} & = \prun{X}\bs \lambda \\
		\prun{X + Y} & = \span \span \span \span %
		\begin{dcases*}
			\prun{X} & If \(\std{Y}\) but \(\std{X}\) does not hold \\
			\prun{Y} & If \(\std{X}\) but \(\std{Y}\) does not hold\\
			\prun{X} + \prun{Y} & Otherwise \\
		\end{dcases*}
	\end{align*}
\end{definition}

As for \pccsk and \ccsk (\autoref{lem:derivation-uniqueness}), \pccs enjoys unique derivations:

\begin{restatable}{lemma}{lemderivationuniquenesspccs}
	\label{lem:derivation-uniqueness-pccs}
	For any transition \(t\), there exists exactly one derivation whose conclusion is $t$  in \pccs.
\end{restatable}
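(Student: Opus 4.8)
The plan is to establish uniqueness along the same lines as \autoref{lem:derivation-uniqueness}, made precise as a well-founded induction on the structure of the source process of the transition. Existence is immediate: being a transition of \pccs means, by definition of the relation generated by the rules of \autoref{fig:ltsrules}, being the conclusion of at least one derivation, so the content of the statement is the \emph{uniqueness} of that derivation. The guiding observation is that in \pccs every rule modifies the proof label in a way that records which rule was applied, the single exception being res, which leaves the label untouched but is the only rule whose source is a restriction $P \bs \lambda$.

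Concretely, given a transition $t: S \pr{c}[\theta] T$, I would argue that the top-level operator of $S$ together with the outermost symbol of $\theta$ pins down the last rule of any derivation of $t$. If $S = \alpha.P$, only act applies, forcing $\theta = \alpha$, $T = P$, and no premise. If $S = P \bs \lambda$, only res applies, so $\theta$ is unchanged, the premise is $P \pr{c}[\theta] P'$ with $T = P' \bs \lambda$, the side condition $\labl{\theta} \notin \{\lambda, \out{\lambda}\}$ holds, and the induction hypothesis applies to the structurally smaller source $P$. If $S = P \mid Q$, the outermost symbol of $\theta$ discriminates among the three parallel rules: $\lmidl \theta'$ forces $\lmidl$, $\lmidr \theta'$ forces $\lmidr$, and a synchronisation pair $\cpair{\theta_{\L}}{\theta_{\R}}$ forces syn, with the two premises and their labels uniquely read off from the components of the pair. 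If $S = P + Q$, the outermost symbol of $\theta$ likewise discriminates $\lplusl$ from $\lplusr$. Finally $S = \nil$ admits no transition. In every case the last rule, its side condition, and the labels of its premises are determined, and each premise has a strictly smaller source, so the induction hypothesis yields unique derivations of the premises and hence of $t$.

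The only real subtlety, and the step I would treat most carefully, is the res case: because res does not alter the label, uniqueness cannot be read off from $\theta$ alone and must instead rely on the source being a restriction, which is what guarantees that no other rule could have produced $t$. This is exactly the point flagged in the proof of \autoref{lem:derivation-uniqueness}, and it is also why the argument depends on working up to syntactic identity rather than a structural congruence (\autoref{rem:note-on-uniqueness}): a law such as commutativity of $\mid$ would reintroduce ambiguity in the $\lmidl$/$\lmidr$/syn split. Since \pccs has neither keyed prefixes nor a pre rule, the \pccs argument is in fact a strict simplification of the \pccsk case already handled, so I expect no genuinely new obstacle beyond carefully recording the restriction case.
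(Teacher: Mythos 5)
Your proposal is correct and follows essentially the same route as the paper's own (much terser) argument: the outermost constructor of the proof label determines the last rule and its premises, with the restriction case handled by inspecting the source process instead, and the premises resolved by induction. The only cosmetic difference is that you phrase the induction on the structure of the source rather than on the label, which is harmless (and arguably cleaner for the res case); you also correctly note that \pccs has no pre rule, whereas the paper's proof mentions it out of habit from the \pccsk case.
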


\begin{proof}
	By induction on the label: all the information is stored in the proof labels, except for the application of the res\ and pre\ rules, but this information can be read off of the structure of the source of the transition.
\end{proof}

\begin{definition}[Mapping between \pccs and \pccsk]
	\label{def:pccs-pccsk-bijection}
	We define the \emph{key forgetful} (\(\kbase{\cdot}\)) and \emph{key enrichment} (\(\kprov{\cdot}\))  mappings %
	\begin{align*}
		\kbase{\cdot} &: (X_1 \pr{f}[\theta] X_2) \mapsto (\prun{X_1} \pr{c}[\kr{\theta}] \prun{X_2})	\tag{\pccsk to \pccs}\\
		\kprov{\cdot} &: (P_1 \pr{c}[\theta] P_2, k) \mapsto (P_1 \pr{f}[\theta][k] X_2)  & \text{\st \(\prun{X_2} = P_2\)} \tag{\pccs to \pccsk}
	\end{align*}
	essentially as in \autoref{def:ccsk-pccsk-bijection}, additionally leveraging \autoref{lem:derivation-uniqueness-pccs}. 
\end{definition}

Note that the key enrichment function has to be given a key as an extra parameter, to avoid having to pick a fresh one.
Deterministic strategies to select keys have been explored~\cite{AubertM21} and could be leveraged to ensure that \(\kbase{\cdot}\) and \(\kprov{\cdot}\) are inverses, but this would complicate our definitions while not improving our results presented below.

\begin{definition}[Concurrent transitions]
	\label{def:conc-transitions}
	Two transitions \(t_0 : P \pr{c}[\theta] P_0\), \(t_1 : P \pr{c}[\theta] Q_1\) are \emph{concurrent}, denoted \(t_0 \smile t_1\), if and only if \(\theta_0 \smile \theta_1\).
\end{definition}

Finally, we have two immediate corollaries of \autoref{lem:conserv-labels}:

\begin{corollary}[$\ind$ extends $\smile$]
	\label{lem:conservativity}
	Given two \pccs transitions \(t_0\) and \(t_1\) and two keys \(m \neq n\), \(t_0 \smile t_1 \implies \kprov{(t_0, m)} \ind \kprov{(t_1, n)}\).
\end{corollary}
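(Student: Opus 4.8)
The plan is to obtain the corollary almost immediately from the label-level conservativity result \autoref{lem:conserv-labels}, the only genuine extra content being the verification that the two key-enriched transitions are connected, since \autoref{def:ind-on-ccsk} requires both independence of proof labels \emph{and} connectedness.

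First I would unpack the hypothesis. By \autoref{def:conc-transitions}, \(t_0 \smile t_1\) means that \(t_0\) and \(t_1\) are coinitial \pccs transitions, say \(t_0 : P \pr{c}[\theta_0] P_0\) and \(t_1 : P \pr{c}[\theta_1] Q_1\), whose proof labels satisfy \(\theta_0 \smile \theta_1\). Next I would read off the proof labels of the enriched transitions: by \autoref{def:pccs-pccsk-bijection}, \(\kprov{(t_0, m)}\) is the \pccsk transition obtained by replaying the unique \pccs derivation of \(t_0\) (\autoref{lem:derivation-uniqueness-pccs}) with key \(m\), so its proof label is the keyed label \(\theta_0[m]\) with \(\kr{\theta_0[m]} = \theta_0\); likewise \(\kprov{(t_1, n)}\) carries the label \(\theta_1[n]\). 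Since \(m \neq n\) and \(\theta_0 \smile \theta_1\), \autoref{lem:conserv-labels} yields \(\theta_0[m] \ind \theta_1[n]\), that is, \(\lblof{\kprov{(t_0, m)}} \ind \lblof{\kprov{(t_1, n)}}\).

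It then remains to check connectedness. Key enrichment preserves the source of a transition, so both \(\kprov{(t_0, m)}\) and \(\kprov{(t_1, n)}\) have source \(P\); they are therefore coinitial, and taking the one-step path \(r = \kprov{(t_1, n)}\) (with \(\srcof{r} = P = \srcof{\kprov{(t_0, m)}}\) and \(\tgtof{r} = \tgtof{\kprov{(t_1, n)}}\)) witnesses that they are connected in the sense of \autoref{def:transitions-paths}. With both ingredients in hand, \autoref{def:ind-on-ccsk} gives \(\kprov{(t_0, m)} \ind \kprov{(t_1, n)}\), as required.

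I do not expect a real obstacle here: essentially all the work is already done in \autoref{lem:conserv-labels}, and the remaining steps are bookkeeping. The one point deserving a line of justification is the identification of the proof label of \(\kprov{(t_i, \cdot)}\) with the keyed label \(\theta_i[\cdot]\) appearing in \autoref{lem:conserv-labels}; this holds because, by construction, key enrichment inverts the key-removal map \(\krop\) on the (unique) underlying derivation, so that \(\kr{\theta_i[\cdot]} = \theta_i\).
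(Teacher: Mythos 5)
Your proposal is correct and matches the paper's intent: the paper simply records this as an immediate corollary of \autoref{lem:conserv-labels}, and your write-up supplies exactly the routine details left implicit there (reading off the keyed labels $\theta_0[m]$, $\theta_1[n]$ from the key-enrichment map and noting that the two enriched transitions share the source $P$, hence are coinitial and so connected as \autoref{def:ind-on-ccsk} requires). No gap.
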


\Comment{
\irek{An alternative to \autoref{lem:conservativity} might be the following, which does not require $\ke$ maps:
\begin{lemma}[?]\label{lem:converse}
	\label{lem:?}
	Given two \pccsk transitions \(t_0, t_1\) with $\kay{t_0}\neq \kay{t_1}$, 
	\(\kf{t_0} \smile \kf{t_1} \implies t_0 \ind t_1\).
\end{lemma}
}

\clem{The quantification is over \pccsk transitions, which weakens a bit the result in my opinion. But I like the idea, so either way is fine by me.}
	
\iain{\autoref{lem:converse} is of course the converse of \autoref{lem:extension}.  Probably useful in addition to \autoref{lem:conservativity}.  I woult expect the converse of \autoref{lem:conservativity} also to hold; this should follow from \autoref{lem:extension} using $\forall m.\kf{\ke{t}^m} = t$.}
}

\begin{corollary}[$\ind$ is conservative over $\smile$]
	\label{lem:extension}
		Given two \pccsk transitions \(t_0\) and \(t_1\), \(t_0 \ind t_1 \implies \kbase{(t_0)} \smile \kbase{(t_1)}\).
\end{corollary}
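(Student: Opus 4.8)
The plan is to reduce the statement, through the definitions involved, to a single application of \autoref{lem:conserv-labels}, after first securing that lemma's key-inequality side condition from \autoref{lem:ind-diff-keys}. Since \autoref{lem:conserv-labels} already performs all the inductive work of matching the concurrency rules for \pccs against the independence rules for \pccsk, the corollary should indeed be \enquote{immediate}, provided the bookkeeping around keys and the overloaded symbol \(\smile\) is handled carefully.

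First I would unfold the hypothesis \(t_0 \ind t_1\) using \autoref{def:ind-on-ccsk}: it yields both that \(t_0, t_1\) are connected and, crucially, that their proof keyed labels are independent, \(\lblof{t_0} \ind \lblof{t_1}\). Writing \(m = \key{t_0}\), \(n = \key{t_1}\), and \(\sigma_i = \kr{\lblof{t_i}}\) for the underlying key-free proof labels (\autoref{def:key-removal}), I would record that \(\lblof{t_i} = \sigma_i[\key{t_i}]\), so that the available fact reads \(\sigma_0[m] \ind \sigma_1[n]\).

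The step one must not skip is establishing \(m \neq n\): this is precisely the side condition of \autoref{lem:conserv-labels}, and it does \emph{not} follow merely from the definition of \(\ind\) on transitions, which only demands connectedness and label independence. It is supplied instead by \autoref{lem:ind-diff-keys}, which guarantees that independent transitions carry distinct keys. With \(m \neq n\) in hand, the \((\Leftarrow)\) direction of \autoref{lem:conserv-labels} applied to \(\sigma_0[m] \ind \sigma_1[n]\) gives \(\sigma_0 \smile \sigma_1\), that is, \(\kr{\lblof{t_0}} \smile \kr{\lblof{t_1}}\).

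Finally I would appeal to \autoref{def:pccs-pccsk-bijection}: the transition \(\kbase{t_i}\) carries exactly the proof label \(\kr{\lblof{t_i}}\). Hence the concurrency \(\sigma_0 \smile \sigma_1\) of these labels is, by \autoref{def:conc-transitions}, the desired conclusion \(\kbase{t_0} \smile \kbase{t_1}\). I expect no genuine obstacle here; the only subtlety worth flagging is that \(\smile\) is overloaded (a relation on proof labels versus a relation on transitions), and that \autoref{def:conc-transitions} reduces the transition-level relation to the label-level one, so that the conclusion really does amount to the label concurrency already derived.
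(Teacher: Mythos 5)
Your proposal is correct and matches the paper's intent: the paper states this as an immediate corollary of \autoref{lem:conserv-labels}, and you simply spell out the reduction, including the one genuinely non-trivial bookkeeping step (securing the side condition \(m \neq n\) via \autoref{lem:ind-diff-keys}) and the unfolding of \(\kbase{\cdot}\) and the overloaded \(\smile\). Nothing in your argument deviates from the paper's route, so no further comparison is needed.
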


\subsection{Proving the Complementarity of Dependence and Independence}%
\label{ssec:complementarity}

Proving \autoref{prop:connectednessadequacy} and \autoref{thm:complementarity} requires intermediate definitions and results:

\begin{definition}[Realisation]
	A process \(X\) \emph{realises the proof label \(\theta\)} if there exists \(X_1\) and \(X_2\) such that \(X \pr{fb}^* X_1 \pr{fb}[\theta] X_2\).
\end{definition}

\begin{proposition}
	\label{prop:realis}
	For every proof label \(\theta\), there exists a process that realises it, and we denote it \(\real{\theta}\).
\end{proposition}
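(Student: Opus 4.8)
The plan is to prove a slightly stronger statement that immediately implies the proposition: for every proof label $\theta$ there exist a key-free process $\real{\theta}$ (that is, $\keys{\real{\theta}} = \emptyset$) and a process $X_2$ such that $\real{\theta} \pr{f}[\theta] X_2$. Taking the empty path for $\pr{fb}^*$ then witnesses that $\real{\theta}$ realises $\theta$, since $\pr{f}$ is contained in $\pr{fb}$. Realising $\theta$ as a \emph{first} forward action, rather than after an arbitrary prefix path, is what keeps the argument elementary: because every process built below has an empty key set, all the side conditions of the forward rules in \autoref{fig:provedltsrulesccskfw}---$\keys{X}=\emptyset$ in act, $\kay{\theta}\notin\keys{Y}$ in $\lmidl$/$\lmidr$, and $\std{Y}$ in $\lplusl$/$\lplusr$---hold automatically, so no bookkeeping about key clashes is needed. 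The whole proof is a structural induction on $\theta$ (equivalently, induction on its number of symbols), following the grammar of \autoref{def:proof-keyed-labels}.

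For the base case $\theta = \alpha[k]$ I take $\real{\alpha[k]} = \alpha.\nil$; the act rule gives $\alpha.\nil \pr{f}[\alpha][k] \alpha[k].\nil$, and $\keys{\alpha.\nil} = \emptyset$. For a label of the form $s\,\theta'$ where $s$ is one of the prefix symbols $\lmidl$, $\lmidr$, $\lplusl$, $\lplusr$, I wrap the inductively obtained realiser with the matching operator against $\nil$: for instance $\real{\lmidl \theta'} = \real{\theta'} \Par \nil$ and $\real{\lplusl \theta'} = \real{\theta'} + \nil$, and symmetrically on the right. By the induction hypothesis $\real{\theta'} \pr{f}[\theta'] X_2$, and one application of the corresponding rule yields the required transition, its side condition holding because $\nil$ is key-free and standard; moreover the key set of the wrapped process is still empty, preserving the invariant.

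The only remaining case is a synchronisation label $\cpair{\upsilon_1 \lambda[k]}{\upsilon_2 \out{\lambda}[k]}$, where I set $\real{\cpair{\upsilon_1 \lambda[k]}{\upsilon_2 \out{\lambda}[k]}} = \real{\upsilon_1 \lambda[k]} \Par \real{\upsilon_2 \out{\lambda}[k]}$. The two factors are the realisers of the two components supplied by the induction hypothesis, which applies because each component $\upsilon_1 \lambda[k]$ and $\upsilon_2 \out{\lambda}[k]$ is itself a strictly smaller proof label (of the non-synchronisation form). By construction each factor performs its component as a first forward action carrying the \emph{same} key $k$, so the syn rule composes them into a single transition labelled $\cpair{\upsilon_1 \lambda[k]}{\upsilon_2 \out{\lambda}[k]}$. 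This synchronisation step is the one delicate point of the argument: it demands that the two factors fire simultaneously with a common key. This is exactly what the first-action strategy secures, since the act rule permits an arbitrary key, allowing us to align both components on $k$, and since both factors are key-free there is no obstruction to applying syn. A leading prefix in front of the pair is then absorbed by the wrapping step of the previous paragraph. This exhausts the grammar and completes the induction; the explicit construction of $\real{\theta}$ also yields the notation announced in the statement.
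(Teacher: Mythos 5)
Your proof is correct and follows essentially the same route as the paper's: a structural induction on $\theta$ with the same witnesses ($\alpha.\nil$ for the base case, wrapping with $\nil$ for the parallel and choice prefixes, and the parallel composition of the two components' realisers for synchronisation labels). You are somewhat more explicit than the paper about why the side conditions of the forward rules hold (the key-freeness invariant and realising $\theta$ as a first action), but this is a refinement of the same argument rather than a different one.
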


\begin{proof}
	We prove it by induction on the size of \(\theta\):
	\begin{description}
		\item[\protect{\(\theta = \alpha[k]\)}] Then \(\alpha.\nil\) realises \(\theta\).
		\item[\(\theta = \lplusd \theta'\)] By induction hypothesis, \(\real{\theta'}\) realises \(\theta'\), and \(\nil + \real{\theta'}\) or \(\real{\theta'} + \nil\), depending on the value of \(\D\), will realise \(\theta\).
		\item[\(\theta = \lmidd \theta'\)] By induction hypothesis, \(\real{\theta'}\) realises \(\theta'\), and \(\nil \mid \real{\theta'}\) or \(\real{\theta'} \mid \nil\), depending on the value of \(\D\), will realise \(\theta\).
		\item[\(\theta  = \cpair{\theta_1}{\theta_2}\)] By induction hypothesis, \(\real{\theta_1}\) (\resp \(\real{\theta_2}\)) realises \(\theta_1\) (\resp \(\theta_2\)), so \(\real{\theta_1} \mid \real{\theta_2}\) realises \(\theta\). \qedhere
	\end{description}
\end{proof}

\begin{lemma}
	\label{lem:path-origin}
	For all reachable processes \(X\) and \(Y\), there exists a path \(X \pr{fb}^*Y \) iff \(\orig{X} = \orig{Y}\).
\end{lemma}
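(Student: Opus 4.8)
The plan is to prove both directions from a single invariant: the origin process is preserved along every transition. Recall (\autoref{def:std} and the remark following it) that $\orig{X}$ is obtained by erasing all keys in $X$, so $\orig{\cdot}$ is a compositional map that sends each keyed prefix $\alpha[k].Z$ to $\alpha.\orig{Z}$ and commutes with every other operator.

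First I would establish the core fact: for a single transition $t: X \pr{fb}[\theta] Y$ (forward or backward) we have $\orig{X} = \orig{Y}$. I would argue by induction on the derivation of the forward transition $X \pr{f}[\theta] Y$. The base case is the act rule $\alpha.Z \pr{f}[\alpha][k] \alpha[k].Z$ with $\std{Z}$: here both $\orig{\alpha.Z}$ and $\orig{\alpha[k].Z}$ equal $\alpha.Z$. Each remaining unary rule (pre, res, the parallel rules, and sum) is structural, applying the same operator to source and target, so equality of origins follows from the inductive hypothesis on the premise; for instance in the sum rule $X + Y \pr{f}[\lplusl\theta] X' + Y$ we get $\orig{X+Y} = \orig{X} + Y = \orig{X'} + Y = \orig{X'+Y}$ since $\std{Y}$. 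The only binary rule is syn, with premises $X_1 \pr{f}[\theta_1] X_1'$ and $Y_1 \pr{f}[\theta_2] Y_1'$: applying the hypothesis to each gives $\orig{X_1} = \orig{X_1'}$ and $\orig{Y_1} = \orig{Y_1'}$, hence $\orig{X_1 \Par Y_1} = \orig{X_1' \Par Y_1'}$. For a backward transition, the Loop Lemma (\autoref{lem:loop_proved}) turns it into a forward one with the same endpoints, so origin preservation transfers. I would then lift this to paths by a trivial induction on length, which immediately yields the ($\Rightarrow$) direction: any path $X \pr{fb}^* Y$ forces $\orig{X} = \orig{Y}$.

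For the ($\Leftarrow$) direction I would use reachability. Since $X$ and $Y$ are reachable, \autoref{def:std} supplies rooted paths $r_X : \orig{X} \pr{fb}^* X$ and $r_Y : \orig{Y} \pr{fb}^* Y$. Using the Loop Lemma I reverse the first path (writing $\rev{r_X}$ as in \autoref{not:inv-path}) to obtain $X \pr{fb}^* \orig{X}$. Because $\orig{X} = \orig{Y}$, concatenating $\rev{r_X}$ with $r_Y$ produces the desired path $X \pr{fb}^* \orig{X} = \orig{Y} \pr{fb}^* Y$.

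The main obstacle is the core single-transition invariant, and within it the synchronisation case, where one must check that the key-erasing map commutes with parallel composition so that the two independent premises recombine correctly; the remaining rules are routine once the compositional description of $\orig{\cdot}$ is fixed, and the backward direction is then a direct consequence of reachability together with the Loop Lemma.
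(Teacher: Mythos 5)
Your proof is correct and follows essentially the same route as the paper: the ($\Leftarrow$) direction is identical (reverse the rooted path to $X$ via the Loop Lemma and concatenate with the rooted path to $Y$), and your ($\Rightarrow$) direction is just a careful formalisation, by induction on derivations, of the invariant the paper states informally --- that transitions only change the presence or names of keys, so erasing keys yields the same standard origin on both sides. The paper leaves that invariant at the level of a remark (citing the $\tostdop$ function), whereas you spell out the rule-by-rule check; no substantive difference.
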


\begin{proof}
	\begin{description}
		\item[\(\Rightarrow\)] Informally, the key argument is that \(X\) and \(Y\) will diverge only in the name, presence or absence of keys, and that erasing them (\eg using the \(\tostdop\) function~\cite[p.~128]{LaneseP21}) will give the same standard process, which will be the origin of both.
		\item[\(\Leftarrow\)] It suffices to consider the path \(X \pr{fb}^* \orig{X} = \orig{Y} \pr{fb}^* Y\) which exists by definition of connectedness and \autoref{lem:loop_proved} applied to paths.\qedhere
	\end{description}
\end{proof}

From this lemma, it is easy to deduce the following:

\begin{corollary}
	\label{cor:connected-origin}
	If \(t_1 : X_1 \pr{fb}[\theta_1] X'_1\) and \(t_2 : X_2 \pr{fb}[\theta_2] X'_2\) are connected,  then \(\orig{X_1} = \orig{X_2}\).
\end{corollary}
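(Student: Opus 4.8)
The plan is to reduce everything to \autoref{lem:path-origin}, which characterises the existence of a path between two reachable processes by equality of their origins. First I would unfold \autoref{def:transitions-paths}: since \(t_1\) and \(t_2\) are connected, one is connected to the other, so without loss of generality there is a path \(r\) with \(\srcof{r} = \srcof{t_1} = X_1\) and \(\tgtof{r} = \tgtof{t_2} = X'_2\), that is, \(X_1 \pr{fb}^* X'_2\). The symmetric case, where \(t_2\) is connected to \(t_1\) and one instead obtains \(X_2 \pr{fb}^* X'_1\), is handled identically after exchanging the roles of the two transitions.

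Next I would observe that each transition is itself a one-step path, so in particular \(t_2 : X_2 \pr{fb}[\theta_2] X'_2\) yields \(X_2 \pr{fb}^* X'_2\). Applying \autoref{lem:path-origin} to the path \(X_1 \pr{fb}^* X'_2\) gives \(\orig{X_1} = \orig{X'_2}\), and applying it to \(X_2 \pr{fb}^* X'_2\) gives \(\orig{X_2} = \orig{X'_2}\). Chaining these two equalities yields \(\orig{X_1} = \orig{X'_2} = \orig{X_2}\), which is the desired conclusion. Since we only consider reachable processes, all four of \(X_1, X'_1, X_2, X'_2\) are reachable, so the hypotheses of \autoref{lem:path-origin} are met in each application.

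I do not expect any genuine obstacle: the statement is a direct corollary of \autoref{lem:path-origin}. The only points requiring a little care are (i) invoking the symmetry built into the definition of connected transitions, so that the single path provided by connectedness always terminates at \(X'_2\) (or symmetrically at \(X'_1\)) and can be glued to the one-step path given by the other transition; and (ii) checking the reachability hypothesis of \autoref{lem:path-origin}, which holds by our standing assumption that all processes are reachable. Everything else is a two-step transitive chain of origin equalities.
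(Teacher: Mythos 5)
Your proof is correct and is exactly the deduction the paper intends (the paper only remarks that the corollary is ``easy to deduce'' from \autoref{lem:path-origin} without spelling it out): unfold connectedness to get a path \(X_1 \pr{fb}^* X'_2\), view \(t_2\) as a one-step path \(X_2 \pr{fb}^* X'_2\), and apply the lemma twice to chain the origin equalities. No gaps.
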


\begin{proposition}
	\label{derivation-connectedness-unique}
	For all \(\theta_1\), \(\theta_2\), if a derivation of \(\theta_1 \conn \theta_2\) exists, then it is unique. 
\end{proposition}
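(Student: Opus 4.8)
The plan is to proceed by structural induction, following the same strategy as for \autoref{lem:derivation-uniqueness}: I will show that the top-level shapes of \(\theta_1\) and \(\theta_2\) uniquely determine which rule of \autoref{fig:pccsk-relations} can conclude \(\theta_1 \conn \theta_2\), and that whenever that rule has premises they are fixed by the conclusion. The induction is on the combined size \(|\theta_1| + |\theta_2|\), which decreases strictly from any conclusion to each of its premises, since every premise mentions only proper sub-labels of the conclusion.

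First I would recall that every proof label is either a \emph{prefix} \(\alpha[k]\), or carries a leading modifier placing it in one of three shape classes: \(\lmidd\theta'\) (parallel), \(\lplusd\theta'\) (choice), or \(\cpair{\theta_{\L}}{\theta_{\R}}\) (synchronisation). The core of the argument is then a case analysis on the pair of shapes of \((\theta_1,\theta_2)\). The prefix cases are handled by the action rules: if \(\theta_1\) is a prefix then only A\(^1\) applies, and it has no premises, so the derivation is a single axiom; if \(\theta_1\) is not a prefix but \(\theta_2\) is, then only A\(^2\) applies. Crucially, the side condition \enquote{\(\theta\) is not a prefix} on A\(^2\) is exactly what makes A\(^1\) and A\(^2\) mutually exclusive, and the fact that \(\theta_2\) is a prefix rules out every structural rule (each of which requires \(\theta_2\) to carry a leading modifier or be a synchronisation pair). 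In all remaining cases both labels are non-prefixes, so neither action rule is available.

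For the structural cases I would match the leading modifiers. When both labels are parallel, P\(^1\) applies if the two directions agree and P\(^2\) if they are opposite; these alternatives are disjoint because the directions in the conclusion are fixed, and symmetrically C\(^1\)/C\(^2\) handle two choices. When exactly one label is a synchronisation pair and the other is parallel, only S\(^1\) (pair on the right) or S\(^2\) (pair on the left) applies, with the direction \(\D\) read off the parallel side selecting the component \(\theta_{\D}\) used in the unique premise. When both labels are synchronisation pairs, only S\(^3\) applies, its two premises being the matched components. All mixed shapes not covered by any rule (\eg parallel against choice, or choice against synchronisation) admit no derivation at all, so uniqueness holds vacuously. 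In each case where a premise exists it is uniquely determined by the conclusion, so the induction hypothesis yields uniqueness of its derivation, and the full derivation is unique.

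The main obstacle is simply the bookkeeping needed to rule out spurious overlaps between rules; the genuinely delicate checks are (i) that the A\(^2\) side condition cleanly separates the two action rules, (ii) that P\(^1\) versus P\(^2\) and C\(^1\) versus C\(^2\) are distinguished purely by whether the two directions coincide, and (iii) that no P- or C-rule can ever conclude a judgement with a synchronisation pair on either side, so the S-rules never compete with them. Once these are verified, no conclusion is derivable by two distinct rules, the premises are forced, and the induction closes.
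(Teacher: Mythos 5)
Your proof is correct and follows essentially the same route as the paper, which argues by induction on the structure of \(\theta_1\) and \(\theta_2\) and observes that no two rule conclusions in \autoref{fig:pccsk-relations} overlap. Your write-up just makes explicit the case analysis (the A\(^1\)/A\(^2\) side condition, P\(^1\)/P\(^2\) and C\(^1\)/C\(^2\) split by direction, S-rules keyed to synchronisation pairs) that the paper leaves implicit.
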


\begin{proof}
	This follows easily by induction on the structure of \(\theta_1\) and \(\theta_2\), since no two conclusions overlap in the definition of the connectivity relation (\autoref{fig:pccsk-relations}).
\end{proof}

\propconnectednessadequacy*

\begin{proof}%
	We prove each equation separately, letting \(\D\), \(\E\) range over the \emph{directions} \(\L\)(eft) and \(\R\)(ight).
	\begin{description}
		\item[(\ref{prop:connectednessadequacy-1})]
		Since \(t_1\) and \(t_2\) are connected, we have by \autoref{cor:connected-origin} that \(\orig{X_1} = \orig{X_2} = P\),  and we reason by induction on \(P\):
			
		\begin{description}
			\item[$P = \nil$ ] Then no transitions are possible and this case is vacuously true.
			\item[$P = \alpha . Q$] Then there are three cases:
			\begin{itemize}
				\item If $X_1=P$ and  $\theta_1$ is of the form $\alpha[k]$, then by A\(^1\) we have $\theta_1 \conn \theta_2$.
				\item If $X_2=P$ and  $\theta_2$ is of the form $\alpha[k]$ and $\theta_1$ is not of the form $\beta[k']$ for some $\beta$ and \(k'\), then by A\(^2\) we have $\theta_1 \conn \theta_2$.
				\item Otherwise, there exist paths \(Q \pr{bf}^* X_1 \pr{bf}[\theta_1] X_1'\) and \(Q \pr{bf}^* X_2 \pr{bf}[\theta_2] X_2'\) and we use the induction hypothesis on \(Q\) to obtain the desired result.
			\end{itemize}
			\item[$P = Q \bs \lambda$] Then, there exist paths \(Q \pr{bf}^* X_1 \pr{bf}[\theta_1] X_1'\) and \(Q \pr{bf}^* X_2 \pr{bf}[\theta_2] X_2'\) and we use the induction hypothesis on \(Q\) to obtain the desired result.
			\item[$P = Q_{\L} + Q_{\R}$] Then it must be the case that \(\theta_1 = \lplusd \theta_1'\) and  \(\theta_2 = \lpluse \theta_2'\), and there are two cases:
			\begin{description}
				\item[\(\E = \D\)]  Then, there exist paths \(Q_{\D} \pr{bf}^* {X_1}_{\D} \pr{bf}[\theta_1'] {X_1'}_{\D}\) and \(Q_{\D} \pr{bf}^* {X_2}_{\D} \pr{bf}[\theta_2'] {X_2'}_{\D}\) for \(X_i = {X_i}_{\L} + {X_i}_{\R}\), \(X_i' = {X_i'}_{\L} + {X_i'}_{\R}\) for \(i \in \{1, 2\}\).
				By induction on \(Q_{\D}\), we have that \(\theta_1' \conn \theta_2'\) and \(\theta_1 = \lplusd \theta_1' \conn \lplusd \theta_2' = \theta_2\) follows by C\(^1\).
				\item[\(\E = \OD\)] Then by C\(^2\) we have that \(\theta_1 = \lplusd \theta_1' \conn \lpluse \theta_2' = \theta_2\).
			\end{description}
			\item[$P = Q_{\L} \Par Q_{\R}$] There are four cases, depending of the structure of \(\theta_1\) and \(\theta_2\):
			\begin{description}
				\item[\(\theta_1 = \lmidd \theta_1'\) and \(\theta_2 = \lmide \theta_2'\)] Then there are two cases, depending on \(\D\) and \(e\):
				\begin{description}
					\item[\(\E = \D\)] Then, there exist paths \(Q_{\D} \pr{bf}^* {X_1}_{\D} \pr{bf}[\theta_1'] {X_1'}_{\D}\) and \(Q_{\D} \pr{bf}^* {X_2}_{\D} \pr{bf}[\theta_2'] {X_2'}_{\D}\) for \(X_i = {X_i}_{\L} \Par {X_i}_{\R}\), \(X_i' = {X_i'}_{\L} \Par {X_i'}_{\R}\) for \(i \in {1, 2}\).
					By induction, we have that \(\theta'_1 \conn \theta'_2\) and by P\(^1\) we obtain the desired result.
					\item[\(\E = \OD\)] Then by P\(^2\) we have that \(\theta_1 = \lmidd \theta_1' \conn \lmide \theta_2' = \theta_2\).				
				\end{description}
				\item[\(\theta_1 = \lmidd \theta_1'\) and \(\theta_2 = \cpair{\theta_{\L}}{\theta_{\R}}\)]
				Then, there exist paths \(Q_{\D} \pr{bf}^* {X_1}_{\D} \pr{bf}[\theta_1'] {X_1'}_{\D}\) and  \(Q_{\D} \pr{bf}^* {X_2}_{\D} \pr{bf}[\theta_{\D}] {X_2'}_{\D}\) for \(X_i = {X_i}_{\L} \Par {X_i}_{\R}\), \(X_i' = {X_i'}_{\L} \Par {X_i'}_{\R}\) for \(i \in {1, 2}\).
				By induction, we have that \(\theta_1' \conn \theta_{\D}\), and by S\(^1\) we obtain the desired result.
				\item[\(\theta_1 = \cpair{\theta_{\L}}{\theta_{\R}}\) and \(\theta_2 = \lmidd \theta_2'\)] This case is nearly identical to the previous one, except that it uses S\(^2\) to obtain the desired result.
				\item[\(\theta_1 = \cpair{{\theta_1}_{\L}}{{\theta_1}_{\R}}\) and \(\theta_2 = \cpair{{\theta_2}_{\L}}{{\theta_2}_{\R}}\)]
				Then, there exist paths 
				\begin{align*}
					Q_{\D} \pr{bf}^* {X_1}_{\D} \pr{bf}[{\theta_1}_{\D}] {X_1'}_{\D} && \text{ and } && Q_{\D} \pr{bf}^* {X_2}_{\D} \pr{bf}[{\theta_2}_{\D}] {X_2'}_{\D}
				\end{align*}
				for \(X_i = {X_i}_{\L} \Par {X_i}_{\R}\), \(X_i' = {X_i'}_{\L} \Par {X_i'}_{\R}\) for \(i \in {1, 2}\).
				Using the induction hypothesis twice gives \({\theta_1}_{\D} \conn {\theta_2}_{\D}\), and we get the desired result by S\(^3\).\qedhere
			\end{description}
		\end{description}
		\item[(\ref{prop:connectednessadequacy-2})] 	
		We prove this by constructing a process \(X\) that realises both proof keyed labels, that is, than can reach a process \(X_1\) capable of performing a transition labelled \(\theta_1\) and a process \(X_2\) capable of performing a transition labelled \(\theta_2\).
		To do so, we leverage \autoref{prop:realis} and will use its \(\real{\theta}\) construction.
		
		We reason by induction on the last rule of the derivation of \(\theta_1 \conn \theta_2\), which we know to be unique by \autoref{derivation-connectedness-unique}.
		
		\begin{description}
			\item[A\(^1\)] Then \(\theta_1 = \alpha[k]\), \(t_1: \alpha.\real{\theta_2}\r{f}[\theta_1] \alpha[k].\real{\theta_2}\) and \(t_2: \alpha[k].\real{\theta_2} \r{f}[\theta_2] Y\), for some \(Y\), are connected since they are composable.
			\item[A\(^2\)] Then \(\theta_2 = \alpha[k]\), \(t_1: \alpha.\real{\theta_1} \r{f}[\theta_2] \alpha[k].\real{\theta_1}\) and \(t_2: \alpha[k].\real{\theta_1} \r{f}[\theta_1] Y\), for some \(Y\), are connected since they are composable.
			\item[C\(^1\)] Then \(\theta_1 = +_{\D} \theta_1'\), \(\theta_2 = +_{\D} \theta_2'\), and by the induction hypothesis there exists \(t_1': X_1' \pr{fb}[\theta_1'] Y_1'\) and \(t_2' : X_2' \pr{fb}[\theta_2'] Y_2'\) that are connected.
			We have that \(\orig{X_1'} = \orig{X_2'} = X'\) by \autoref{cor:connected-origin}, and it is immediate to observe that either \(X' + \nil\) or \(\nil + X'\) (depending on the value of \(\D\)) can realise both \(\theta_1\) and \(\theta_2\), and hence the transitions with labels \(\theta_1\) and \(\theta_2\) are connected.
			\item[C\(^2\)] Letting \(X = \real{\theta_1} + \real{\theta_2}\), it is obvious that \(X\) can perform two coinitial transitions with labels \(\theta_1\) and \(\theta_2\), that are hence connected.
			\item[P\(^1\)] This case is similar to C\(^1\).
			\item[P\(^2\)] This case is similar to C\(^2\).
			\item[S\(^1\)] Then \(\theta_1 = \lmidd \theta_1'\), \(\theta_2 = \cpair{\theta_{\L}}{\theta_{\R}}\), and by the induction hypothesis there exists  \(t_1': X_1' \pr{fb}[\theta_1'] Y_1'\) and \(t_2' : X_2' \pr{fb}[\theta_2'] Y_2'\) that are connected.
			We have that \(\orig{X_1'} = \orig{X_2'} = X'\) by \autoref{cor:connected-origin}, and it is immediate to observe that either \(X' \mid \real{\theta_{\R}}\) or \(\real{\theta_{\L}} \mid X'\) (depending on the value of \(\D\)) can realise both \(\theta_1\) and \(\theta_2\), and hence the transitions with labels \(\theta_1\) and \(\theta_2\) are connected.
			\item[S\(^2\)] This case is nearly identical to the previous one.
			\item[S\(^3\)] Then \(\theta_1 = \cpair{\theta_{\L}^1}{\theta_{\R}^1}\),  \(\theta_2 = \cpair{\theta_{\L}^2}{\theta_{\R}^2}\), and by the induction hypothesis there exists, for \(i \in \{1, 2\}\) and \(\D \in \{\L, \R\}\), four transitions 
			\[
			t_{\D}^i: X_{\D}^i \pr{fb}[\theta_{\D}^i] Y_{\D}^i
			\]
			such that \(t_{\D}^1\) and \(t_{\D}^2\) are connected.
			Hence, we know that \(\orig{X_{\D}^1} = \orig{X_{\D}^2}\) by \autoref{cor:connected-origin} and that \(\orig{X_{\L}^1} \Par \orig{X_{\R}^1}\) can realise both \(\theta_1\) and \(\theta_2\), and hence the transitions with labels \(\theta_1\) and \(\theta_2\) are connected. \qedhere
		\end{description}
	\end{description}
\end{proof}

\begin{proposition}
	\label{lem:ind-alpha}
	If $\theta_1 \ind \theta_2$ then neither $\theta_1$ nor $\theta_2$ is of the form $\alpha[k]$.
\end{proposition}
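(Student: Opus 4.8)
The plan is to prove this by a direct inspection of the inference rules for $\ind$ in \autoref{fig:pccsk-relations}, examining only the \emph{last} rule used in a derivation of $\theta_1 \ind \theta_2$. No induction is needed, since the statement concerns only the outermost constructor of each $\theta_i$, and each rule fixes that constructor in its conclusion.

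First I would make precise the target property. Under the grammar of \autoref{def:proof-keyed-labels}, a proof label is \enquote{of the form $\alpha[k]$} exactly when its leading string $\upsilon \in \{\lmidd, \lplusd\}^*$ is empty and it is not a synchronisation pair; equivalently, its outermost symbol is neither $\lmidd$, nor $\lplusd$, nor an angle-bracket pair $\cpair{\cdot}{\cdot}$. So it suffices to show that in any derivable $\theta_1 \ind \theta_2$, each of $\theta_1$ and $\theta_2$ carries such an outermost operator or pair.

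Next I would go through the rules producing $\ind$. The crucial point is that the Action box is \emph{empty}: there is no axiom of the shape $\alpha[k] \ind \theta$ or $\theta \ind \alpha[k]$. Every remaining rule has a conclusion in which both sides already bear an outermost constructor: C\(^1\) gives $\lplusd \theta \ind \lplusd \theta'$; P\(^1\) gives $\lmidd \theta \ind \lmidd \theta'$; P\(^2_k\) gives $\lmidd \theta \ind \lmidod \theta'$; S\(^1\) gives $\lmidd \theta \ind \cpair{\theta_{\L}}{\theta_{\R}}$; S\(^2\) is its mirror $\cpair{\theta_{\L}}{\theta_{\R}} \ind \lmidd \theta$; and S\(^3\) gives $\cpair{\theta_1}{\theta_2} \ind \cpair{\theta'_1}{\theta'_2}$. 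In each of these six conclusions, both $\theta_1$ and $\theta_2$ begin with $\lmidd$ (or $\lmidod$), with $\lplusd$, or are a synchronisation pair --- never a bare $\alpha[k]$.

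Concluding: since any derivation of $\theta_1 \ind \theta_2$ must end in one of these rules, both $\theta_1$ and $\theta_2$ have a non-empty leading operator or a pair structure, hence neither is of the form $\alpha[k]$. The only delicate point --- the \enquote{main obstacle}, such as it is --- is pinning down the reading of \enquote{of the form $\alpha[k]$} (empty $\upsilon$, not a pair) and confirming that no rule conclusion can be reread as a bare prefix; once the grammar is fixed, this follows immediately from the shape of the six conclusions together with the absence of any Action axiom for $\ind$.
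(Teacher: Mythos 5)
Your proof is correct and takes essentially the same route as the paper: the paper's own argument is a one-line inspection of the rules, observing that the only base case for $\ind$, P\(^2_k\), already forces both sides to be prefixed by $\lmidl$ and $\lmidr$. Your version is just a more explicit rendering of that inspection, checking that every possible last rule (not only the base case) yields conclusions whose two sides carry an outermost $\lmidd$, $\lplusd$ or pair constructor, so neither can be a bare $\alpha[k]$.
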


\begin{proof}
	This follows easily once observed that the only base case for \(\ind\) in \autoref{fig:pccsk-relations}, P\(^2_k\), requires \(\theta\) and \(\theta'\) to be prefixed by \(\lmidl\) and \(\lmidr\), respectively.
\end{proof}

Finally, we have all the elements to prove that the independence and dependence relations partition the connectedness relation:

\thmcomplementarity*

\begin{proof}
	\begin{enumerate}
		\item Any proof of $\theta_1 \ind \theta_2$ can be transformed into a proof of $\theta_1 \conn \theta_2$ by systematically replacing rules one-by-one.
		The only noticeable difference is that the condition on keys in P\(^2_k\) is absent in P\(^2\), but since we are relaxing a condition, this does not go in the way of the proof transformation.
		
		\item Similarly, any proof of \(\theta_1 \sdep \theta_2\) can be transformed into a proof of \(\theta_1 \conn \theta_2\) by systematically replacing rules one-by-one, and relaxing the condition on keys in P\(^2_k\).
		Note that the premises in S\(^3\) involves \(\sdep\) \emph{and} \(\conn\), hence requiring to transform only one of the two derivation sub-trees.
		
		\item We prove this by induction on the length of \emph{the} proof of \(\theta_1 \conn \theta_2\), %
		unique by \autoref{derivation-connectedness-unique}:
		\begin{description}
			\item[Length \(1\)] Then, the proof of \(\theta_1 \conn \theta_2\) is one of the following:
			\begin{description}
				\item[A\(^1\), A\(^2\) or C\(^2\)] Then, we can immediately obtain a proof of \(\theta_1 \sdep \theta_2\) using the same rule.
				By \autoref{lem:ind-alpha}, we know that no proof of \(\alpha[k] \ind \theta\) nor of \(\theta \ind \alpha[k]\) exist, and by inspection of the rules of \(\ind\), we can observe that no proof of \(+_{\D}\theta_1 \ind +_{\OD} \theta_2\) can exist.
				\item[P\(^2\)] Then, there are two cases:
				\begin{description}
					\item[\(\kay{\theta} = \kay{\theta'}\)] In this case,  we can obtain a proof of \(\theta_1 \sdep \theta_2\) using P\(^2_k\), but cannot derive \(\theta_1 \ind \theta_2\) since P\(^2_k\) cannot be used and no other rule has a conclusion of the form \(\lmidd \theta \ind \lmidod \theta'\). 
					\item[\(\kay{\theta_1} \neq \kay{\theta_2}\)]  In this case,  we can obtain a proof of \(\theta_1 \ind \theta_2\) using P\(^2_k\), but cannot derive \(\theta_1 \sdep \theta_2\) since P\(^2_k\) cannot be used and no other rule has a conclusion of the form \(\lmidd \theta \sdep \lmidod \theta'\).
				\end{description}
			\end{description}
			\item[Length \(>1\)]
			Then, the proof of \(\theta_1 \conn \theta_2\) terminates with one of the following rules:
			\begin{description}
				\item[C\(^1\)] Then \(\theta_1 = \lplusd \theta_1'\), \(\theta_2 = \lplusd \theta_2'\), and by the induction hypothesis there exists a proof of \(\theta_1' \sdep \theta_2'\) or of \(\theta_1' \ind \theta_2'\), but not of both.
				In both cases, we obtain the desired result by applying C\(^1\) to the proof obtained by induction, and no proof exists for the other relation since C\(^1\) is the only rule with a conclusion of this shape.
				\item[P\(^1\), S\(^1\) and S\(^2\)]  Those cases are similar to the previous one.
				\item[S\(^3\)] Then, \(\theta_1 = \cpair{\theta_{\L}}{\theta_{\R}}\), \(\theta_2 = \cpair{\theta_{\L}'}{\theta_{\R}'}\), and by induction, we have one of those cases:
				\begin{description}
					\item[\(\theta_{\L} \ind \theta_{\L}'\) and \(\theta_{\R} \ind \theta_{\R}'\) ] Then applying S\(^3\) gives that \(\theta_1 \ind \theta_2\), and no proof of \(\theta_1 \sdep \theta_2\) can exist since the only rule that could be applied to obtain this conclusion is S\(^3\) but its premises cannot be proven by induction.
					\item[\(\theta_{\L} \ind \theta_{\L}'\) and \(\theta_{\R} \sdep \theta_{\R}'\)]
					Since \(\theta_{\L} \ind \theta_{\L}'\), then \(\theta_{\L} \conn \theta_{\L}'\) by (1) of the current proposition, and \(\theta_1 \sdep \theta_2\) can be proven using S\(^3\).
					By induction, no proof of \(\theta_{\R} \ind \theta_{\R}'\) exists, and hence no proof of \(\theta_1 \ind \theta_2\) can exist since the only rule that could be applied to obtain this conclusion is S\(^3\) but its premises cannot be proven.
					\item[\(\theta_{\L} \sdep \theta_{\L}'\) and \(\theta_{\R} \ind \theta_{\R}'\)] This case is identical to the previous one.
					\item[\(\theta_{\L} \sdep \theta_{\L}'\) and \(\theta_{\R} \sdep \theta_{\R}'\)] Then \(\theta_1 \sdep \theta_2\) can be obtained using S\(^3\), since \(\theta_{\L} \sdep \theta_{\L}'\) implies \(\theta_{\L} \conn \theta_{\L}'\) by (2) of the current proposition.
					Since, by induction, neither \(\theta_{\L} \ind \theta_{\L}'\) nor \(\theta_{\R} \ind \theta_{\R}'\) can be proven, it is clear that \(\theta_1 \ind \theta_2\) cannot be proven either.\qedhere
				\end{description}
			\end{description}
		\end{description}
	\end{enumerate}
\end{proof}

\Comment{
\begin{notation}
\label{not:conn-transitions}
We write \(t_1 \conn t_2\) if \(t_1\) and \(t_2\) are connected, and \(t_1 \ind t_2\) (\resp \(t_1 \sdep t_2\)) if \(t_1 \conn t_2\) and \(\lblof{t_1} \ind \lblof{t_2}\) (\resp \(\lblof{t_1} \sdep \lblof{t_2}\)).
\end{notation}

Note that from \autoref{thm:complementarity} we see that if $t_1$ and $t_2$ are connected then either $t_1 \ind t_2$ or $t_1 \sdep t_2$, but not both; this is the reason why \autoref{fig:Revised-1} \enquote{partitions} connected transitions between independent and dependent ones.

\autoref{def:relation-proof-labels} and \autoref{not:conn-transitions} are easily extended to \ccsk's transitions by using the bijection from \autoref{def:sub-systems}:

\begin{definition}[Relations on \ccsk's transitions]
\label{def:relations-on-ccsk}
For all \(t_1\), \(t_2\) in \ccsk, we let
\begin{align*}
t_1 \conn t_2 \Leftrightarrow \prov{t_1} \conn \prov{t_2}  &&
t_1 \ind t_2 \Leftrightarrow \prov{t_1} \ind \prov{t_2} &&
t_1 \sdep t_2 \Leftrightarrow \prov{t_1} \sdep \prov{t_2}
\end{align*}
\end{definition}

\begin{remark}
\label{rem:relations-reflexivity}
It is easy to prove that \(\conn\) and \(\sdep\) are reflexive and symmetric, since A$^1$ is essentially the mirror of A$^2$, %
and that \(\ind\) is irreflexive and symmetric, since S\(^1\) is the mirror of S\(^2\).
\end{remark}

\Comment{
\noindent
\resizebox{.45\textwidth}{!}{\input{figures/transition-relations}}
}
}

	\section{\autoref{ssec:reminders-axiom}: Additional Material and Proofs}
\label{app:axioms}

\subsection{Proof of \autoref{lem:immed pred composable}}
\label{app:immed pred composable}
\immedpredcomp*
\begin{proof}%
($\Rightarrow$)
Suppose that $e_1 \ip e_2$.

We start by showing that $e_1$ is composable with $e_2$.
Let $r$ be a forward-only rooted path ending with a forward transition $t_2 \in e_2$.
Since $e_1 < e_2$ we must have a forward transition $t_1 \in e_1$ occurring before $t_2$ in $r$.
Thus $r = r't_1st_2$ for some $r',s$.
We proceed by induction on $\len s$.

If $\len s = 0$ then $e_1$ is composable with $e_2$ as required.
So suppose $\len s > 0$.
Clearly for any $t$ in $s$ we cannot have $t_1 < t < t_2$.
So transitions $t$ of $s$ fall into two groups:
\begin{enumerate}
\item $t_1 < t \not< t_2$
\item $t_1 \not< t$
\end{enumerate}
If a member $t$ of group 1 is immediately before a member $t'$ of group~2, then since $t_1 \not< t'$ we have $t \not< t'$ using transitivity of $<$.  Using polychotomy (\autoref{prop:poly}) we deduce $t \coind t'$, so that $t \ind t'$ using IRE.  Using RPI and SP, the two transitions can be swapped.
Thus if group 1 is non-empty then the last group 1 transition $t$ can be moved to immediately before $t_2$. Since $t \not< t_2$, the two transitions can be swapped and $\len s$ reduces.

So suppose that group 1 is empty.  But then the first transition $t$ of $s$ is in group 2, so that $t_1 \not< t$.  In this case $t_1$ and $t$ can be swapped, again reducing $\len s$.
Hence $e_1$ is composable with $e_2$.

Now we show not $e_1 \coind e_2$.  Since $e_1 \ip e_2$, \(e_1 < e_2\) and hence $e_1 \coind e_2$ cannot hold by polychotomy.

($\Leftarrow$)
Suppose that $e_1$ is composable with $e_2$ and not $e_1 \coind e_2$.
Let $r = st_1t_2$ be a rooted forward-only path with $t_1 \in e_1$ and $t_2 \in e_2$.
We can use polychotomy to deduce that $e_1 < e_2$.
Suppose forward event $e$ is such that $e_1 < e < e_2$.
Since $\cte(r,e_2) = 1$, we must have $\cte(r,e) = 1$.  But then there is $t \in e$ such that $t$ occurs in $r'$.  This contradicts $e_1 < e$.  Hence $e_1 \ip e_2$.
\end{proof}

\subsection{Proof of \autoref{thm:uniqueness}}
\begin{lemma}[Non-degenerate diamond~\protect{\cite[Lem. 4.7]{LPU24}}]
	\label{lem:non-degenerate}
	If an LTSI is pre-reversible and we have a diamond
	$t:P \r{fb}[\alpha] Q$, $u:P \r{fb}[\beta] R$ with $t \ind u$
	and cofinal transitions $u': Q \r{fb}[\beta] S$, %
	$t': R \r{fb}[\alpha] S$,
	then the diamond is \emph{non-degenerate},
	meaning that $P,Q,R,S$ are distinct processes.
\end{lemma}

\begin{lemma}[\protect{\cite[Prop. 4.10]{LPU24}}]\label{prop:ID}
	If an LTSI satisfies BTI and PCI then it satisfies ID.
\end{lemma}

\begin{proposition}[Uniqueness of coinitial independence]
	\label{prop:prerev-coinitial-unique}
	Suppose that $(\Proc,\Lab,\r{fb},\ind_1)$ and $(\Proc,\Lab,\r{fb},\ind_2)$ are two pre-reversible LTSIs with the same underlying combined LTS. Then $\ind_1$ and $\ind_2$ agree on coinitial transitions. 
\end{proposition}
\begin{proof}
	Suppose that $t_1$ and $t_2$ are coinitial transitions with $t_1 \ind_1 t_2$.
	By SP for $\ind_1$ we get a square.
	This is non-degenerate (all states are distinct) by \autoref{lem:non-degenerate}.
	We deduce that $t_1 \ind_2 t_2$ using ID which holds by \autoref{prop:ID}.
	By symmetry we deduce the result.
\end{proof}

\Comment{
	\begin{definition}\label{def:admit-prever}
		A combined LTS $(\Proc,\Lab,\r{fb})$ \emph{admits pre\hyph reversibility} if it satisfies WF, and an independence relation $\ind$ can be added to form an LTSI $(\Proc,\Lab,\r{fb},\ind)$ satisfying SP, BTI and PCI.
	\end{definition}
}

\uniqueness*

\begin{proof}
	The independence relation $\ind$ is determined for coinitial transitions
	by \autoref{prop:prerev-coinitial-unique}.  %
	From \autoref{def:event-general}, the equivalence on transitions %
	$\eveqt$ %
	is determined by $\ind$ on coinitial transitions.  Furthermore, it is clear from \autoref{def:event relations} that $\coind$, $\leq$
	and $\Cf$ are then uniquely determined.
\end{proof}

	\section{\autoref{sec:properties}: Additional Material and Proofs}
\subsection{Proof of \autoref{thm:axioms-hold}}
\label{app:axioms-hold}

Our goal here is to prove the following theorem:
\axiomshold*

We prove it \enquote{piecewise} below.

\begin{restatable}[\cite{Aub22,aubert2023c}]{proposition}{pccskSPBTIWF}
	\label{prop:pccsk-basic}
	The LTSI of $\pccsk$ satisfies SP, BTI and WF.
\end{restatable}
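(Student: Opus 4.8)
The plan is to prove the three properties separately, and to exploit that only one of them actually refers to the independence relation in its hypotheses. WF does not mention $\ind$, so I would establish it directly by a measure argument on keys. Inspecting the backward rules of \autoref{fig:provedltsrulesccskfw}, every backward transition $X \pr{b}[\theta] X'$ strictly decreases the size of the key set: \tRev{act} removes exactly the key $\kay{\theta}$, \tRev{syn} removes the single key shared by the two synchronising copies, and the remaining contextual rules (for prefix, restriction, parallel and sum) leave the keys of the context untouched while their premise removes one. Hence $\card{\keys{X'}} = \card{\keys{X}} - 1$. Since every process has a finite key set and $\card{\keys{\cdot}}$ is bounded below by $0$, there can be no infinite reverse computation, which is exactly WF.

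For SP I would observe that it was already proven for \pccsk in \cite{aubert2023c}, there with an independence relation $\ind_0$ obtained as the complement of a dependence relation. Correcting that dependence relation (\autoref{rem:changes}) can only enlarge it, so on coinitial transitions our directly-defined $\ind$ satisfies $\ind \subseteq \ind_0$. Because the conclusion of SP asserts only the existence of cofinal completing transitions---a statement about the LTS with no reference to $\ind$---any pair $t \ind u$ (with $t,u$ coinitial, as in SP's premise) is in particular $\ind_0$-independent, and the square furnished by \cite{aubert2023c}, whose completing transitions are forced by the proof-label rules and the uniqueness of derivations (\autoref{lem:derivation-uniqueness}), is exactly the one required. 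Thus SP for the smaller relation $\ind$ holds automatically.

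For BTI I would argue directly by structural induction on the common source $P$ of two distinct coinitial backward transitions $t: P \pr{b}[\theta_1] Q$ and $u: P \pr{b}[\theta_2] R$, matching the outermost operator of $P$. If both transitions descend into the same parallel or sum component, the labels are $\lmidd \theta, \lmidd \theta'$ or $\lplusd \theta, \lplusd \theta'$ and the induction hypothesis with P\(^1\) or C\(^1\) gives $\theta_1 \ind \theta_2$; two distinct coinitial backward transitions can never lie in opposite summands, since a partially executed sum has a single active branch. If they descend into opposite parallel components, then $\theta_1 = \lmidd \theta$ and $\theta_2 = \lmidod \theta'$ with $\kay{\theta_1} \neq \kay{\theta_2}$---because a key shared across the two components could only be a synchronisation key, undone by a single transition---so P\(^2_k\) applies. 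Under a keyed prefix both transitions are forced inside the continuation (the outer key can be undone only when the continuation is standard, in which case the backward transition is unique), and the induction hypothesis applies. Being coinitial, $t$ and $u$ are connected, so $\theta_1 \ind \theta_2$ yields $t \ind u$ by \autoref{def:ind-on-ccsk}.

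The step I expect to be the main obstacle is the parallel-synchronisation case: a priori both labels could be top-level synchronisations $\cpair{\theta_{\L}}{\theta_{\R}}$, which the rules would relate by S\(^3\). The observation I would need to make precise is that this configuration never occurs for distinct coinitial \emph{backward} transitions, since at most one synchronisation of a given parallel composition is reversible at any time (the others being buried under later keyed prefixes). Establishing this exclusion is what makes the backward case of BTI go through and, incidentally, confirms that our directly-defined $\ind$ agrees with $\ind_0$ of \cite{aubert2023c} on coinitial transitions, reconciling the two developments and justifying the transfer used for SP.
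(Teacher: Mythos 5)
Your WF argument is sound and is the paper's own (each backward step removes exactly one key from a finite key set). The other two parts each contain a genuine gap. For SP, you propose to inherit the square property from~\cite{aubert2023c} via the inclusion of the new, directly defined $\ind$ in the old complement-of-dependence relation. But \autoref{rem:changes} states precisely that the old relation \emph{violates} SP: it declares $a \Par b \pr{f}[\lmidl a][m] a[m] \Par b$ and $a \Par b \pr{f}[\lmidr b][m] a \Par b[m]$ independent although no cofinal completion to $a[m] \Par b[m]$ exists. So the premise of your transfer argument---that SP holds for the larger relation---is false, and the cited theorem cannot be invoked as a black box; your implication \enquote{SP for the larger relation gives SP for the smaller} is logically valid but vacuous here. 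What is actually required, and what the paper does, is to re-prove the three diamond-completion statements (Equations~\ref{eq:sd}--\ref{eq:sp}, covering the forward/forward, forward/backward and backward/forward orientations) by induction on derivations, using the projections \(\pi_{\D}\), \(\rho_{\D}\) and the key-removal function of \autoref{eq:rm-preserves}. This is the bulk of the proof and your proposal omits it entirely.

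For BTI your induction has essentially the paper's structure, but the exclusion you rely on in the synchronisation case is false: two distinct coinitial backward transitions of the same parallel composition can both be top-level synchronisations. From the reachable process $(a[m] \Par b[n]) \Par (\out{a}[m] \Par \out{b}[n])$, both the $m$- and the $n$-synchronisation across the outer $\Par$ can be undone, and neither is buried under a later keyed prefix; the same process (replacing, say, $\out{b}[n]$ by $c[n]$) shows that the mixed case of a single-component move against a synchronisation also arises. These cases are not vacuous; they are discharged, as in the paper, by projecting onto the two components, applying the induction hypothesis there, and reassembling independence with S\(^3\) (respectively S\(^1\)/S\(^2\)). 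Your key-disjointness claim for opposite parallel components is correct, but for a more direct reason than the one you give: the side condition $\kay{\theta} \notin \keys{Y}$ on the backward parallel rules already forbids the key being undone on one side from occurring on the other.
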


\begin{proof}
	SP~\cite[Theorem 2]{aubert2023c}, BTI~\cite[Lemma 10]{aubert2023c} and WF~\cite[Lemma 11]{aubert2023c} for \pccsk had already been proved, but a complete proof---adapted to our direct definition of independence---is given in \autoref{ssect:pccsk-basic} %
	for completeness.
\end{proof}
\begin{definition}[LLG~{\cf\cite[Def. 6.11]{LPU24}}]\label{def:LLG}
	An LTSI is \emph{locally label-generated (LLG)} if there is an irreflexive binary relation $I$ on $\Lab$ such that for any 
	transitions $t:P \r{fb}[\alpha] Q$ and $u:R \r{fb}[\beta] S$
	we have $t \ind u$ iff $t,u$ are connected and \(I(\und{\alpha}, \und{\beta})\).
\end{definition}
\begin{proposition}
	\label{prop:LLG}
	If an LTSI is LLG then it satisfies PCI, IRE and RPI.
\end{proposition}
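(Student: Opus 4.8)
The plan is to distill a single structural fact out of the LLG hypothesis and then read off all three properties from it. First I would reformulate connectedness of transitions as co-reachability of processes. Call two processes \emph{co-reachable} if there is a path between them. By the Loop Lemma (built into \autoref{def:ltsi}) every path reverses, so co-reachability is reflexive (the empty path), symmetric (reversal) and transitive (concatenation), \ie an equivalence on processes; moreover each transition $t$ is a length-one path, so $\srcof t$ and $\tgtof t$ are always co-reachable. Unwinding \autoref{def:transitions-paths}, transitions $t,u$ are connected iff $\srcof t$ and $\tgtof u$ are co-reachable, which—using the two facts just noted—is equivalent to all four endpoints $\srcof t,\tgtof t,\srcof u,\tgtof u$ lying in one co-reachability class. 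This reformulation makes connectedness manifestly symmetric in $t,u$ and, since $\rev t$ has the same pair of endpoints as $t$, invariant under reversing either argument.

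RPI and PCI are then immediate. For RPI, write $\alpha,\alpha'$ for the labels of $t,t'$; if $t \ind t'$ then by LLG $t,t'$ are connected and $I(\und\alpha,\und{\alpha'})$. Since $\rev t$ has the same endpoints as $t$ and $\und{\rev\alpha}=\und\alpha$, the pair $\rev t,t'$ is still connected with the same underlying labels, whence $\rev t \ind t'$. For PCI, in the square $t:P\r{fb}[\alpha]Q$, $u:P\r{fb}[\beta]R$, $u':Q\r{fb}[\beta]S$, $t':R\r{fb}[\alpha]S$ the four transitions link $P,Q,R,S$ into a single co-reachability class, so $u'$ and $\rev t$, both issuing from $Q$, are connected. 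From $t\ind u$ and symmetry of $\ind$ we obtain $u \ind t$, hence $I(\und\beta,\und\alpha)$ by LLG; as $u'$ has label $\beta$ and $\rev t$ has label $\rev\alpha$ with $\und{\rev\alpha}=\und\alpha$, this yields $u'\ind\rev t$.

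IRE requires knowing that the event equivalence $\sqeqt$ respects the data on which $\ind$ depends. I would prove, by induction on the generation of $\sqeqt$ in \autoref{def:event-general}, that $t\sqeqt t'$ implies both (i) $t,t'$ carry the same underlying label, and (ii) $t,t'$ are connected. The generating clause relates $t$ with $t'$ (and $\rev t$ with $t'$) across a square whose four states form a single co-reachability class and whose opposite sides carry the same underlying label, so it satisfies (i) and (ii); both are equivalence relations on transitions containing the generator, hence contain $\sqeqt$. Granting this, suppose $t\sqeqt t'\ind u$, with $t,t',u$ carrying labels $\alpha,\alpha',\gamma$. By (ii) the endpoints of $t$ and $t'$ share a co-reachability class, and $t',u$ are connected, so by transitivity $t,u$ are connected; by (i) $\und\alpha=\und{\alpha'}$, so $I(\und\alpha,\und\gamma)=I(\und{\alpha'},\und\gamma)$ holds because $t'\ind u$. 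Hence $t\ind u$.

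The main obstacle is this IRE step, specifically the inductive verification that $\sqeqt$ preserves connectedness: it does not suffice that opposite sides of a single square are connected, one must check the property survives the reflexive–symmetric–transitive closure, and this is precisely where the co-reachability reformulation earns its keep, by turning connectedness into an equivalence relation the closure cannot escape. The label-preservation half of (i) is routine, and RPI and PCI amount to bookkeeping once connectedness is recognised as an endpoint-class condition.
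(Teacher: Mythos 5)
Your proof is correct and follows essentially the same route as the paper: the paper's own proof is a one-line deferral to the label-generated case of \cite[Prop.\ 6.12]{LPU24} plus the observation that ``PCI, IRE and RPI preserve connectedness of transitions,'' and your co-reachability reformulation (connectedness as an equivalence-class condition on endpoints, invariant under reversal, squares, and the generators of $\sqeqt$) together with the invariance of underlying labels is precisely the content being invoked there. You have simply written out in full the details the paper leaves to the citation.
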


\begin{proof}
	As the proof of \cite[Prop. 6.12]{LPU24}, given that PCI, IRE and RPI preserve connectedness of transitions.
\end{proof}
\begin{proposition}\label{prop:proved LLG}
The LTSI of \pccsk is LLG.
\end{proposition}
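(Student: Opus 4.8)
The plan is to exhibit the required irreflexive relation on labels explicitly: take $I$ to be the independence relation $\ind$ on proof keyed labels from \autoref{def:relation-proof-labels}. Since the label set of the LTSI of \pccsk is $\kplabelset$, this $I$ is a binary relation of the right type, and it is irreflexive (indeed symmetric) by \autoref{rem:ind-irreflexivity}. It then remains to check that $I$ witnesses local label-generation in the sense of \autoref{def:LLG}, i.e.\ that for all transitions $t : P \pr{fb}[\alpha] Q$ and $u : R \pr{fb}[\beta] S$ one has $t \ind u$ iff $t,u$ are connected and $I(\und{\alpha}, \und{\beta})$.

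First I would record the only notational point that needs care: for a \pccsk transition $t : P \pr{fb}[\alpha] Q$, the proof label $\lblof{t}$ carried by $t$ is exactly the underlying label $\und{\alpha}$. Indeed, the forward and backward rules of \autoref{fig:provedltsrulesccskfw} attach the same proof label to a transition and to its reverse, so by the Loop Lemma (\autoref{lem:loop_proved}) reversal leaves $\lblof{\cdot}$ unchanged; hence independence of transitions, which by \autoref{def:ind-on-ccsk} depends only on $\lblof{\cdot}$ (together with connectedness), is insensitive to direction and depends on the directed label $\alpha$ only through $\und{\alpha}$.

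With this identification the biconditional is immediate by unfolding \autoref{def:ind-on-ccsk}: $t \ind u$ holds precisely when $t$ and $u$ are connected and $\lblof{t} \ind \lblof{u}$, that is, when $t$ and $u$ are connected and $\und{\alpha} \ind \und{\beta}$, which is exactly the condition ``$t,u$ connected and $I(\und{\alpha}, \und{\beta})$''. This is the defining property of LLG, so the LTSI of \pccsk is LLG.

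I expect essentially no obstacle here beyond aligning the two presentations of independence: the substantive work has already been discharged, since $\ind$ was \emph{defined} directly on proof labels in \autoref{def:relation-proof-labels} (so the LTSI is locally label-generated by construction) and its irreflexivity is \autoref{rem:ind-irreflexivity}. The one step worth singling out is the equality $\lblof{t} = \und{\alpha}$, which guarantees that the direction of a transition plays no role in $I$; once that is in hand the result follows by pure unfolding of the definitions.
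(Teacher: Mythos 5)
Your proof is correct and takes the same route as the paper, which simply notes that the claim is immediate from \autoref{def:ind-on-ccsk} once $I$ is taken to be $\ind$ on proof labels (irreflexive by \autoref{rem:ind-irreflexivity}). Your additional observation that $\lblof{t}=\und{\alpha}$ and that reversal preserves the proof label is a reasonable piece of bookkeeping the paper leaves implicit.
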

\begin{proof}
Immediate from \autoref{def:ind-on-ccsk}.
\end{proof}

\begin{proposition}\label{prop:pccsk IRE}
	The LTSI of $\pccsk$ satisfies PCI, IRE and RPI. %
\end{proposition}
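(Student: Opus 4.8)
The plan is to derive this proposition as an immediate corollary of the two results that immediately precede it, namely \autoref{prop:proved LLG} (the LTSI of \pccsk is LLG) and \autoref{prop:LLG} (any LLG LTSI satisfies PCI, IRE and RPI). Concretely, I would first invoke \autoref{prop:proved LLG} to obtain that the LTSI of \pccsk is locally label-generated, witnessed by the irreflexive relation $I$ on $\Lab$ required by \autoref{def:LLG}. Here the witness is read off from \autoref{def:ind-on-ccsk}: two transitions are independent exactly when they are connected and their underlying labels are $I$-related, where $I$ is determined by the rules for $\ind$ on proof labels. I would then apply \autoref{prop:LLG} to this LLG structure to conclude directly that PCI, IRE and RPI all hold.

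Thus the argument is a one-step composition: $\pccsk \text{ is LLG}$ (\autoref{prop:proved LLG}) together with $\text{LLG} \Rightarrow \text{PCI}, \text{IRE}, \text{RPI}$ (\autoref{prop:LLG}) yields the claim. No new calculation is needed at this stage, because all the genuine work has been pushed into the preceding propositions: \autoref{prop:LLG} itself is established by transporting the corresponding result \cite[Prop. 6.12]{LPU24}, checking that PCI, IRE and RPI each preserve connectedness of transitions, and \autoref{prop:proved LLG} holds because our definition of $\ind$ (\autoref{def:ind-on-ccsk}) already factors independence through connectedness plus a label-only condition.

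I do not expect any serious obstacle here, since the proposition is purely a packaging result. If there is a subtle point to watch, it is ensuring that the relation $I$ used to witness LLG is genuinely irreflexive and depends only on the \emph{underlying} labels $\und\alpha,\und\beta$ rather than on the direction of the transitions; this is exactly the content guaranteed by \autoref{rem:ind-irreflexivity} (irreflexivity and symmetry of $\ind$) and by the symmetric treatment of forward and backward transitions in \autoref{def:ind-on-ccsk}. Given that this has effectively already been checked in proving \autoref{prop:proved LLG}, the proof reduces to citing the two propositions in sequence.

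\begin{proof}
By \autoref{prop:proved LLG}, the LTSI of \pccsk is LLG. By \autoref{prop:LLG}, every LLG LTSI satisfies PCI, IRE and RPI. Hence the LTSI of \pccsk satisfies PCI, IRE and RPI.
\end{proof}
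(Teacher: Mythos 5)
Your proposal is correct and matches the paper's own proof exactly: both derive the result by combining \autoref{prop:proved LLG} (the LTSI of \pccsk is LLG) with \autoref{prop:LLG} (LLG implies PCI, IRE and RPI). In fact your citation is cleaner than the paper's, which mistakenly references the proposition itself where it means \autoref{prop:proved LLG}.
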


\begin{proof}
	Since the LTSI of \pccsk is LLG by \autoref{prop:pccsk IRE}, it satisfies PCI, IRE and RPI by \autoref{prop:LLG}.
	This was essentially already observed in~\cite[Sect.~6.2]{LPU24}, using a slightly different definition of label-generated.
\end{proof}

To transfer axioms from \pccsk to \ccsk we use the bijection \(\base{\cdot} = {(\prov{\cdot}})^{-1}\) %
 of \autoref{def:ccsk-pccsk-bijection}.

\begin{proposition}\label{prop:ccsk basic}
	The LTSI of $\ccsk$ satisfies SP, BTI and WF.
\end{proposition}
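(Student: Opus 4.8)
The plan is to transfer each of the three properties from the LTSI of \pccsk, where they hold by \autoref{prop:pccsk-basic}, to the LTSI of \ccsk, using the bijection $\base{\cdot} = (\prov{\cdot})^{-1}$ between \ccsk and \pccsk transitions from \autoref{def:ccsk-pccsk-bijection}. The first thing I would record is that this bijection preserves all the data these properties speak about: since $\prov{\cdot}$ maps $X_1 \r{fb}[\alpha][k] X_2$ to $X_1 \pr{fb}[\theta] X_2$ with $\labl{\theta} = \alpha$ and $\kay{\theta} = k$, it preserves source, target, direction (forward versus backward), underlying label and key, and being a bijection it also preserves $\neq$. Moreover, because paths correspond under the bijection, $t,u$ are connected in \ccsk iff $\prov{t},\prov{u}$ are connected in \pccsk; and by \autoref{def:ind-on-ccsk} we have $t \ind u$ in \ccsk exactly when $\prov{t},\prov{u}$ are connected with $\lblof{\prov{t}} \ind \lblof{\prov{u}}$, i.e. exactly when $\prov{t} \ind \prov{u}$ in \pccsk. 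Thus $\prov{\cdot}$ behaves as an isomorphism of LTSIs, and it suffices to push each property through it.

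For SP: given coinitial $t: P \r{fb}[\alpha] Q$ and $u: P \r{fb}[\beta] R$ in \ccsk with $t \ind u$, I would apply $\prov{\cdot}$ to obtain coinitial $\prov{t},\prov{u}$ in \pccsk with $\prov{t} \ind \prov{u}$. SP for \pccsk (\autoref{prop:pccsk-basic}) then yields cofinal \pccsk transitions carrying the same proof labels as $\prov{u}$ and $\prov{t}$ respectively, and applying $\base{\cdot}$ returns cofinal \ccsk transitions $u': Q \r{fb}[\beta] S$ and $t': R \r{fb}[\alpha] S$: indeed $\base{\cdot}$ sends a proof label $\theta$ to $\labl{\theta}[\kay{\theta}]$, and the proof labels of $\prov{u}$ and $\prov{t}$ forget to $\beta$ and $\alpha$. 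This establishes SP for \ccsk.

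For BTI: given distinct coinitial backward transitions $t: P \r{b}[a] Q$ and $t': P \r{b}[b] Q'$ in \ccsk, their enrichments $\prov{t},\prov{t'}$ are distinct coinitial backward \pccsk transitions, so BTI for \pccsk gives $\prov{t} \ind \prov{t'}$; since $t,t'$ are coinitial they are connected (\autoref{def:transitions-paths}), and unfolding \autoref{def:ind-on-ccsk} yields $t \ind t'$. For WF, the property mentions only the forward and backward relations and not independence; since $\prov{\cdot}$ is a direction-preserving bijection on transitions, any infinite reverse \ccsk computation would transport to an infinite reverse \pccsk computation, contradicting WF for \pccsk, so \ccsk is WF as well.

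The routine content is the verification that the bijection preserves the relevant structure; the only point needing care, and the main (minor) obstacle, is checking in the SP transfer that the proof labels produced by SP in \pccsk forget to exactly the \ccsk directed labels $\alpha,\beta$, so that the resulting \ccsk square has the labels demanded by SP. This is immediate once one notes that the LTSI instantiation of \pccsk takes proof labels as its labels, so SP there returns opposite transitions carrying the same proof labels, which $\base{\cdot}$ then maps to the required \ccsk labels.
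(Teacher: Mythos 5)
Your proposal is correct and follows essentially the same route as the paper: transfer SP and BTI through the bijection $\base{\cdot} = (\prov{\cdot})^{-1}$, using that $\prov{\cdot}$ preserves coinitiality, direction, labels and independence. The only (inessential) divergence is WF, which the paper establishes directly from the finiteness of $\keys{X}$ rather than by transporting an infinite reverse computation to \pccsk; both arguments work.
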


\begin{proof}
	This follows essentially from \autoref{prop:pccsk-basic}:
	\begin{description}
		\item[SP] Let $t$ and $u$ be coinitial \ccsk transitions such that \(t \ind u\).  Then \(\prov{t}\) and \(\prov{u}\) are coinitial transitions in \pccsk with \(\prov{t} \ind \prov{u}\). By SP for \pccsk there exist cofinal transitions \(u'\) and \(t'\), from which we obtain the desired cofinal transitions \(\base{u'}\) and \(\base{t'}\).
		\item[BTI] Given coinitial backward transitions $t$, $t'$, BTI for \pccsk gives $\prov{t} \ind \prov{t'}$, which implies $t \ind t'$.
		\item[WF] The absence of infinite reverse computation follows from the finite number of keys in processes. \qedhere
	\end{description}
\end{proof}

The event equivalences obtained when instantiating \autoref{def:event-general} to \pccsk and \ccsk are preserved by their isomorphism.%

\begin{proposition}
\label{prop:eveqt-proved-base}
For all \(t, u\) in \ccsk (\resp \pccsk)
\begin{align*}
	t \eveqt u \implies \prov{t} \eveqt \prov{u} && \text{(\resp~ } t \eveqt u \implies \base{t} \eveqt \base{u}\text{)}
\end{align*}
\end{proposition}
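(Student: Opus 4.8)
The plan is to exploit the fact that the pair $\base{\cdot},\prov{\cdot}$ from \autoref{def:ccsk-pccsk-bijection} is an \emph{isomorphism of LTSIs}, together with the observation that event equivalence (\autoref{def:event-general}) is generated purely from LTSI data: the transition relation, the inverse operation $\rev{\cdot}$, and the independence relation $\ind$. First I would record the three structural facts that make this precise. (i) $\prov{\cdot}$ and $\base{\cdot}$ are mutually inverse bijections preserving source, target and direction, so they commute with the inverse operation, $\prov{\rev{t}} = \rev{\prov{t}}$ (using \autoref{lem:loop_proved}). (ii) They preserve connectedness, since a path maps transition-by-transition to a path with the same composability pattern, hence the same source and target. (iii) They preserve independence: by \autoref{def:ind-on-ccsk}, for \ccsk transitions $t_1,t_2$ we have $t_1 \ind t_2$ iff $t_1,t_2$ are connected and $\lblof{\prov{t_1}} \ind \lblof{\prov{t_2}}$, which by (ii) is exactly the condition defining $\prov{t_1} \ind \prov{t_2}$ in \pccsk; thus $t_1 \ind t_2$ iff $\prov{t_1} \ind \prov{t_2}$.

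To prove the \ccsk direction I would set $\mathcal{R} = \{(t,u) : \prov{t} \eveqt \prov{u}\}$ on \ccsk transitions and show $\eveqt \,\subseteq\, \mathcal{R}$. Since \ccsk's $\eveqt$ is the \emph{least} equivalence relation closed under the square rule of \autoref{def:event-general}, it suffices to check that $\mathcal{R}$ is an equivalence relation (immediate, being the $\prov{\cdot}$-preimage of the equivalence $\eveqt$ on \pccsk) and that it is closed under that rule. For the latter, take a \ccsk square $t: P \r{fb}[\alpha] Q$, $u: P \r{fb}[\beta] R$, $u': Q \r{fb}[\beta] S$, $t': R \r{fb}[\alpha] S$ with $t \ind u$. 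Applying $\prov{\cdot}$ yields a \pccsk diamond on the coinitial pair $\prov{t},\prov{u}$, which is independent by (iii); I then need to recognise this diamond as a genuine \pccsk square in order to invoke the \pccsk generating rule and obtain $\prov{t} \eveqt \prov{t'}$ and $\rev{\prov{t}} \eveqt \prov{t'}$, that is $(t,t'),(\rev{t},t') \in \mathcal{R}$ (using (i) for the reversed transition). The \pccsk direction is entirely symmetric, replacing $\prov{\cdot}$ by its inverse $\base{\cdot}$.

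The main obstacle is exactly this recognition step: the \pccsk instance of \autoref{def:event-general} only relates transitions carrying \emph{equal proof labels} on opposite sides of the diamond, whereas $\prov{t}$ and $\prov{t'}$ a priori share only the same keyed label (distinct proof labels may erase to the same keyed label, as for the two branches of $a+a$). I would discharge this by applying SP for \pccsk (\autoref{prop:pccsk-basic}) to the coinitial independent pair $\prov{t}: P \pr{fb}[\theta_1] Q$, $\prov{u}: P \pr{fb}[\theta_2] R$: SP completes them to a square whose remaining sides carry the \emph{same} proof labels $\theta_2$ and $\theta_1$, namely $v: Q \pr{fb}[\theta_2] S'$ and $w: R \pr{fb}[\theta_1] S'$. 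It then remains to identify this SP-completion with the $\prov{\cdot}$-image of the original \ccsk square, i.e.\ to show $w = \prov{t'}$; this follows from uniqueness of the completing transition (the fourth transition of a reversible diamond is determined by the other three, via determinism of transitions in source and proof label, itself a consequence of the unique-derivation \autoref{lem:derivation-uniqueness}), applied after mapping $w$ back through $\base{\cdot}$ and comparing with the given \ccsk transition $t'$. Once this identification is made, the completing side is literally $\prov{t'}$, its proof label equals $\theta_1 = \lblof{\prov{t}}$, so the \pccsk square rule applies and the argument closes.
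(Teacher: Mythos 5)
Your overall strategy is the paper's own: event equivalence is the least equivalence generated by independent squares, so it suffices to check that the generating squares and the independence relation are carried across by \(\prov{\cdot}\) and \(\base{\cdot}\) --- the paper's proof is precisely this observation, compressed into one sentence. Your facts (i)--(iii) and the least-equivalence argument via \(\mathcal{R}\) are correct, and you have rightly isolated the one point the paper treats as trivial: a \ccsk square only guarantees equal \emph{keyed} labels on opposite sides, so one must check that the images of opposite sides under \(\prov{\cdot}\) carry equal \emph{proof} labels before the \pccsk generating rule can be invoked.

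Your discharge of that point, however, does not close as written. After applying SP to \(\prov{t},\prov{u}\) you obtain a square on \(P,Q,R,S'\) with completing transitions \(v,w\), and you must identify \(w\) with \(\prov{t'}\), where \(t':R \r{fb}[\alpha] S\) is the fourth side of the \emph{given} square. The determinism you invoke applies to neither available comparison: \(\base{w}\) and \(t'\) share only their source and keyed label, and \ccsk is not deterministic in source and keyed label (consider \(a \Par a\)); while \(w\) and \(\prov{t'}\) share only their source, since equality of their proof labels is exactly what is at stake. The principle that the fourth transition of a diamond is determined by the other three only helps once you know the third sides agree, i.e.\ \(v = \prov{u'}\), which is the same problem one edge over; equivalently you need \(S' = S\), and nothing you have said forces the SP-completion to land on the target of the given square. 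The missing ingredient is uniqueness of the commuting completion of the coinitial pair, and it comes from a key/position argument rather than from \autoref{lem:derivation-uniqueness} alone: \(S\) can be written both as \(P\) with \(\key{t}\) inserted at the position recorded by \(\lblof{\prov{t}}\) and then \(\key{u}\) at the position recorded by \(\lblof{\prov{u'}}\), and as \(P\) with \(\key{u}\) at \(\lblof{\prov{u}}\)'s position and then \(\key{t}\) at \(\lblof{\prov{t'}}\)'s position; since \(\key{t} \neq \key{u}\) (\autoref{lem:ind-diff-keys}) each key occupies a uniquely determined set of positions in \(S\), whence \(\lblof{\prov{t}} = \lblof{\prov{t'}}\) and \(\lblof{\prov{u}} = \lblof{\prov{u'}}\). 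This shows directly that the \(\prov{\cdot}\)-image of the given \ccsk square is already a \pccsk generating square, which both fills the gap and renders the SP detour unnecessary; with that replacement your proof is complete and coincides with the paper's (much terser) argument.
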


\begin{proof}%
Being on opposite sides of a diamond as in \autoref{def:event-general} is trivially preserved by \(\prov{\cdot}\) and \(\base{\cdot}\), and so is \(\ind\) by \autoref{def:ind-on-ccsk}.
\end{proof}

\begin{proposition}\label{prop:ccsk axioms}
The LTSI of \ccsk satisfies PCI, IRE and RPI.
\end{proposition}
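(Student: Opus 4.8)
The plan is to transfer each of the three properties from \pccsk---where they hold by \autoref{prop:pccsk IRE}---to \ccsk along the bijection $\prov{\cdot}$, $\base{\cdot}$ of \autoref{def:ccsk-pccsk-bijection}, exactly as SP, BTI and WF were transferred in \autoref{prop:ccsk basic}. First I would record the structural facts that make this work: the bijection fixes the source, target and direction of every transition; by \autoref{def:ind-on-ccsk} it satisfies $t \ind u$ in \ccsk iff $\prov{t} \ind \prov{u}$ in \pccsk (and symmetrically via $\base{\cdot}$); consequently it preserves connectedness (applying $\prov{\cdot}$ pointwise sends a \ccsk path to a \pccsk path with the same endpoints, composability being preserved because endpoints are fixed) and commutes with inversion, $\prov{\rev{t}} = \rev{\prov{t}}$, since reversal merely swaps source and target. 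Event equivalence is preserved in both directions by \autoref{prop:eveqt-proved-base}.

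With these in hand, each property becomes a short diagram chase. For RPI I would assume $t \ind t'$ in \ccsk, lift to $\prov{t} \ind \prov{t'}$, apply RPI for \pccsk to get $\rev{\prov{t}} \ind \prov{t'}$, rewrite $\rev{\prov{t}} = \prov{\rev{t}}$, and push back through $\base{\cdot}$ to conclude $\rev{t} \ind t'$. For PCI I would take a \ccsk square $t:P \r{fb}[\alpha] Q$, $u:P \r{fb}[\beta] R$, $u':Q \r{fb}[\beta] S$, $t':R \r{fb}[\alpha] S$ with $t \ind u$; applying $\prov{\cdot}$ yields the corresponding \pccsk square (endpoints and independence preserved) with $\prov{t} \ind \prov{u}$, PCI for \pccsk gives $\prov{u'} \ind \rev{\prov{t}} = \prov{\rev{t}}$, and $\base{\cdot}$ returns $u' \ind \rev{t}$. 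For IRE I would assume $t \eveqt t' \ind u$ in \ccsk, use \autoref{prop:eveqt-proved-base} together with \autoref{def:ind-on-ccsk} to obtain $\prov{t} \eveqt \prov{t'}$ and $\prov{t'} \ind \prov{u}$, apply IRE for \pccsk to get $\prov{t} \ind \prov{u}$, and conclude $t \ind u$.

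The only substantive content lies in the preservation facts collected in the first paragraph, all of which are already available (\autoref{def:ccsk-pccsk-bijection}, \autoref{def:ind-on-ccsk} and \autoref{prop:eveqt-proved-base}), so I do not expect a genuine obstacle. The point requiring the most care is checking that $\prov{\cdot}$ carries a commuting square together with its hypothesis $t \ind u$ to a commuting square in \pccsk, but this is immediate: since $\prov{\cdot}$ fixes all four states and preserves $\ind$, the four enriched transitions automatically form the required diagram.
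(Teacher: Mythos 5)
Your proposal is correct and follows essentially the same route as the paper: transfer PCI, IRE and RPI from \pccsk along the bijection of \autoref{def:ccsk-pccsk-bijection}, using \autoref{def:ind-on-ccsk} for independence and \autoref{prop:eveqt-proved-base} for event equivalence in the IRE case. The paper's proof is just a terser version of your diagram chases (it spells out only IRE and says PCI and RPI go ``much as for SP''), and additionally remarks that one cannot shortcut via \autoref{prop:LLG} since the LTSI of \ccsk is not LLG---a point you implicitly respect by not attempting it.
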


\begin{proof}
By transferring the axioms from \pccsk.  Note that we cannot use \autoref{prop:LLG}, since the LTSI of \ccsk is not LLG.

For PCI, we use mappings $\prov{\cdot}$ and $\base{\cdot}$, much as for the proof of SP (\autoref{prop:ccsk basic}).

For IRE, suppose $t \eveqt t' \ind u$.  Then $\prov t \eveqt \prov{t'} \ind \prov u$, using \autoref{prop:eveqt-proved-base}.  By IRE for \pccsk we get $\prov t \ind \prov u$, implying $t \ind u$ as required.

For RPI, we use mappings $\prov{\cdot}$ and $\base{\cdot}$, much as for PCI.
\end{proof}

We finally have all the elements in place to prove \autoref{thm:axioms-hold}:

\begin{proof}[Proof of \autoref{thm:axioms-hold}]
By Propositions~\ref{prop:pccsk-basic}, \ref{prop:pccsk IRE}, \ref{prop:ccsk basic} and \ref{prop:ccsk axioms}.
\end{proof}

\subsubsection{Proof of \autoref{prop:pccsk-basic}}
\label{ssect:pccsk-basic}

We now provide a complete proof of \autoref{prop:pccsk-basic}.
As in the first papers that proved this result~\cite{Aub22,aubert2023c}, the main challenge is to prove SP. \label{p:complete-proof}

\begin{remark}[Differences with original proof]
\label{rem:changes}
Our proof below of SP and BTI have some differences with the original proof. %
Indeed, independence (originally called \enquote{concurrency}~\cite[Deﬁnition 10]{aubert2023c}) was \emph{defined} by complementarity, instead of being \emph{proved} complementary, as we do now with \autoref{thm:complementarity}.
Providing a direct definition solves a minor problem with the original definition, which considered \eg
\begin{align*}
	a \Par b & \pr{f}[\lmidl a][m] a[m] \Par b && \text{ and } & a \Par b & \pr{f}[\lmidr b][m] a \Par b[m]
\end{align*}
independent, even if there cannot be cofinal transitions to \(a[m] \Par b[m]\), hence violating SP.
This adjustment primarily impacts the proof of BTI and of Equations~\ref{eq:sd}--\ref{eq:sp}, presented below.
\end{remark}

However, the proof of SP, presented p.~\pageref*{proof:pccskSPBTIWF}, requires the same three main ingredients:
\begin{enumerate}
\item %
The following three implications~\cite[Lemmas 7--9]{aubert2023c},
\begin{align}
	X \pr{f}[\theta_1] X_1 \pr{f}[\theta_2] Y \text{ with } \theta_1 \ind \theta_2 %
	& \implies \exists X_2, X \pr{f}[\theta_2] X_2 \pr{f}[\theta_1] Y \label{eq:sd}\\
	X \pr{f}[\theta_1] X_1 \pr{b}[\theta_2] Y\text{ with }\theta_1 \ind \theta_2 %
	& \implies \exists X_2, X \pr{b}[\theta_2] X_2 \pr{f}[\theta_1] Y \label{eq:rd}\\
	X \pr{b}[\theta_1] X_1 \pr{f}[\theta_2] Y\text{ with }\theta_1 \ind \theta_2 %
	& \implies \exists X_2, X \pr{f}[\theta_2] X_2 \pr{b}[\theta_1] Y \label{eq:sp}
\end{align}
which treats separately the various combinations of forward and backward transitions needed to facilitate the proof of SP.
Considering the changes discussed in \autoref{rem:changes}, we give the adjusted proofs pp.~\pageref*{proof:eq:sd}--\pageref*{proof:eq:sp}.
\item \label{item:def-projection} In turn, Equations~\ref{eq:sd}--\ref{eq:sp}, as well as BTI, require four functions on paths, \(\pi_{\D}\) and \(\rho_{\D}\), for \(\D \in \{\L, \R\}\), that projects a transition originating from two processes in parallel (\eg \(X \Par Y\)), or summed (\eg \(X + Y\)), respectively, onto its component on the \(\D\) side\footnote{%
	\Eg \(\pi_{\L} (a \Par X \pr{f}[\lmidl a][k] a[k] \Par X) = a \pr{f}[a][k] a[k])\) and \(\rho_{\R}(X + a[k] \pr{b}[\lplusr a][k] X + a) = a[k] \pr{b}[a][k] a\).	
}%
, and a lemma stating that this extraction preserves independence~\cite[Sect. 4.1]{aubert2023c}:
\begin{align}
	& \forall \D \in \{\L, \R\}, r: X_1 \pr{fb}[\theta_1] X_2 \pr{bf}[\theta_2] X_3, \theta_1 \ind \theta_2 \notag \\ 
	& \quad \pi_{\D}(r): \pi_{\D}(X_1) \pr{fb}[\pi_{\D}(\theta_1)] \pi_{\D}(X_2) \pr{bf}[\pi_{\D}(\theta_2)] \pi_{\D}(X_3) \text{ is defined} \notag \\
	& \quad \text{(\resp ~}\rho_{\D}(r): \rho_{\D}(X_1) \pr{fb}[\rho_{\D}(\theta_1)] \rho_{\D}(X_2) \pr{bf}[\rho_{\D}(\theta_2)] \rho_{\D}(X_3) \text{ is defined)} \notag \\
	& \qquad \implies \pi_{\D}(\theta_1) \ind \pi_{\D}(\theta_2) \text{ (\resp ~} \rho_{\D}(\theta_1) \ind \rho_{\D}(\theta_2) \text{)} \label{eq:extract}
\end{align}	
We use the exact same definition and lemma, but please the reader to note that in our case, proving \autoref{eq:extract} is immediate due to our direct definition of \(\ind\) in \autoref{fig:pccsk-relations}: C\(^{1}\) and P\(^{1}\) provide this lemma immediately.
\item Also, Equations~\ref{eq:sd}--\ref{eq:rd} require the definition of a \emph{removal function \(\rem_{k}^{\alpha}\)}~\cite[Def. 8]{aubert2023c} that removes occurrences of \(\alpha[k]\) and \(\out{\alpha}[k]\) in a process, along with a simple lemma~\cite[Lemma 3]{aubert2023c} proving that this function preserves derivability under some conditions on keys:
\begin{align}
	\forall X, \alpha, k, \theta, & \kay{\theta} \neq k \text{ and } k \notin \key{\rem_{k}^{\alpha}(X)} \implies \notag\\
	& (X \pr{fb}[\theta] Y  \iff \rem_{k}^{\alpha}(X) \pr{fb}[\theta] \rem_{k}^{\alpha}(Y)) \label{eq:rm-preserves}	
\end{align}
We use the exact same definition and proof of \autoref{eq:rm-preserves}.
\end{enumerate}

\begin{proof}[Proof of \autoref{eq:sd}]
\label{proof:eq:sd}
In short, the proof proceeds by induction on the length of the deduction for the derivation of \(X \pr{f}[\theta_1] X_1\), using \autoref{eq:extract} to enable the induction hypothesis if \(\theta_1\) is not a prefix.
The proof requires a particular care when \(X\) is not standard, more particularly if the last rule is pre, but \autoref{eq:rm-preserves} provides just what is needed to deal with this case.

The proof proceeds by induction on the length of the deduction for the derivation of \(X \pr{f}[\theta_1] X_1\).
\begin{description}
	\item[Length \(1\)]
	In this case, the derivation is a single application of act, and \(\theta_1\) is of the form \(\alpha[k]\). 
	But \(\alpha[k] \ind \theta_2\) cannot hold by \autoref{lem:ind-alpha}, so this case is vacuously true.
	\item[Length \(> 1\)]
	We proceed by case on the last rule.
	\begin{description}
		\item[pre] There exists \(\alpha\), \(k\), \(X'\) and \(X_1'\) \st  \(X = \alpha[k].X' \pr{f}[\theta_1] \alpha[k].X_1' = X_1\) and \(\kay{\theta_1} \neq k\).
		As \(\alpha[k].X_1' \pr{f}[\theta_2] Y\) we know that \(\kay{\theta_2} \neq k\)~\cite[Lemma 3.4]{LaneseP21}.
		Furthermore, since \(k\) occurs attached to \(\alpha\) in \(X_1\) and since \(X_1\) makes a \emph{forward} transition to reach \(Y\), \(k \notin \keys{\rem^{\alpha}_k(X_1)} \cup \keys{\rem^{\alpha}_k(Y)}\).
		Hence, we can apply \autoref{eq:rm-preserves} from left to right twice to obtain
		\[\rem^{\alpha}_k (\alpha[k].X') = X' \pr{f}[\theta_1] \rem^{\alpha}_k (\alpha[k].X'_1) = X'_1 \pr{f}[\theta_2] \rem^{\alpha}_k (Y)\]
		As \(\theta_1 \ind \theta_2\) by hypothesis, we can use the induction hypothesis to obtain that there exists \(X_2\) \st  \(X' \pr{f}[\theta_2] X_2 \pr{f}[\theta_1] \rem^{\alpha}_k (Y)\).
		Since \(\kay{\theta_2} \neq k\), we can append pre\ to the derivation of \(X' \pr{f}[\theta_2] X_2\) to obtain \(\alpha[k] . X' = X \pr{f}[\theta_2] \alpha[k]. X_2\).
		Using \autoref{eq:rm-preserves} again, but from right to left, we obtain that \(\rem^{\alpha}_k(\alpha[k].X_2) = X_2 \pr{f}[\theta_1] \rem^{\alpha}_k(Y)\) implies \(\alpha[k].X_2 \pr{f}[\theta_1] Y\), which concludes this case.
		\item[res] This is immediate by induction hypothesis.
		\item[\(\mathbf{\lmidl}\)]
		There exists \(X_{\L}\), \(X_{\R}\), \(\theta_1'\), \(X_{1_{\L}}\), and \(Y_{\L}\), \(Y_{\R}\) \st  \(X \pr{f}[\theta_1] X_1 \pr{f}[\theta_2] Y\) is
		\[X_{\L} \mid X_{\R} \pr{f}[\lmidl \theta_1'] X_{1_{\L}} \mid X_{\R} \pr{f}[\theta_2] Y_{\L} \mid Y_{\R}\text{.}\]
		Then, \(\pi_{\L}(X_{\L} \mid X_{\R} \pr{f}[\lmidl \theta_1'] X_{1_{\L}} \mid X_{\R}) = X_{\L} \pr{f}[\theta_1'] X_{1_{\L}}\) and the proof proceeds by case on \(\theta_2\):
		\begin{description}
			\item[\(\theta_2\) is \(\lmidr \theta_2'\)]
			Then \(X_{\R} \pr{f}[\theta_2'] Y_{\R}\), \(X_{1_{\L}} = Y_{\L}\) and the occurrences of the rules \(\lmidl\) and \(\lmidr\) can be swapped to obtain 
			\[X_{\L} \mid X_{\R} \pr{f}[\lmidr \theta_2'] X_{\L} \mid Y_{\R} \pr{f}[\lmidl\theta_1'] Y_{\L} \mid Y_{\R}\text{.}\]
			\item[\(\theta_2\) is \(\lmidl \theta_2'\)] Then, \(X_{\L} \pr{f}[\theta_1'] X_{1_{\L}} \pr{f}[\theta_2'] Y_{\L}\) and \(X_{\R} = Y_{\R}\).
			As \(\lmidl\theta_1' = \theta_1 \ind \theta_2 = \lmidl\theta_2'\), it is the case that \(\theta_1' \ind \theta_2'\) in \(X_{\L} \pr{f}[\theta_1'] X_{1_{\L}} \pr{f}[\theta_2'] Y_{\L}\) by \autoref{eq:rm-preserves}, and we can use the induction hypothesis  to obtain \(X_2\) \st  \(X_{\L} \pr{f}[\theta_2'] X_2 \pr{f}[\theta_1'] Y_{\L}\), from which it is immediate to obtain \(X_{\L} \mid X_{\R} \pr{f}[\lmidl \theta_2'] X_2 \mid X_{\R} \pr{f}[\lmidl\theta_1'] Y_{\L} \mid X_{\R} = Y_{\L} \mid Y_{\R}\).
			\item[\(\theta_2\) is \(\langle \lmidl {\theta_{2_{\L}}}, \lmidr {\theta_{2_{\R}}}\rangle\)] Since \(\lmidl\theta_1' = \theta_1 \ind \theta_2 = \langle \lmidl {\theta_{2_{\L}}}, \lmidr {\theta_{2_{\R}}}\rangle\), we have that \(\theta_1' \ind {\theta_{2_{\L}}}\) in \(X_{\L} \pr{f}[\theta_1'] X_{1_{\L}} \pr{f}[{\theta_{2_{\L}}}] Y_{\L}\) by \autoref{eq:rm-preserves}.
			Hence, we can use the induction hypothesis  to obtain \(X_{\L} \pr{f}[{\theta_{2_{\L}}}] X_2 \pr{f}[\theta_1'] Y_{\L}\).
			Since we also have that \(X_{\R} \pr{f}[{\theta_{2_{\R}}}] Y_{\R}\), we can compose both paths using first  syn, then \(\lmidl\) to obtain
			\[X_{\L} \mid X_{\R} \pr{f}[\langle \lmidl {\theta_{2_{\L}}}, \lmidr {\theta_{2_{\R}}}\rangle] X_2 \mid Y_{\R} \pr{f}[\lmidl\theta_1'] Y_{\L} \mid Y_{\R}\text{.}\]
		\end{description}
		\item[\(\lmidr\)] This is symmetric to \(\lmidl\).
		\item[syn]
		There exists \(X_{\L}\), \(X_{\R}\), \({\theta_1}_{\L}\), \({\theta_1}_{\L}\), \(X_{1_{\L}}\), \(X_{1_{\R}}\), \(Y_{\L}\) and \(Y_{\R}\) \st  \(X \pr{f}[\theta_1] X_1 \pr{f}[\theta_2] Y\) is
		\[X_{\L} \mid X_{\R} \pr{f}[\langle \lmidl {\theta_1}_{\L}, \lmidr {\theta_{1_{\R}}} \rangle] X_{1_{\L}} \mid X_{1_{\R}} \pr{f}[\theta_2] Y_{\L} \mid Y_{\R}\text{.}\]
		Then, 
		\begin{align*} 
			\pi_{\L}(X_{\L} \mid X_{\R} \pr{f}[\langle \lmidl {\theta_1}_{\L}, \lmidr {\theta_{1_{\R}}} \rangle] X_{1_{\L}} \mid X_{1_{\R}}) &=  X_{\L} \pr{f}[{\theta_1}_{\L}] X_{1_{\L}}\\
			\pi_{\R}(X_{\L} \mid X_{\R} \pr{f}[\langle \lmidl {\theta_1}_{\L}, \lmidr {\theta_{1_{\R}}} \rangle] X_{1_{\L}} \mid X_{1_{\R}}) &= X_{\R} \pr{f}[{\theta_{1_{\R}}}] X_{1_{\R}}
		\end{align*}
		and the proof proceeds by case on \(\theta_2\):
		\begin{description}
			\item[\(\theta_2\) is \(\lmidr {\theta_{2_{\R}}}\)]
			Then \(X_{1_{\R}} \pr{f}[{\theta_{2_{\R}}}] Y_{\R}\), \(X_{1_{\L}} = Y_{\L}\) and \(\langle \lmidl {\theta_1}_{\L}, \lmidr {\theta_{1_{\R}}} \rangle \ind \lmidr {\theta_{2_{\R}}}\). Then by \autoref{eq:extract} \(X_{\R} \pr{f}[{\theta_{1_{\R}}}] X_{1_{\R}} \pr{f}[{\theta_{2_{\R}}}]Y_{\R}\) and \({\theta_{1_{\R}}} \ind {\theta_{2_{\R}}}\).
			We can then use the induction hypothesis to obtain \(X_{\R} \pr{f}[{\theta_{2_{\R}}}] X_{2_{\R}} \pr{f}[{\theta_{1_{\R}}}] Y_{\R}\) from which it is immediate to obtain 
			\[X_{\L} \mid X_{\R} \pr{f}[\lmidr {\theta_{2_{\R}}}] X_{\L} \mid X_{2_{\R}}  \pr{f}[\langle \lmidl \theta_{2_{\L}}, \lmidr {\theta_{1_{\R}}} \rangle] X_{1_{\L}} \mid Y_{\R} = Y_{\L} \mid Y_{\R}\text{.}\]
			\item[\(\theta_2\) is \(\lmidl {\theta_{2_{\L}}}\)] This is symmetric to \(\lmidr {\theta_{2_{\R}}}\).
			\item[\(\theta_2\) is \(\langle \lmidl {\theta_{2_{\L}}}, \lmidr {\theta_{2_{\R}}} \rangle\)] This case is essentially a combination of the two previous cases.
			Since \(\langle \lmidl {\theta_1}_{\L}, \lmidr {\theta_{1_{\R}}} \rangle = \theta_1 \ind \theta_2 = \langle \lmidl {\theta_{2_{\L}}}, \lmidr {\theta_{2_{\R}}}\rangle\), \autoref{eq:extract} gives the two paths
			\begin{align*}
				X_{\L} \pr{f}[{\theta_1}_{\L}] X_{1_{\L}} \pr{f}[{\theta_{2_{\L}}}] Y_{\L}\qquad \text{and} \qquad 
				X_{\R} \pr{f}[{\theta_{1_{\R}}}] X_{1_{\R}} \pr{f}[{\theta_{2_{\R}}}] Y_{\R}
			\end{align*}
			and \({\theta_1}_{\L} \ind {\theta_{2_{\L}}}\) and \({\theta_{1_{\R}}} \ind {\theta_{2_{\R}}}\), respectively.
			By induction hypothesis, we obtain two paths 
			\begin{align*}
				X_{\L} \pr{f}[{\theta_{2_{\L}}}] X_{2_{\L}} \pr{f}[{\theta_1}_{\L}] Y_{\L}\qquad \text{and} \qquad 
				X_{\R} \pr{f}[{\theta_{2_{\R}}}] X_{2_{\R}} \pr{f}[{\theta_{1_{\R}}}] Y_{\R}
			\end{align*}
			that we can then combine using syn twice to obtain, as desired,
			\[X_{\L} \mid X_{\R} \pr{f}[\langle \lmidl {\theta_{2_{\L}}}, \lmidr {\theta_{2_{\R}}} \rangle] X_{2_{\L}} \mid X_{2_{\R}} \pr{f}[\langle \lmidl {\theta_1}_{\L}, \lmidr {\theta_{1_{\R}}} \rangle] Y_{\L} \mid Y_{\R}\text{.}\]
		\end{description}
		\item[\( \lplusl \)] 
		There exists \(X_{\L}\), \(X_{\R}\), \(\theta_1'\), \(\theta_2'\), \({X_1}_{\L}\), and \(Y_{\L}\) \st  \(X \pr{f}[\theta_1] X_1 \pr{f}[\theta_2] Y\) is
		\[X_{\L} + X_{\R} \pr{f}[ \lplusl \theta_1'] {X_1}_{\L} + X_{\R} \pr{f}[ \lplusl \theta_2'] Y_{\L} + X_{\R}\text{.}\]
		All transitions happen on the left side and \(X_{\R}\) remains unchanged as otherwise we could not sum two non-standard terms, so that \(\theta_2\) must be of the form \( \lplusl \theta_2'\).
		Then, we can use \autoref{eq:extract} to obtain
		\[X_{\L} \pr{f}[\theta_1'] {X_1}_{\L} \pr{f}[\theta_2'] Y_{\L}\]
		and \(\theta_1' \ind \theta_2'\).
		Hence we can use the induction hypothesis to obtain \(X_2\) \st \(X_{\L} \pr{f}[\theta_2'] X_2 \pr{f}[\theta_1'] Y_{\L}\).
		From this, it is easy to obtain 
		\[X_{\L} + X_{\R} \pr{f}[ \lplusl \theta_2'] X_2 + X_{\R} \pr{f}[ \lplusl \theta_1'] Y_{\L} + X_{\R} = Y_{\L} + Y_{\R}\text{.}\]
		\item[\( \lplusr \)] This is symmetric to \( \lplusl \).\qedhere
	\end{description}
\end{description}
\end{proof}

It is worth observing that in the proofs of Equations~\ref{eq:rd} and \ref{eq:sp} that follows, the cases of \(t;\Rev{t}: X \pr{f}[\theta_1] X_1 \pr{b}[\theta_1] X\), or of \(\Rev{t};t\) need not to be examined, since \(\theta_1 \ind \theta_1\) never holds since \(\ind\) is irreflexive.
The proofs essentially follows the proof of \autoref{eq:sd}, leveraging the fact that Equations~\ref{eq:extract} and \ref{eq:rm-preserves} hold for both directions: we only highlight the differences with the proof of \autoref{eq:sd} below.

\begin{proof}[Proof of \autoref{eq:rd}]
\label{proof:eq:rd}
The only case that diverges non-trivially with the proof of \autoref{eq:sd} is if the deduction for \(X \pr{f}[\theta_1] X_1\) have for last rule pre. 
In this case, 
\[\alpha[k].X' \pr{f}[\theta_1] \alpha[k].X'_1 \pr{b}[\theta_2] Y\text{,}\]
but we cannot deduce that \(\kay{\theta_2} \neq k\) immediately.
Using \autoref{lem:ind-diff-keys}, however, gives  that \(\kay{\theta_1} \neq \kay{\theta_2}\) since \(\theta_1 \ind \theta_2\), from which 
we can carry out the rest of the proof, using \autoref{eq:rm-preserves} as before.
\end{proof}

\begin{proof}[Proof of \autoref{eq:sp}] 
\label{proof:eq:sp}

The only case that diverges non-trivially with the proof of \autoref{eq:sd} is when summand operands are involved, \ie if the deduction for \(X \pr{b}[\theta_1] X_1\) have for last rule \(\rlplusl\) or \(\rlplusr\).
In the case of \(\rlplusl\) (the \(\rlplusr\) case is symmetric), there exists \(X_{\L}\), \(X_{\R}\), \({X_1}_{\L}\), and \(Y_{\L}\) \st  \(X  \pr{b}[\theta_1] X_1 \pr{f}[\theta_2] Y\) is

\[X_{\L} + X_{\R}  \pr{b}[\lplusl \theta_1'] {X_1}_{\L} + X_{\R} \pr{f}[\theta_2] Y_{\L} + Y_{\R}\text{.}\]

Then, \(\rho_{\L}(X_{\L} + X_{\R}  \pr{b}[\lplusl \theta_1'] {X_1}_{\L} + X_{\R}) = X_{\L}  \pr{b}[\theta_1'] {X_1}_{\L}\) and we proceed by case on \(\theta_2\):
\begin{description}
	\item[\(\theta_2\) is \( \lplusl \theta_2'\)]
	Then, \(\rho_{\L}({X_1}_{\L} + X_{\R} \pr{f}[\lplusl \theta_2'] Y_{\L} + Y_{\R}) = {X_1}_{\L} \pr{f}[\theta_2'] Y_{\L}\) and \(X_{\R} = Y_{\R}\).
	Since \( \lplusl \theta_1' \ind  \lplusl \theta_2'\), we can use \autoref{eq:extract} to obtain
	\[X_{\L}  \pr{b}[\theta_1'] {X_1}_{\L} \pr{f}[\theta_2'] Y_{\L} \]
	and \(\theta_1' \ind \theta_2'\), and by induction hypothesis there exists \(X_2\) such that
	\[X_{\L} \pr{f}[\theta_2'] X_2  \pr{b}[\theta_1'] Y_{\L} \]
	from which it is easy to obtain
	\[X_{\L} + X_{\R} \pr{f}[ \lplusl \theta_2'] X_2 + X_{\R}  \pr{b}[\lplusl \theta_1'] Y_{\L} + X_{\R} = Y_{\L} + Y_{\R}\text{.}\]	
	\item[\(\theta_2\) is \( \lplusr  \theta_2'\)]
	Since \(\lplusl \theta_1' \sdep  \lplusr \theta_2'\) by C\(^2\), and since the two transitions are obviously connected, by \autoref{thm:complementarity} it cannot be the case that \(\theta_1 \ind \theta_2\), so this case is vacuously true.\qedhere
\end{description} 
\end{proof}

\begin{proof}[Proof of \autoref{prop:pccsk-basic}]
\label{proof:pccskSPBTIWF}
We can now prove that the LTSI of $\pccsk$ satisfies SP, BTI and WF:
\begin{description}
	\item[SP]
	The proof is by case on the directions of the transitions, and always follows the same pattern: use \autoref{lem:loop_proved} to orient the transitions to meet the premises of Equation~\ref{eq:sd}, \ref{eq:rd} or \ref{eq:sp}, use the appropriate equation to obtain new paths, and finally use again \autoref{lem:loop_proved} to orient them as desired.
	\item[BTI] We have to prove that any two different coinitial backward transitions \(t_1: X\pr{b}[\theta_1]X_1\) and \(t_2:X\pr{b}[\theta_2]X_2\) are independent.
	The first important fact to note is that \(\kay{\theta_1} \neq \kay{\theta_2}\): by a simple inspection of the backward rules in \autoref{fig:provedltsrulesccskfw}, it is easy to observe that if a reachable process \(X\) can perform two different backward transitions, then their labels must have different keys.
	
	We then proceed by induction on the length of the deduction for the derivation of \(X \pr{b}[\theta_1] X_1\):
	\begin{description}
		\item[Length \(1\)]
		In this case, the derivation is a single application of \tRev{act}, and \(\theta_1\) is of the form \(\alpha[k]\), with \(X = \alpha[k].X'\) and \(\std{X'}\).
		Hence, \(X\) cannot perform two different transitions, and this case is vacuously true.
		\item[Length \(> 1\)]
		We proceed by case on the last rule.
		\begin{description}
			\item[\tRev{pre}] There exists \(\alpha\), \(k\), \(X'\) and \(X_1'\) \st  \(X = \alpha[k].X' \pr{b}[\theta_1] \alpha[k].X_1' = X_1\).
			Then, it must be the case that \(X'\pr{b}[\theta_1]X_1'\) and \(X'\) is not standard.
			Since \(X'\) is not standard, the last rule for the derivation of \(X \pr{b}[\theta_2] X_2\) cannot be \tRev{act}, and since \(X = \alpha[k].X'\), it must be \tRev{pre}, hence it must be the case that 	
			\(X = \alpha[k].X' \pr{b}[\theta_2] \alpha[k].X_2' = X_2\), and we know that \(X'\pr{b}[\theta_2]X_2'\).
			We conclude by using the induction hypothesis on the two backward transitions of \(X'\) and the observation that \tRev{pre} preserves the label and hence independence.
			\item[\tRev{res}] This is immediate by induction hypothesis.
			\item[\(\rlmidl\)]
			There exists \(X_{\L}\), \(X_{\R}\), \(\theta_1'\) and \(X_{1_{\L}}\) \st  \(X \pr{b}[\theta_1] X_1\) is
			\[X_{\L} \mid X_{\R} \pr{b}[\lmidl \theta_1'] X_{1_{\L}} \mid X_{\R}\text{.}\]
			Then, \(\phi_{\L}(X_{\L} \mid X_{\R} \pr{b}[\lmidl \theta_1'] X_{1_{\L}} \mid X_{\R}) = X_{\L} \pr{b}[\theta_1'] X_{1_{\L}}\) and the proof proceeds by case on \(\theta_2\), using \autoref{eq:extract} to decompose the paths:
			\begin{description}
				\item[\(\theta_2\) is \(\lmidr \theta_2'\)]
				Then \(\lmidl \theta_1' \ind \lmidr\theta_2'\) is immediate by P\(^2_k\) since we know that \(\kay{\theta_1} \neq \kay{\theta_2}\).
				\item[\(\theta_2\) is \(\lmidl \theta_2'\)] Then by \autoref{eq:extract} there exists \(X_{2_{\L}}\) such that \(X_{\L} \pr{b}[\theta_2'] X_{2_{\L}}\), and we conclude by induction on \(X_{\L}\)'s backward transitions using  P\(^1\).
				\item[\(\theta_2\) is \(\langle \lmidl {\theta_{2_{\L}}}, \lmidr {\theta_{2_{\R}}}\rangle\)]
				Then we know that
				\[X_{L} \mid X_{\R} \pr{b}[\langle \lmidl {\theta_{2_{\L}}}, \lmidr {\theta_{2_{\R}}}\rangle] X_{2_{\L}} \mid X_{2_{\R}}\text{.}\]
				For \(\lmidl  \theta_1' \ind \langle \lmidl {\theta_{2_{\L}}}, \lmidr {\theta_{2_{\R}}}\rangle\) to hold, S\(^1\) requires \(\theta_1' \ind  {\theta_{2_{\L}}}\).
				By induction hypothesis on \(X_{\L} \pr{b}[\theta_1'] X_{1_{\L}}\) and \(X_{\L} \pr{b}[\theta_{2_{\L}}] X_{2_{\L}}\), we know that those two transitions are independent, which concludes this case. 
			\end{description}
			\item[\(\rlmidr\)]This is symmetric to \(\rlmidl\).
			\item[\tRev{syn}] This case is similar to the two previous ones and does not offer any insight nor resistance.
		\end{description}
		
		\item[\(\rlplusl\)]
		There exists \(X_{\L}\), \(X_{\R}\), and \({X_1}_{\L}\) \st  \(X \pr{b}[\theta_1] X_1\) is
		\[X_{\L} + X_{\R} \pr{b}[\lplusl \theta_1'] {X_1}_{\L} + X_{\R}\text{.}\]			
		Then, note that \(\theta_2\) must also be of the form \( \lplusl \theta_2'\), as \(X_{\R}\) must be standard.
		Hence, this follows by induction hypothesis on the transitions \(X_{\L} \pr{b}[\theta_1'] {X_1}_{\L}\) and  \(X_{\L} \pr{b}[\theta_2'] {X_2}_{\L}\), using \autoref{eq:extract} to decompose the path and C\(^1\).\qedhere
	\end{description}
	
	\item[WF] We have to prove that for all (reachable) \(X\), there exists \(n \in \mathbb{N}\) and \(X_0, \hdots, X_n\) \st  \(X \pr{b} X_n \pr{b} \cdots \pr{b} X_1 \pr{b} X_0\), with \(\std{X_0}\).
	It is easy to observe that letting \(n = \card{\keys{X}}\) always gives the required result, since the target of a backward transition always contain one fewer key than the source of said transition, and since \(X\) can always revert its forward transitions in the opposite order.
\end{description}
\end{proof}

    \subsection{Proof of \autoref{lem:ind sdep coind}}
\label{app:ind sdep coind}
\indsdepcoind*

\begin{proof}
	\begin{enumerate}
		\item 
		By IRE (\autoref{prop:pccsk IRE}) and complementarity (\autoref{thm:complementarity}).
		\item
		By IRE.
		\item
		Suppose $e_1,e_2$ are composable and $e_1 \ind e_2$.  By IRE we have composable $t_1 \in e_1$ and $t_2 \in e_2$ such that $t_1 \ind t_2$.  By RPI (\autoref{prop:pccsk IRE}), $\rev{t_1} \ind t_2$.
		Since $\rev{t_1}$ and $t_2$ are coinitial, $e_1 \coind e_2$ as required.
		\qedhere
	\end{enumerate}
\end{proof}

\subsection{\autoref{ssec:key-props-ccsk}: Proofs of Lemmas and Propositions}
\label{app:key-based properties}

\inddiffkeys*
\begin{proof}
From the rules for $\ind$ we see that if $\theta_1 \ind \theta_2$ then $\key {\theta_1} \neq \key {\theta_2}$.  Hence for \pccsk, if $t_1 \ind t_2$ then $\key {t_1} \neq \key {t_2}$.
For \ccsk, using $\prov\cdot$ is as in \autoref{def:ccsk-pccsk-bijection}, if $t_1 \ind t_2$ then $\prov {t_1} \ind \prov {t_2}$, so that $\key {\prov {t_1}} \neq \key {\prov {t_2}}$ and so $\key {t_1} \neq \key {t_2}$.

\end{proof}

\bwdkeydet*
\begin{proof}
	For \ccsk, suppose that $t_1: X \r{b}[\alpha_1][k] X_1$, $t_2: X \r{b}[\alpha_2][k] X_2$,
	and that $t_1 \neq t_2$.
	By BTI (\autoref{prop:pccsk-basic}) %
	 we have $t_1 \ind t_2$.
	By \autoref{lem:ind-diff-keys} we have $\key {t_1} \neq \key {t_2}$, which is a contradiction.
	The proof is similar for \pccsk.
\end{proof}

\eventcoincide*

\begin{proof}%
	We present the proof for \pccsk; it works equally well for \ccsk.
	
	($\Rightarrow$)
	Suppose that $t_1 \eveqt t_2$.
	It is enough to consider a single diamond.  But then the targets of $t_1$ and $t_2$ are joined by $t$ such that $t_1 \ind t$.  Then $t_1$ and $t$ have different keys by \autoref{lem:ind-diff-keys}.
	
	($\Leftarrow$)
	Suppose that we have transitions $t_1:X_1 \pr{f}[\theta_1][k] X'_1$ and $t_2:X_2 \pr{f}[\theta_2][k] X'_2$ with $t_1 \evkeqt t_2$.
	We proceed by induction on the length of the path from $X'_1$ to $X'_2$.
	If the path is of length zero then we have $t_1 = t_2$ by backward key determinism (\autoref{lem:bwd_key_det}).
	
	If the path has non-zero length, by \autoref{lem:PL event}
	we can convert the path into a parabolic path without increasing length, and without introducing new labels, and so without introducing new keys.  Hence it is still the case that $k$ does not occur in the parabolic path.
	By the parabolic property, either the first transition is backward or the last transition is forward.  We consider the case where the last transition is forward; the other case is similar.
	
	So suppose that the last transition in the path is $t:Y' \pr{f}[\theta][k'] X'_2$, where $k' \neq k$.
	By BTI we have $\rev t \ind \rev {t_2}$.  We then use SP to complete a diamond with transitions $t'_2: Y \pr{f}[\theta_2][k] Y'$ and $t':Y \pr{f}[\theta][k'] X_2$ for some $Y$.
	By PCI $t'_2 \ind t'$, and so $t'_2 \eveqt t_2$.
	By induction hypothesis $t_1 \eveqt t'_2$.
	Hence $t_1 \eveqt t_2$.
\end{proof}

\directkeyind*
\begin{proof}
Suppose first that $t,u$ are directly key independent.  Then there are $t',u'$ which complete the square and satisfy the conditions for ID.  Here we use $\key t \neq \key u$ to deduce that the square is non-degenerate.
By ID (\autoref{prop:ID}, \autoref{thm:axioms-hold}), $t \ind u$.

Conversely, suppose that $t:P \r{fb}[\alpha][m] Q$, $u:P \r{fb}[\beta][n] R$ with $t \ind u$.
By~\autoref{lem:ind-diff-keys}, $m \neq n$.  By SP (\autoref{thm:axioms-hold}), there are cofinal transitions $u': Q \r{fb}[\beta][n] S$ and $t':R \r{fb}[\alpha][m] S$.
Hence $t,u$ are directly key independent.
\end{proof}

\keyindcoind*

\begin{proof}
By the definitions, \autoref{prop:direct key ind} and~\autoref{prop:event_coincide}.
\end{proof}

\subsection{Proof of \autoref{thm:ordering-event-key}}\label{app:ordering-event-key}

Proving this last result requires some additional lemmas and definitions.

We need a strengthened version of the Parabolic Lemma (PL): in addition to the statement in~\cite[Prop. 3.4]{LPU24}, it states that the new \enquote{parabolic} path should introduce no new events compared to the old path.

\begin{lemma}[Strengthened PL]\label{lem:PL event}
	In an LTSI satisfying SP and BTI, for any path $r$ there are forward-only paths $s,s'$ such that $r \ceqt \rev s s'$ and $\len s + \len {s'} \leq \len r$.
	Moreover, if $t$ in $s$ or $t$ in $s'$ then $t \eveqt t'$ for some $t'$ in $r$.
\end{lemma}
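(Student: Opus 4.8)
The plan is to derive the statement by re-running the rewriting procedure behind the ordinary Parabolic Lemma (PL, \cite[Prop. 3.4]{LPU24}) while carrying one extra piece of bookkeeping about events. The ordinary PL already produces forward-only $s,s'$ with $r \ceqt \rev s s'$ and $\len s + \len{s'} \le \len r$, so the only genuinely new content is the ``moreover'' clause. Hence I would not reprove PL from scratch; instead I would show that each rewriting step used to reach parabolic form never introduces a transition lying outside the event classes already present in $r$.

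Concretely, I would maintain throughout the normalisation the invariant that the current path $r'$ satisfies: (i) $r \ceqt r'$; (ii) $\len{r'}\le\len r$; and (iii) every transition occurring in $r'$ is event equivalent (\autoref{def:event-general}) to some transition occurring in $r$. Initially $r'=r$, so (iii) is trivial. The procedure repeatedly locates a forward transition immediately followed by a backward one, say $t:P\r{f}Q$ then $u:Q\r{b}R$, and removes this pattern in one of two ways. In the \emph{cancellation} case $u=\rev t$: delete both transitions; since this only removes transitions, (iii) is preserved (and the length strictly drops). In the \emph{commutation} case $u\neq\rev t$: then $\rev t:Q\r{b}P$ and $u:Q\r{b}R$ are distinct coinitial backward transitions, so $\rev t\ind u$ by BTI, and SP completes the square with backward transitions $v:P\r{b}S$ and $w:R\r{b}S$, where $v$ carries $u$'s label and $w$ carries $\rev t$'s label. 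I would then replace the subpath $t\,u$ by $v\,\rev w$, which has the same endpoints $P,R$, the same length $2$, and is causally equivalent to $t\,u$.

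The key point is to check (iii) in the commutation case, i.e. that $v\eveqt u$ and $\rev w\eveqt t$. The four transitions $\rev t,u,v,w$ form precisely an independence diamond, and I would apply \autoref{def:event-general} twice. Matching the template to $(\text{def-}t,\text{def-}u,\text{def-}u',\text{def-}t')=(u,\rev t,w,v)$ (legitimate since $u\ind\rev t$ by symmetry of $\ind$) yields $u\eveqt v$. Matching it to $(\rev t,u,v,w)$ yields $\rev t\eveqt w$; since $\eveqt$ is closed under reversal (immediate from \autoref{def:event-general}), this gives $\rev w\eveqt t$. As $u$ and $t$ are event equivalent to transitions of $r$ by the invariant, transitivity of $\eveqt$ shows $v$ and $\rev w$ are too, while the untouched transitions of $r'$ keep (iii). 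Termination of the rewriting, the parabolic shape $\rev s s'$ of the final path, and the length bound $\len s+\len{s'}\le\len r$ are exactly as in the proof of the ordinary PL, which I would invoke; when the procedure halts, the invariant delivers the ``moreover'' clause for $s$ and $s'$.

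The main obstacle is entirely in the commutation step: correctly aligning the BTI/SP diamond with the template of \autoref{def:event-general} so that the event-equivalence conclusions land on the intended corners — namely that the \emph{newly introduced} $v$ and the reversed $\rev w$, rather than some other sides of the square, are the ones equivalent to $u$ and $t$. Getting the forward/backward directions and the label bookkeeping right there (and in particular relying only on SP and BTI, not on PCI or RPI, which the lemma does not assume) is the delicate part; the remainder is inherited verbatim from the existing PL argument.
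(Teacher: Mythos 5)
Your proposal is correct and follows essentially the same route as the paper: the paper's proof likewise re-runs the standard PL normalisation from SP and BTI and merely observes that each commutation replaces $t\,\rev{u}$ by $\rev{u'}\,t'$ with $t' \eveqt t$ and $u' \eveqt u$, which is exactly the invariant you maintain. Your write-up just makes explicit the instantiation of \autoref{def:event-general} on the BTI/SP diamond that the paper leaves implicit.
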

\begin{proof}
	The proof is the same as for~\cite[Prop. 3.4]{LPU24}, where it follows from axioms SP and BTI.  We just note that any new transitions in $s$ or $s'$ are obtained by replacing $t \rev u$ (for some forward transitions $t,u$) by $\rev {u'} t'$ where $t' \eveqt t$ and $u' \eveqt u$.
\end{proof}

\begin{lemma}[Order from events to keys]\label{lem:composable sdep}
	Suppose $X$ is reachable, and suppose that $e_1,e_2 \in \ev{X}$ are such that $e_1 \ip e_2$.
	Then $(\key {e_1}, \key {e_2}) \in \ord X$.
\end{lemma}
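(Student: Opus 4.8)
The plan is to reduce the statement to a local claim about two consecutive forward transitions and then transport it along a forward computation to $X$. Concretely, I would first isolate representatives of $e_1$ and $e_2$ that fire consecutively inside a forward-only computation reaching $X$. Since $X$ is reachable it has a forward-only rooted path $r:\orig{X}\pr{f}^*X$, and by the correspondence between $\ev{X}$ and $\keys{X}$ (cf.\ \autoref{prop:event_coincide}) this $r$ realises every event of $\ev{X}$ exactly once; in particular $e_1,e_2$ occur in $r$ as forward transitions, with the one in $e_1$ before the one in $e_2$ because $e_1<e_2$. Using the swapping argument from the proof of \autoref{lem:immed pred composable} --- every transition strictly between them is, by $e_1\ip e_2$ and polychotomy, core independent of one of the two, hence independent by IRE and swappable by RPI and SP --- I would move them next to each other, obtaining a forward path $\orig{X}\pr{f}^*P\pr{f}[\theta_1]P'\pr{f}[\theta_2]P''\pr{f}^*X$ with $t_1:P\pr{f}[\theta_1]P'\in e_1$ and $t_2:P'\pr{f}[\theta_2]P''\in e_2$. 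As $t_1,t_2$ are composable (hence connected) and not core independent (again by \autoref{lem:immed pred composable}), \autoref{prop:complementarity transitions} together with \autoref{lem:ind sdep coind} yields $\theta_1\sdep\theta_2$.

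Next I would prove the local lemma: for any consecutive forward transitions $P\pr{f}[\theta_1]P'\pr{f}[\theta_2]P''$ with $\theta_1\sdep\theta_2$, one has $(\kay{\theta_1},\kay{\theta_2})\in\ord{P''}$. The proof is by induction on the derivation of $\theta_1\sdep\theta_2$ in \autoref{fig:pccsk-relations}, reading off the shape of $P,P',P''$ from each rule. In the base case A$^1$ (where $\theta_1=\alpha[\kay{\theta_1}]$) the first step is an \textsf{act} at top level, so $P'=\alpha[\kay{\theta_1}].Q$ with $Q$ standard, the second step is a \textsf{pre}, $P''=\alpha[\kay{\theta_1}].Q'$ with $\kay{\theta_2}\in\keys{Q'}$, whence $(\kay{\theta_1},\kay{\theta_2})\in\ord{P''}$ directly by the keyed-prefix clause of \autoref{def:partial-order-on-keys}. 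The recursive cases C$^1$, P$^1$, S$^1$, S$^2$, S$^3$ all project the two transitions onto a single component on which the premise $\cdot\sdep\cdot$ holds, apply the induction hypothesis there, and conclude since $\ord{\cdot}$ of a parallel or a sum is the union of the components' orders. The remaining rules are vacuous for consecutive forward transitions: A$^2$ would force the second source to be standard, C$^2$ would require firing opposite summands in succession, and the dependence rule P$^2_k$ would require the two steps to reuse a single key on opposite sides of a parallel --- all impossible.

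Finally I would record that $\ord{\cdot}$ is monotone along forward steps: if $P\pr{f}[\theta]P'$ then $\ord{P}\subseteq\ord{P'}$, by a routine structural induction using $\keys{P}\subseteq\keys{P'}$. Applying the local lemma to $t_1,t_2$ yields $(\key{e_1},\key{e_2})=(\kay{\theta_1},\kay{\theta_2})\in\ord{P''}$, and monotonicity along the forward tail $P''\pr{f}^*X$ gives $(\key{e_1},\key{e_2})\in\ord{X}$, as required. I expect the main obstacle to lie in the first paragraph rather than in the induction: one must argue carefully that the intermediate transitions can be swapped out while keeping the computation forward-only and still reaching the same $X$, so that the local conclusion obtained at $P''$ can legitimately be pushed forward to $X$. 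The structural induction itself is essentially mechanical once the vacuous cases are identified.
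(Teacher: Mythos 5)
Your proposal is correct, and it shares the paper's first move: both proofs invoke the rearrangement argument of \autoref{lem:immed pred composable} (together with \autoref{lem:ind sdep coind} and complementarity) to obtain two \emph{consecutive} forward transitions $t_1 \in e_1$, $t_2 \in e_2$ on a forward-only rooted path, with $\lblof{t_1} \sdep \lblof{t_2}$. Where you diverge is in the second half. The paper proceeds by structural induction on the process $X$ itself, projecting the whole forward-only path onto the components of $X$ and using the $\sdep$-rules to decide which component carries the dependency, so that the conclusion $(\key{e_1},\key{e_2}) \in \ord{X}$ is established directly at $X$. You instead prove a local lemma by induction on the derivation of $\theta_1 \sdep \theta_2$, concluding only that $(\kay{\theta_1},\kay{\theta_2}) \in \ord{P''}$ at the target of $t_2$, and then transport this to $X$ via a separate monotonicity lemma ($P \pr{f}[\theta] P'$ implies $\ord{P} \subseteq \ord{P'}$). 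Your organisation buys a cleaner, self-contained statement about two composable transitions (independent of the ambient $X$) at the cost of one extra—genuinely routine—lemma; the paper's version avoids that lemma but has to carry the path projections through the structural cases of $X$. Your identification of the vacuous rules (A$^2$, C$^2$, P$^2_k$) for consecutive forward transitions is sound. One small imprecision: in the base case A$^1$ the first transition need not be an \textsf{act} ``at top level''—the label $\alpha[k]$ is also compatible with a derivation threading through \textsf{res} and \textsf{pre} wrappers—but since $\ord{\cdot}$ is unaffected by restriction and only enlarged by keyed prefixes, the argument survives unchanged once those wrappers are peeled off. Also, the fact that each event of $\ev{X}$ occurs exactly once in a forward-only rooted path is really NRE together with \autoref{lem:event key}, rather than \autoref{prop:event_coincide}.
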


\begin{proof}
	We apply Lemmas~\ref{lem:immed pred composable} and \ref{lem:ind sdep coind} as well as  complementarity (\autoref{thm:complementarity}) to deduce that $e_1$ is composable with $e_2$ and $e_1 \sdep e_2$.
	Let $k_1 = \key {e_1}$, $k_2 = \key {e_2}$.
	Composability of $e_1$ and $e_2$ implies that $k_1 \neq k_2$.
	Let $rt_1t_2$ be a forward-only path with $t_1 \in e_1$, $t_2 \in e_2$.
	We have $\lblof {t_1} \sdep \lblof {t_2}$.  We refer to \autoref{fig:pccsk-relations} for rules for $\sdep$ and $\ind$.
	
	By structural induction on $X$.
	There are various cases:
	\begin{description}
		\item[$P$.]
		This case cannot arise, since $\ev P = \emptyset$.
		\item[${\alpha[k].X}$.]
		There are various sub-cases:
		\begin{enumerate}
			\item $k_1 = k$.
			Then $k_2 \in \keys X$, and so $(k_1,k_2) \in \ord {\alpha[k].X}$.
			\item $k_2 = k$.
			This cannot arise, since $t_1$ occurs before $t_2$ in the path $rt_1t_2$.
			\item $k_1,k_2 \neq k$.
			Then $k_1,k_2 \in \keys X$.
			There is a forward-only path $r't'_1t'_2$ obtained by omitting the first transition of $r$ and removing prefixes $\alpha[k]$ from processes.
			Clearly $\lblof{t'_1} \sdep \lblof{t'_2}$.
			By induction we have $(k_1,k_2) \in \ord X$, and so $(k_1,k_2) \in \ord {\alpha[k].X}$.
		\end{enumerate}
		\item[$X + Q$.]
		Then there is a forward-only path $r^{\L}t_1^{\L}t_2^{\L}$ obtained by projecting onto the left-hand component.
		We use rule C$^1$ to deduce $\lblof{t_1^{\L}} \sdep \lblof{t_2^{\L}}$.
		So $\evtkof {X} {k_1} \sdep \evtkof {X} {k_2}$ and these events are composable.
		By induction we have $(k_1, k_2) \in \ord X$, and so $(k_1, k_2) \in \ord {X + Q}$.
		
		\item[$P + X$.]
		Similar to the preceding case.
		
		\item[$X \bs \lambda$.]
		Then there is a forward-only path $r't'_1t'_2$ obtained by removing the restriction.
		Clearly $\evtkof {X} {k_1} \sdep \evtkof {X} {k_2}$ and these events are composable.
		By induction we have $(k_1, k_2) \in \ord X$, and so $(k_1, k_2) \in \ord {X \bs \lambda}$.
		
		\item[$X \Par Y$.]
		There are various sub-cases:
		\begin{enumerate}
			\item $k_1,k_2 \in \keys X \inter \keys Y$.
			Then there are forward-only paths $r^{\D}t_1^{\D}t_2^{\D}$ ($\D \in \{\L, \R\}$) obtained by projecting onto the left-hand and right-hand components.
			We use rule S$^3$ to deduce $\lblof{t_1^{\D}} \sdep \lblof{t_2^{\D}}$ for $\D \in \{\L, \R\}$.
			Wlog suppose $\lblof{t_1^{\L}} \sdep \lblof{t_2^{\L}}$.
			By induction we have $(k_1, k_2) \in \ord X$, and so $(k_1, k_2) \in \ord {X \Par Y}$.
			\item $k_1 \in \keys X \inter \keys Y$ and $k_2 \in \keys X \setminus \keys Y$.
			Then there is a forward-only path $r^{\L}t_1^{\L}t_2^{\L}$ obtained by projecting onto the left-hand component.
			We use rule S$^2$ to deduce $\lblof{t_1^{\L}} \sdep \lblof{t_2^{\L}}$.
			By induction we have $(k_1, k_2) \in \ord X$, and so $(k_1, k_2) \in \ord {X \Par Y}$.
			\item $k_1 \in \keys X \inter \keys Y$ and $k_2 \in \keys Y \setminus \keys X$.
			Similar to the preceding case.
			\item $k_1 \in \keys X \setminus \keys Y$ and $k_2 \in \keys X \inter \keys Y$.
			Then there is a forward-only path $r^{\L}t_1^{\L}t_2^{\L}$ obtained by projecting onto the left-hand component.
			We use rule S$^1$ to deduce $\lblof{t_1^{\L}} \sdep \lblof{t_2^{\L}}$.
			By induction we have $(k_1, k_2) \in \ord X$, and so $(k_1, k_2) \in \ord {X \Par Y}$.
			\item $k_1 \in \keys Y \setminus \keys X$ and $k_2 \in \keys X \inter \keys Y$.
			Similar to the preceding case.
			\item $k_1, k_2 \in \keys X \setminus \keys Y$.
			Then there is a forward-only path $r^{\L}t_1^{\L}t_2^{\L}$ obtained by projecting onto the left-hand component.
			We use rule P$^1$ to deduce $\lblof{t_1^{\L}} \sdep \lblof{t_2^{\L}}$.
			By induction we have $(k_1, k_2) \in \ord X$, and so $(k_1, k_2) \in \ord {X \Par Y}$.
			\item $k_1, k_2 \in \keys Y \setminus \keys X$.
			Similar to the preceding case.
			\item $k_1 \in \keys X \setminus \keys Y$ and $k_2 \in \keys Y \setminus \keys X$.
			This case cannot arise, since we have $\lblof {t_1} \ind \lblof {t_2}$ by rule P$^2_k$.
			\item $k_1 \in \keys Y \setminus \keys X$ and $k_2 \in \keys X \setminus \keys Y$.
			Similar to the preceding case.
			\qedhere
		\end{enumerate}
	\end{description}
\end{proof}

\begin{lemma}[Event keys properties]\label{lem:event key}
	For any reachable \ccsk process $X$:
	\begin{enumerate}
		\item if $e_1,e_2 \in \ev{X}$ then $\key {e_1} \neq \key {e_2}$;
		\item $\{\key e \setst e \in \ev{X}\} = \keys X$.
	\end{enumerate}
\end{lemma}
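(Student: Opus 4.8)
The plan is to establish that the map $e \mapsto \key e$ is a bijection from $\ev X$ onto $\keys X$, from which both parts of \autoref{lem:event key} follow at once. First I would record that $\key e$ is well defined: by \autoref{prop:event_coincide} any two transitions in one event are related by $\evkeqt$, and \autoref{def:event-key} forces such transitions to share the same key, so $\key e \coloneqq \key t$ for $t \in e$ is independent of the representative. Since $X$ is reachable there is a rooted path $\orig X \pr{fb}^{*} X$; applying the Parabolic Lemma (\autoref{lem:PL event}) and using that $\orig X$ is standard, hence admits no backward step, collapses it to a forward-only rooted path $r = t^1 \cdots t^n : \orig X \pr{f}^{*} X$. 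A short induction on the derivation of a forward transition gives $\keys{\tgtof t} = \keys{\srcof t} \cup \{\key t\}$ with $\key t \notin \keys{\srcof t}$ (a forward step only adds its own fresh key and deletes none), so along $r$ the keys $\key{t^1}, \dots, \key{t^n}$ are pairwise distinct and $\{\key{t^1}, \dots, \key{t^n}\} = \keys X$; in particular $i \mapsto \key{t^i}$ is a bijection onto $\keys X$.

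The crux is to prove $\ev X = \{[t^1], \dots, [t^n]\}$ for this forward-only path, which simultaneously shows these events depend only on $X$. The inclusion $\supseteq$ is clear since $\cte(r,[t^i]) > 0$. For $\subseteq$, take $e \in \ev X$; by \autoref{def:events process} some rooted path $\rho_0$ to $X$ has $\cte(\rho_0, e) > 0$, and converting it to a forward-only rooted path $\rho = u^1 \cdots u^m : \orig X \pr{f}^{*} X$ by \autoref{lem:PL event} (noting that causal equivalence $\ceqt$ preserves $\cte$, and that $\orig X$ has no backward step) yields $\cte(\rho, e) > 0$, so $e = [u^q]$ for some $q$ with key $k \coloneqq \key{u^q} \in \keys X$. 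Let $t^p$ be the unique transition of $r$ with $\key{t^p} = k$. The tail $t^{p+1}\cdots t^n$ gives a forward path $\tgtof{t^p} \pr{f}^{*} X$ whose keys are all $\neq k$, and the reverse of the tail of $\rho$ gives $X \pr{b}^{*} \tgtof{u^q}$ also avoiding $k$; their concatenation is a path $s: \tgtof{t^p} \pr{bf}^{*} \tgtof{u^q}$ with $k \notin \keys s$. Hence $t^p \evkeqt u^q$ by \autoref{def:event-key}, so $t^p \eveqt u^q$ by \autoref{prop:event_coincide}, giving $e = [u^q] = [t^p] \in \{[t^1], \dots, [t^n]\}$.

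Combining the two steps, $\ev X = \{[t^i] : 1 \leq i \leq n\}$ and $\key{[t^i]} = \key{t^i}$, so $e \mapsto \key e$ is exactly the bijection $i \mapsto \key{t^i}$ onto $\keys X$. Part~1 is its injectivity (two events with equal key $k$ are each the unique $[t^p]$ with $\key{t^p}=k$, hence equal), and Part~2 is its surjectivity together with $\{\key{t^i}\} = \keys X$. I expect the main obstacle to be the inclusion $\ev X \subseteq \{[t^i]\}$, namely that the events witnessing membership in $\ev X$ do not depend on the chosen forward-only path; the key-matching argument above, routed through event key equivalence (\autoref{prop:event_coincide}, \autoref{def:event-key}) rather than through full causal consistency, is what makes this step go through cleanly.
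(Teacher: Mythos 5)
Your proof is correct, but it takes a genuinely different route from the paper's, which is considerably terser. For part~1 the paper simply invokes PL to obtain a forward-only rooted path and observes that keys along it are distinct; it leaves implicit the step you make explicit, namely that \emph{every} event of $\ev{X}$ is witnessed on \emph{any chosen} forward-only rooted path to $X$ (recall $\ev{X}$ is defined by an existential over rooted paths). Your key-matching argument---locating the unique $t^p$ in $r$ with the same key as $u^q$ in $\rho$, building a $k$-avoiding connecting path from the two tails, and concluding $t^p \evkeqt u^q$ hence $t^p \eveqt u^q$ via \autoref{prop:event_coincide}---supplies exactly that missing justification, and there is no circularity since \autoref{prop:event_coincide} is proved independently of this lemma. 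For part~2 the paper instead argues by structural induction on $X$, projecting rooted paths through parallel composition, whereas you derive surjectivity directly from the single forward-only path together with the fact that each forward step contributes exactly one fresh key ($\keys{\tgtof{t}} = \keys{\srcof{t}} \cup \{\key{t}\}$ with $\key{t} \notin \keys{\srcof{t}}$). Your version buys a unified, induction-free argument that establishes the bijection $e \mapsto \key{e}$ in one pass and makes the path-independence of $\ev{X}$ explicit; its only extra debt is the invariance of $\cte$ under causal equivalence $\ceqt$, which is a standard fact of the axiomatic framework of~\cite{LPU24} that the paper relies on elsewhere as well, so this is an acceptable dependency.
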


\begin{proof}%
	\begin{enumerate}
		\item We can consider a forward-only rooted path, which must exist by PL (\autoref{lem:PL event}).  Clearly all keys of transitions must be distinct.
		\item
		By structural induction.  The most interesting case is parallel composition.
		A rooted path with target $X \Par Y$ can be \enquote{projected}\footnote{%
			Here and below, the informal term \enquote{project} is to be read in the technical sense of the \(\pi_{\D}\) and \(\rho_{\D}\) functions discussed on p.~\pageref{item:def-projection}., around item~\itemreflipics{item:def-projection}.} %
		into rooted paths with targets $X$ and $Y$ respectively. \qedhere
	\end{enumerate}
\end{proof}
In view of \autoref{lem:event key} we can make the following definition.
\begin{definition}[Event key]
	Let $X$ be reachable and let $k \in \keys X$.  Define $\evtkof X k$ to be the unique event $e \in \ev{X}$ such that $\key e = k$.
\end{definition}

The events of $\ev{X}$ are partially ordered as in \autoref{def:event relations}.
Recall the ordering on keys generated from $\ord X$ from \autoref{def:partial-order-on-keys}.

\begin{restatable}[Order from keys to events]{lemma}{lemord}\label{lem:ord}
	Suppose $X$ is reachable and $(k_1,k_2) \in \ord X$.  Then $\evtkof X {k_1} < \evtkof X {k_2}$.
\end{restatable}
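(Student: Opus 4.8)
The plan is to prove \autoref{lem:ord} by structural induction on the reachable process \(X\), following exactly the clauses that define \(\ord{X}\) in \autoref{def:partial-order-on-keys}. Throughout, I first dispose of strictness: any pair \((k_1,k_2) \in \ord{X}\) has \(k_1 \neq k_2\) — the only clause generating new pairs is the keyed prefix, where \(n \notin \keys{X'}\) in a reachable \(\alpha[n].X'\), and the remaining clauses only inherit pairs from subcomponents — so by \autoref{lem:event key}(1) the events \(\evtkof{X}{k_1}\) and \(\evtkof{X}{k_2}\) are distinct. It therefore suffices to establish \(\evtkof{X}{k_1} \leq \evtkof{X}{k_2}\), i.e.\ that every rooted path \(r\) with \(\cte(r,\evtkof{X}{k_2}) > 0\) satisfies \(\cte(r,\evtkof{X}{k_1}) > 0\); by the strengthened Parabolic Lemma (\autoref{lem:PL event}) I may restrict attention to forward-only rooted paths, since converting a path to parabolic form introduces no new events.

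The base and pass-through cases are routine. For \(X = \nil\) or \(X\) standard, \(\ord{X} = \emptyset\) and the claim is vacuous. For an unkeyed prefix \(\alpha.X'\), reachability forces \(\std{X'}\) (the \textsf{pre} rule needs the prefix keyed before anything fires underneath), so \(\ord{X'} = \emptyset\). For restriction \(X' \bs \lambda\), for a sum \(X' + Q\) or \(P + Y'\) (where reachability forces the inactive summand to be standard, hence to contribute nothing), and for a parallel composition \(X' \Par Y'\), the pair \((k_1,k_2)\) lies in \(\ord{X'}\) for the active subcomponent \(X'\). Here I lift the induction hypothesis using the projection functions \(\pi_{\D}, \rho_{\D}\) already employed in the proofs of \autoref{lem:event key} and \autoref{lem:composable sdep}: a forward-only rooted path \(r\) to \(X\) projects to a forward-only rooted path to \(X'\) in which the \(k\)-event fires iff it fires in \(r\). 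Consequently \(\cte\) values for these events are preserved under projection, and the relation \(\evtkof{X'}{k_1} < \evtkof{X'}{k_2}\) obtained from the induction hypothesis transfers to \(\evtkof{X}{k_1} < \evtkof{X}{k_2}\).

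The crucial case is the keyed prefix \(X = \alpha[n].X'\), where \(\ord{X} = \ord{X'} \cup \{(n,k) \setst k \in \keys{X'}\}\). When \((k_1,k_2) \in \ord{X'}\) I proceed by the induction hypothesis and the lifting argument above, with the projection given by stripping the prefix. The genuinely new subcase is \(k_1 = n\) and \(k_2 \in \keys{X'}\), where I must show the prefix event \(\evtkof{X}{n}\) causally precedes \(\evtkof{X}{k_2}\). The argument is that in any forward-only rooted path whose target has the \(k_2\)-action fired, that action occurs syntactically beneath the prefix, so by the side condition of the \textsf{pre} rule the prefix \(\alpha\) must already carry its key \(n\) at the moment the \(k_2\)-transition fires; hence the \(n\)-transition occurs strictly earlier in the path. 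This yields \(\cte(r,\evtkof{X}{k_2}) > 0 \Rightarrow \cte(r,\evtkof{X}{n}) > 0\), so \(\evtkof{X}{n} \leq \evtkof{X}{k_2}\), and strictness follows from \(n \neq k_2\).

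I expect the main obstacle to be making the projection/lifting step fully rigorous: namely that forward-only rooted paths to a composite process correspond to forward-only rooted paths to the active subcomponent in a way that preserves event identity (by key) and the counts \(\cte\). This rests on uniqueness of derivations (\autoref{lem:derivation-uniqueness}) and on \autoref{lem:event key}, which lets me identify events with keys; the remaining cases then reduce to a careful but mechanical reading of the transition rules, mirroring the structure of the companion result \autoref{lem:composable sdep}.
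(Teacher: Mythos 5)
Your proposal matches the paper's proof in all essentials: the same structural induction on \(X\) following the clauses of \(\ord{\cdot}\), the same reduction to forward-only rooted paths, the same direct treatment of the prefix-generated pairs \((n,k_2)\), and the same projection of rooted paths onto the active subcomponent (justified via the key-based characterisation of events) to transfer the inherited pairs --- the paper merely phrases the transfer step contrapositively, assuming \(\evtkof X {k_1} \not< \evtkof X {k_2}\) and projecting the witnessing path to contradict the induction hypothesis. The one point you leave implicit, namely that the prefix must carry the key \(n\) specifically (and not some other fresh key) in any rooted path containing a transition of \(\evtkof X {k_2}\), is stated equally tersely in the paper and follows from the \(\evkeqt\) connecting-path argument you allude to.
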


\begin{proof}
	By structural induction on $X$.
	In the definition of $e_1 < e_2$ we can restrict to forward-only paths using~\cite[Lemma 4.26]{LPU24}, since the LTSI of \pccsk is pre-reversible.
	The proof is best done using proved transitions, since this helps to clarify parallel composition.
	It is convenient to use event equivalence $\evkeqt$ as in \autoref{def:event-key}, which is equivalent to $\eveqt$ as in \autoref{def:event-general} by \autoref{prop:event_coincide}.
	
	There are various cases:
	\begin{description}
		\item[$P$.]
		This case cannot arise, since $\keys P = \emptyset$.
		\item[${\alpha[k].X}$.]
		There are two sub-cases:
		\begin{enumerate}
			\item \label{item:ord base}
			Suppose that $(k_1,k_2) \in \ord {\alpha[k].X}$, and this is derived from $k_1 = k$ and $k_2 \in \keys X$.
			Suppose that $r$ is any rooted forward-only path with $\cte(r,\evtkof {\alpha[k].X} {k_2}) = 1$.
			Then the first transition of $r$ is labelled with $\alpha[k]$, so that $\cte(r,\evtkof {\alpha[k].X} {k_1}) = 1$ also.
			This shows that $\evtkof {\alpha[k].X} {k_1} < \evtkof {\alpha[k].X} {k_2}$.
			
			\item \label{item:ord ind}
			Suppose that $(k_1,k_2) \in \ord {\alpha[k].X}$, and this is derived from $(k_1,k_2) \in \ord X$.
			Suppose also that $\evtkof {\alpha[k].X} {k_1} \not < \evtkof {\alpha[k].X} {k_2}$.
			Then there is a rooted forward-only path $r$ with $\cte(r,\evtkof {\alpha[k].X} {k_2}) = 1$ and $\cte(r,\evtkof {\alpha[k].X} {k_1}) = 0$.
			Suppose that $t$ is the transition in $r$ which belongs to $\evtkof {\alpha[k].X} {k_2}$.  Then $t \evkeqt t'$ where $t'$ belongs to some rooted forward-only path $r'$ with target $\alpha[k].X$.
			Let $s$ be a path not containing $k_2$ from $\tgtof {t}$ to $\tgtof {t'}$.
			
			We can omit the initial $\alpha[k]$ transition in $r$ and $r'$, and delete all $\alpha[k]$ prefixes in $r,r',s$, yielding paths $r_0,r'_0,s_0$ and transitions $t_0 \in r_0$ and $t' \in r'_0$. Using $s_0$ we see that $t_0 \evkeqt t'_0$.  Using $r'_0$ we see that $t'_0 \in \evtkof X {k_2}$, and so $t_0 \in \evtkof X {k_2}$.
			Now $\cte(r,\evtkof X {k_2}) = 1$ and $\cte(r,\evtkof X {k_1}) = 0$,
			showing that $\evtkof X {k_1} \not < \evtkof X {k_2}$, contradicting the induction hypothesis.
		\end{enumerate}
		\item[$X + Q$] Similar to sub-case \itemreflipics{item:ord ind} for $\alpha[k].X$.
		\item[$P + X$] Similar to the preceding case.
		\item[$P \bs \lambda$] Similar to sub-case \itemreflipics{item:ord ind} for $\alpha[k].X$.
		\item[$X \Par Y$]
		Suppose that $(k_1,k_2) \in \ord {X \Par Y}$, and this is derived from $(k_1,k_2) \in \ord X$.
		Suppose also that $\evtkof {X \Par Y} {k_1} \not < \evtkof {X \Par Y} {k_2}$.
		Then there is a rooted forward-only path $r$ with $\cte(r,\evtkof {X \Par Y} {k_2}) = 1$ and $\cte(r,\evtkof {X \Par Y} {k_1}) = 0$.
		Suppose that $t$ is the transition in $r$ which belongs to $\evtkof {X \Par Y} {k_2}$.  Then $t \evkeqt t'$ where $t'$ belongs to some rooted forward-only path $r'$ with target $X \Par Y$.
		Let $s$ be a path not containing $k_2$ from $\tgtof {t}$ to $\tgtof {t'}$.
		
		We can project $r,r',s$ onto their left-hand components (by omitting any moves made solely on the right-hand component), yielding paths $r_\Left,r'_\Left,s_\Left$ and transitions $t_\Left \in r_\Left$ and $t' \in r'_\Left$. Using $s_\Left$ we see that $t_\Left \evkeqt t'_\Left$.  Using $r'_\Left$ we see that $t'_\Left \in \evtkof X {k_2}$, and so $t_\Left \in \evtkof X {k_2}$.
		Now $\cte(r,\evtkof X {k_2}) = 1$ and $\cte(r,\evtkof X {k_1}) = 0$,
		showing that $\evtkof X {k_1} \not < \evtkof X {k_2}$, contradicting the induction hypothesis.
		
		The case where $(k_1,k_2) \in \ord {X \Par Y}$ is derived from $(k_1,k_2) \in \ord Y$ is similar.
		\qedhere
	\end{description}
\end{proof}

\thmorderings*

\begin{proof}
	($\Rightarrow$)
	By \autoref{lem:composable sdep} and \autoref{def:partial-order-on-keys}.
	
	($\Leftarrow$)
	By \autoref{lem:ord}, \autoref{def:partial-order-on-keys} and $\leq$ on $\ev{X}$ being a partial ordering~\cite[Lemma 4.24]{LPU24}.
\end{proof}

\Comment{
\subsection{Maximal Keys}\label{subsec:maxkeys}

\begin{definition}
[Keys of a \ccsk process]\label{def:keys}
\begin{align*}
\keys \nil &= \emptyset &&& \keys {X \bs a} &= \keys X\\
\keys {\alpha.X} &= \keys X &&& \keys {X \Par Y} &= \keys X \union \keys Y \\
\keys {\alpha[k].X} &= \keys X \union \{k\} &&& \keys {X + Y} &= \keys X \union \keys Y 
\end{align*}
\end{definition}

\begin{definition}[Maximal keys of a \ccsk process]\label{def:maxkeys}
	\begin{align*}
		\maxkeys{\nil} & = \emptyset \\
		\maxkeys{\alpha.X} & = \maxkeys{X} \\
		\maxkeys{\alpha[n].X} & = \begin{cases}
		 \maxkeys{X} & \keys X \neq \emptyset \\
		 \{n\} &  \keys X = \emptyset \\
\end{cases}\\
		\maxkeys{X \bs \lambda} & = \maxkeys{X} \\
		\maxkeys{X + Y} &= \maxkeys{X} \union \maxkeys{Y} \\
		\maxkeys{X \Par  Y} &= (\maxkeys{X} \inter \maxkeys{Y}) \\
		& \quad \union (\maxkeys{X} \setminus \keys{Y}) \union  (\maxkeys{Y} \setminus \keys{X})\\
  	\end{align*}
\end{definition}

\begin{lemma}\label{lem:maxkeys}
For any reachable $X$, $\maxkeys X$ is the set of maximal keys of $X$ according to the ordering $\leq_X$ (\autoref{def:partial-order-on-keys}).
\end{lemma}
\begin{proof}
First note that $k$ is maximal in $\keys X$ (under $\leq_X$) iff $k \in \keys X$ and $(k,\star) \notin \ord X$, where we use $\star$ to stand for any key.
We therefore show by structural induction that $k \in \maxkeys X$ iff $k \in \keys X$ and $(k,\star) \notin \ord X$.
We give the case of parallel composition; the other cases are immediate.

Suppose $k \in \maxkeys{X \Par Y}$.
There are three cases:
\begin{enumerate}
\item $k \in \keys X \inter \keys Y$.
Then $k \in \maxkeys X \inter \maxkeys Y$.
By induction $k \in \keys X$, $(k,\star) \notin \ord X$ and $k \in \keys Y$, $(k,\star) \notin \ord Y$.
So $k \in \keys {X \Par Y}$ and $(k,\star) \notin \ord {X \Par Y}$ by definition of $\keys {X \Par Y}$ and  $\ord {X \Par Y}$.
\item $k \in \keys X \setminus \keys Y$.
Then $k \in \maxkeys{X} \setminus \keys{Y}$.
By induction $k \in \keys X$, $(k,\star) \notin \ord X$.
Since $k \notin \keys Y$, we have $(k,\star) \notin \ord Y$.
So $k \in \keys {X \Par Y}$ and $(k,\star) \notin \ord {X \Par Y}$ by definition of $\keys {X \Par Y}$ and  $\ord {X \Par Y}$.
\item $k \in \keys Y \setminus \keys X$.
Similar to the previous case.
\end{enumerate}
Thus in all three cases $k \in \keys {X \Par Y}$ and $(k,\star) \notin \ord {X \Par Y}$ as required.

Conversely, suppose that $k \in \keys {X \Par Y}$ and $(k,\star) \notin \ord {X \Par Y}$.  So $(k,\star) \notin \ord X$ and $(k,\star) \notin \ord Y$ by definition of $\ord {X \Par Y}$.
There are again three cases:
\begin{enumerate}
\item $k \in \keys X \inter \keys Y$.
By induction $k \in \maxkeys X \inter \maxkeys Y$.
\item $k \in \keys X \setminus \keys Y$.
By induction $k \in \maxkeys X \setminus \keys Y$.
\item $k \in \keys Y \setminus \keys X$.
Similar to the previous case.
\end{enumerate}
Thus in all three cases $k \in \maxkeys{X \Par Y}$ as required.
\end{proof}
}
\Comment{
\subsection{Further complexity-related material to be omitted}
The following is very conjectural and not ready to be included:

There are different ways to measure the size of a \ccsk process.
In \autoref{def:input size} we used depth when measuring input size.
However, arguably for finite (non-recursive) \ccsk the size of the origin standard process is more appropriate.
\begin{definition}
[Size of a \ccsk process]
\begin{align*}
\sizeof \nil &= 1 \\
\sizeof {\alpha.X} &= \sizeof X + 1 \\
\sizeof {\alpha[k].X} &= \sizeof X + 1 \\
\sizeof {X \bs a} &= \sizeof X + 1 \\
\sizeof {X \Par Y} &= \sizeof X + \sizeof Y + 1 \\
\sizeof {X + Y} &= \sizeof X + \sizeof Y + 1 \\
\end{align*}
\end{definition}

\begin{proposition}
Given a \ccsk process $X$ with $\sizeof X = N$, we can compute $\keys X$ in time $O(N)$.
\end{proposition}
Recall $\bts X$ from %
\autoref{prop:bts finite}. We can use the backward rules in \autoref{fig:provedltsrulesccskfw} to define this for \ccsk as follows:
\begin{definition}\label{def:bts rules}
\begin{align*}
\bts {\alpha.X} &= \emptyset \\
\bts {\alpha[k].X} &= \begin{cases}
\{(\alpha[k],\alpha.X)\}  & \text{ if } \keys X = \emptyset \\
\{(\theta,\alpha[k]X') : (\theta,X') \in \bts X \text{ and } \key \theta \neq k \} & \text{ if } \keys X \neq \emptyset
\end{cases}\\
\bts {X \bs a} &= \{(\theta,X'\bs a) : (\theta,X') \in \bts X \text{ and } \key \theta \notin \{a,\out a\} \} \\
\bts {X \Par Y} &= \{(\lmidl \theta,X' \Par Y) : (\theta,X') \in \bts X \text{ and } \key \theta \notin \keys Y \} \\
& \union \{(\lmidr \theta,X \Par Y') : (\theta,Y') \in \bts Y \text{ and } \key \theta \notin \keys X \} \\
& \union \{(\cpair{\upsilon_{\L} \lambda [k]}{\upsilon_{\R} \out{\lambda} [k]},X' \Par Y') : (\upsilon_{\L} \lambda[k], X') \in \bts X \text{ and }
(\upsilon_{\R} \out{\lambda}[k],Y') \in \bts Y \} \\
\bts {X + Y} &= \begin{cases}
\{ (\lplusl\theta,X'+Y) : (\theta,X') \in \bts X \} & \text{ if } \keys Y = \emptyset \\
\{ (\lplusr\theta,X+Y') : (\theta,Y') \in \bts Y \} & \text{ if } \keys X = \emptyset 
\end{cases}\\
\end{align*}
\end{definition}
\begin{lemma}\label{lem:bts size}
Given a (reachable) \ccsk process $X$ with $\sizeof X = N$, then $\sizeof {\bts X}$, the number of transitions in $\bts X$, is bounded by $\sizeof {\keys X}$.
\end{lemma}
Rather than using \autoref{def:bts rules} to compute $\bts X$ we might first find the maximal keys in $\keys X$ and then just reverse these.  If $k$ is maximal, then intuitively we just delete $k$ in$X$ to reverse this action---more details to be supplied.

\begin{conjecture}
We can compute $\maxkeys X$ in time $O(N^2)$.
\end{conjecture}
\begin{proof}
The interesting case is $X \Par Y$, where we 
\end{proof}
Can also think of finding those keys which occur twice---could be done efficiently.

\begin{proposition}
Given a (reachable) \ccsk process $X$ with $\sizeof X = N$, we can compute $\bts X$ in time $O(N^2)$.
\end{proposition}
\begin{proof}
We find the maximal keys $\maxkeys X$.  For each $k \in \maxkeys X$ we find the corresponding label $\alpha$.  We find $X'$ such that $X \r{b} [\alpha][k] X'$ by deleting the one or two occurrences of $k$ in $X$.
\end{proof}

}

    \section{\autoref{sec:bisimulations}: Proofs and Additional Results}
\label{app:bisimulation}

\Comment{
\subsection{\autoref{ssec:dep-is-enough}: Proofs and Additional Results}

\conntrandepiscausality*
\Comment{\begin{lemma}\label{lem:im-fcau-iff-dep}
		Let $t_1, t_2$ be composable forward transitions in a pre-reversible LTSI satisfying \IRE. 
		Then  $t_1 < t_2$ iff $t_1 \sdep t_2$.  
	\end{lemma}
}

\irek{proof adjusted for forward-only transitions.} 
\begin{proof}
	\begin{enumerate}
		\item 
		$\Rightarrow$ Assume $t_1 < t_2$. Assume for contradiction $\rev{t}_1 \ind t_2$. By SP there are transitions $t_2'$ and $\rev{t}_1'$ making up a commuting diamond with 
		$\rev{t}_1$ and $t_2$. Since $t_2'(\eveqt t_2)$ is coinitial with $t_1$ there must be a transition $t_1''$ such that
		$t_1''\eveqt t_1$, which is before $t_2'$ in order for 
		$t_1 < t_2$ to hold. But that contradicts NRE. Hence  $\rev{t}_1 \not\ind t_2$, and $\rev{t}_1 \sdep t_2$ by complementarity. Since
		the labels of $\rev{t}_1$ and $t_1$ are the same we obtain $t_1 \sdep t_2$.
		
		$\Leftarrow$  By \autoref{lem:ind sdep coind} \clem{Check me}. 
		\item
		\iain{Not sure what is intended here for events?  Proof actaully in main text preceding \autoref{{lem:im-fcau-iff-dep}}}  This follows from \autoref{lem:immed pred composable} and \autoref{lem:ind sdep coind}, noting that $e_1 < e_2$ iff $e_1 \ip e_2$ for composable forward events.
	\end{enumerate}
	\qedhere
\end{proof}

\subsection{\autoref{ssec:bisim}: Proofs and Additional Results}
}

The goal of this appendix section is to prove \autoref{thm:bisim}, which only applies to standard \pccsk processes. This means that,
especially in the case of DP bisimulation, it is important to distinguish between arbitrary triples and DP-grounded triples. Thus we work here with KP-grounded and DP-grounded triples. 

Since we consider DP bisimulation only on standard processes, we have that maximal events get mapped to maximal events by label-preserving bijections of DP-grounded triples:

\Comment{
\begin{lemma}
	\label{prop:dep-implies-not-concurrency}
	Let $t_1, t_2$ be connected transitions in a pre-reversible LTSI. %
	Then $t_1 \sdep t_2$ implies not $t_1 \coind t_2$.
\end{lemma}

\begin{proof}
	Assume for contradiction  $t_1 \coind t_2$. This means that  there are
	cointial transitions $t_1', t_2'$, respectively, such that $t_1\eveqt t_1'$, 
	$t_2\eveqt t_2'$ and $t_1' \ind t_2'$. Since $\eveqt$ preserves independence we obtain $t_1 \ind t_2$. This contradicts complementarity on transitions (\autoref{prop:complementarity transitions}) given that $t_1\sdep t_2$ holds.
\end{proof}
}

\begin{lemma}\label{lem:max-events}
	Let $(X, Y, f)$ be DP-grounded for some $\Rel{r}[DP]$. Then  \(\forall e \in \Max{\ev{X}}\), \(f(e) \in \Max{\ev{Y}}\).
\end{lemma}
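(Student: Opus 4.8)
The plan is to induct on the construction of the DP-grounded triple $(X,Y,f)$, that is, on the number of forward steps by which it is derived from the pair of standard processes via the conditions of \autoref{def:Dsim}. In the base case $(X,Y,f)=(X_0,Y_0,\emptyset)$ with $X_0,Y_0$ standard; since standard processes have no keys, $\ev{X_0}=\emptyset$, so $\Max{\ev{X_0}}=\emptyset$ and the statement holds vacuously. For the inductive step, suppose $(X,Y,f)$ is obtained from a DP-grounded triple $(X',Y',f')$ by a forward move; by symmetry I assume the move is triggered by a transition $t:X' \pr{f}[\theta] X$ matched by $t':Y' \pr{f}[\theta'] Y$, so that $f=f'\cup\{[t]\mapsto[t']\}$ and, by the step condition of \autoref{def:Dsim}, $e \sdep [t] \iff f'(e) \sdep [t']$ for every $e\in\Max{\ev{X'}}$. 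The induction hypothesis gives $f'(\Max{\ev{X'}})\subseteq\Max{\ev{Y'}}$.

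First I would record how the maximal events change along the step. Executing $t$ adds exactly the new event $[t]$ (with $[t]\notin\ev{X'}$), so $\ev{X}=\ev{X'}\cup\{[t]\}$, and likewise $\ev{Y}=\ev{Y'}\cup\{[t']\}$. Since a cause must be executed before its effect, no event of $\ev{X'}$ can be caused by $[t]$, whence $[t]\in\Max{\ev{X}}$; the same reasoning gives $[t']\in\Max{\ev{Y}}$, so the new pair $[t]\mapsto[t']$ is handled. For an old event $e\in\ev{X'}$ one checks directly that $e\in\Max{\ev{X}}$ iff $e\in\Max{\ev{X'}}$ and $\neg(e<[t])$, and correspondingly on the $Y$ side. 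Thus it remains to prove, for $e\in\Max{\ev{X'}}$, the equivalence $e<[t] \iff f'(e)<[t']$.

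The crux is the claim that for any $e\in\Max{\ev{X'}}$ the events $e$ and $[t]$ are composable, and likewise $f'(e)$ and $[t']$. Granting this, \autoref{lem:im-fcau-iff-dep} converts causal order into dependence on both sides, and the step condition of \autoref{def:Dsim} links the two: $e<[t] \iff e\sdep[t] \iff f'(e)\sdep[t'] \iff f'(e)<[t']$. Combined with the characterisation of maximal events above and the induction hypothesis $f'(e)\in\Max{\ev{Y'}}$, this yields $e\in\Max{\ev{X}} \iff f(e)\in\Max{\ev{Y}}$, completing the step.

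To establish the composability claim I would take a rooted forward-only path $r$ to $X'$; since $e\in\ev{X'}$, some transition $t_e\in e$ occurs in $r$, and I would move $t_e$ to the end of $r$ by repeatedly swapping it past each following transition $u$. For such $u$, maximality of $e$ gives $\neg(e<[u])$, while \enquote{causes come first} in a forward-only rooted path gives $\neg([u]<e)$; since $u$ and $t_e$ lie on a common consistent path they are not in conflict and are distinct events, so polychotomy (\autoref{prop:poly}) forces $t_e\coind u$, hence $t_e\ind u$ by IRE, and RPI together with SP let us swap them, exactly as in the proof of \autoref{lem:immed pred composable}. The resulting path still ends at $X'$ with a final transition in $e$, which is therefore composable with $t:X' \pr{f}[\theta] X$; the same construction applied to $f'(e)\in\Max{\ev{Y'}}$ and $t'$ gives composability of $f'(e)$ and $[t']$. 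This swapping-to-the-end argument is the main obstacle, being the only place that invokes the axiomatic machinery; everything else is bookkeeping about how $\Max{\ev{\cdot}}$ evolves under a single forward transition.
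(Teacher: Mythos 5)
Your proof is correct, but it takes a genuinely different route from the paper's. The paper argues by contradiction: assuming $f(e) \notin \Max{\ev{Y}}$ it picks $e'$ with $f(e) \ip f(e')$, gets composability from \autoref{lem:immed pred composable} and $f(e) \sdep f(e')$ from \autoref{lem:im-fcau-iff-dep}, transfers the dependence back to $e \sdep e'$ via the step condition of \autoref{def:Dsim} ``at the moment $f(e')$ was added'', and then kills every branch of polychotomy for $e, e'$ --- in particular $e' < e$ is refuted by the observation that grounded triples add events to $f$ in causal order. You instead make the induction on the grounded construction explicit, describe how $\Max{\ev{\cdot}}$ evolves under one forward step ($\ev{X} = \ev{X'} \union \{[t]\}$, the fresh event is maximal, an old maximal event survives iff it does not cause $[t]$), and reduce everything to the biconditional $e < [t] \iff f'(e) < [t']$ for $e \in \Max{\ev{X'}}$, obtained from \autoref{lem:im-fcau-iff-dep} and the step condition once composability is in hand. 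The two arguments use the same machinery (polychotomy, IRE/RPI/SP, \autoref{lem:im-fcau-iff-dep}, the step condition); the real divergence is where composability comes from. The paper gets it for free from \autoref{lem:immed pred composable} because it only ever needs it for an immediate-predecessor pair, whereas you need the stronger, unconditional fact that \emph{every} maximal event of $X'$ is composable with the freshly executed event, which you prove by re-running the swap argument from the proof of \autoref{lem:immed pred composable}; that argument is sound (maximality excludes $e < [u]$, NRE and the forward-only rooted path exclude $[u] < e$ and $[u] = e$, coexistence on a rooted path excludes conflict, so polychotomy yields $\coind$, then IRE, RPI and SP permit the swap). What your version buys is a fully explicit treatment of the ``when was this pair added to $f$'' reasoning that the paper leaves informal, at the cost of redoing the permutation argument. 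One small point to spell out: in the symmetric case where the step is triggered from the $Y'$ side, \autoref{def:Dsim} quantifies over $\Max{\ev{Y'}}$ and uses $f'^{-1}$, so recovering your biconditional there requires first invoking the induction hypothesis $f'(\Max{\ev{X'}}) \subseteq \Max{\ev{Y'}}$ before instantiating the condition at $f'(e)$; ``by symmetry'' alone slightly understates this.
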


\begin{proof}
	We proceed by a proof by contradiction and suppose that \(\exists e \in \Max{\ev{X}}\) such that \(f(e) \notin \Max{\ev{Y}}\).
	Then, wlog, there exists \(f(e') \in \ev{Y}\) (since \(f\) is a label-preserving bijection between \(\ev{X}\) and \(\ev{Y}\)) such that \(f(e) \ip f(e')\).
	Then by \autoref{lem:immed pred composable}, $f(e)$ is composable with $f(e')$ and by \autoref{lem:im-fcau-iff-dep}
	we get  $f(e) \sdep f(e')$. 

	Now, observe that  $(X, Y, f) \in \Rel{r}[DP] \iff  (Y, X, f^{-1}) \in \Rel{r}[DP]$ by definition.
	Hence, just before \(f(e')\) was triggered, \(f(e)\) was maximal (this would otherwise contradict \(f(e) \ip f(e')\)), and the transition associated to \(f(e')\) was a forward transition (since the triple is grounded) such that
	\[f(e) \sdep f(e') \iff f^{-1}(f(e)) \sdep f^{-1}(f(e'))\]
	Since we have already established that $f(e) \sdep f(e')$, then it must be the case that \(e \sdep e'\).

	Since $e,e'$ are events of $X$ we have $e\not\Cf e'$. Also, $e\not\coind e'$ follows from \autoref{lem:ind sdep coind} since the
	events have dependent labels. 
	Moreover, $e\neq e'$ since $f$ is a bijection. The two remaining options are $e<e'$ and
	$e'<e$ by polychotomy. 
	The first contradicts $e$ being maximal. Consider $e'<e$. When $(e', f(e'))$ was added to $f$, then
	$(f^{-1}(f(e)), f(e))$ already must have been in $f$ since \(f(e) \ip f(e')\). This means there is a path where
	$  f^{-1}(f(e))=e$ precedes $e'$ contradicting $e'<e$. Hence  the result.
\end{proof}

\thmbisimkpdp*

\begin{proof}Assume any standard \pccsk processes $P$ and $Q$. We shall work with KP-grounded and DP-grounded
tuples, namely with $(X,Y,f)$, where $X,Y$ are the derivatives of $P, Q$, respectively, obtained by applying either \autoref{def:HPbisim} or \autoref{def:Dsim}, and $f$ is an appropriate bijection between
$X$ and $Y$. It implies that if \((X, Y, f) \in \Rel{r}[KP]\) then $f$ is label- and order-preserving bijection,
and $f$ is label-preserving in \((X, Y, f) \in \Rel{r}[DP]\).
 
When proving each of the implications, we shall consider only cases for transitions of $X$ since the cases for 
transitions of $Y$ are proved correspondingly.		

\begin{description}
		
\item[$\Rightarrow$]
			
Assume \( P \Rel{b}[KP] Q\). By \autoref{def:HPbisim} there exists  \(\Rel{r}[KP]\) between $P$ and $Q$ 
such that $(P,Q,\emptyset)\in  \Rel{r}[KP]$, and there are KP-grounded triples  \((X, Y, f)\in  \Rel{r}[KP]\) for all the appropriate $X,Y$ and label- and order-preserving \(f\). 
Hence, we similarly assume  that 	\((P, Q, \emptyset) \in \Rel{r}[DP]\) for some $\Rel{r}[DP]$.
We then need to prove that, given a KP-grounded triple \((X, Y, f)\),  for each pair \(([t], [t'])\) of events mapped by \(f\) as in \autoref{def:HPbisim}, \(\forall e\in \Max{\ev{X}},  e \sdep [t] \iff f(e) \sdep [t']\):

	\begin{description}
	\item[$\Rightarrow$]
			Suppose \(e\in \Max{\ev{X}}\) and \(e \sdep [t]\). Since $t$ is a transition of $X$ and $e$ is a maximal event of $X$,
			we have that $e$ and $[t]$ are connected. They are also composable since assuming the opposite would imply 
			that $e$ is not maximal in $\ev{X}$. Then $e<[t]$ by \autoref{lem:im-fcau-iff-dep}, 
			and $\key{e}<_{X'} \key{[t]}$ by \autoref{thm:ordering-event-key}.
			Since $f$ is order preserving, we get  $\key{f(e)}<_{Y'} \key{[t']}$ for $[t']=f([t])$.
			We have $f(e), [t'] \in \ev{Y'}$, so by applying   \autoref{thm:ordering-event-key}
			we get $f(e)<[t']$. By \autoref{lem:max-events}, $f(e)$ is a maximal in $Y$ and using 
			the same argument as in \autoref{lem:max-events}, we show that $f(e), [t']$ are composable. 
			Hence   $f(e) \sdep [t']$ by  \autoref{lem:im-fcau-iff-dep}.

	\item[$\Leftarrow$] The argument is very much like in the $\Rightarrow$ case using  \autoref{lem:max-events} 
	and the fact that $f$ is order-preserving bijection.
		 
	\end{description}

\item[$\Leftarrow$]
Assume \( P \Rel{b}[DP] Q\). By \autoref{def:Dsim}, there is \(\Rel{r}[DP]\)  between $P$ and $Q$ such that 
\((P, Q, \emptyset)\in  \Rel{r}[DP]\), and there are DP-grounded triples \((X, Y, f)\in  \Rel{r}[DP]\) for all the appropriate $X,Y$
and label-preserving $f$. So we similarly assume  that 
			 \((P, Q, \emptyset) \in \Rel{r}[DP]\) for some $\Rel{r}[KP]$.

		Next we prove that \(f\) of any DP-grounded triple for \(\Rel{r}[DP]\), namely constructed from $\emptyset$ 
		by adding pairs $([t],[t'])$ for the matching transitions of $P,Q$ and 
		their respective derivatives according to conditions of  \autoref{def:Dsim}, is order preserving.  Assume 
		\((X, Y, f) \in \Rel{r}[DP]\) for some label-preserving $f$.  We show each implication in \autoref{def:ord-pres} separately.
		
 		\begin{enumerate}
 		\item[$\Rightarrow$]
		Consider $e\in \Max{\ev{X}}$ which is composed with $[t]$, and $e\sdep [t]$ holding.
		Assume for contradiction that \(\key{e} <_{X} \key{[t]}\) and \(\key{f(e)} \not<_{Y} \key{[t']}\), 
		using the notations of \autoref{def:HPbisim}. We get $e<[t]$ by applying \autoref{thm:ordering-event-key}.
		By \autoref{def:Dsim} we get $[t']$ and by applying \autoref{lem:max-events} we also get  
		$f(e)\in \Max{\ev{Y}}$ and $f(e)\sdep [t']$. Since $f(e)$ and $[t']$ are composable
		(shown as in the proof of \autoref{lem:max-events}), we
		obtain $f(e)< [t']$ by \autoref{lem:im-fcau-iff-dep}, and \(\key{f(e)} <_{Y} \key{[t']}\) by
		\autoref{thm:ordering-event-key}: contradiction.
		Hence, \(\key{e} <_{X} \key{[t]}\) implies \(\key{f(e)} <_{Y} \key{[t']}\).

 		\item[$\Leftarrow$] Consider $f(e)\in \Max{\ev{X}}$ and $[t']$ with $f(e)\sdep [t']$ holding.
 		Assume for contradiction
		\(\key{f(e)} <_{Y} \key{[t']}\) and \(\key{e} \not<_{X} \key{[t]}\) using the notations of \autoref{def:HPbisim}. 
		The last implies 
		$e\not<[t]$ by \autoref{thm:ordering-event-key}. Since $e, [t]$  are composable we get
		$e\not\sdep [t]$ (\autoref{lem:im-fcau-iff-dep}).  However, since $f(e)\sdep [t']$ holds we obtain by  \autoref{def:Dsim}
		that $e\sdep [t]$ holds: contradiction.
		 
		\Comment{
		Since \(\key{f(e)} <_{Y} \key{[t']}\) we obtain $f(e)< [t']$ by \autoref{thm:ordering-event-key}.
		Since $f(e)$ and $[t']$ are composable
		(shown as in the proof of \autoref{lem:max-events}) we get $f(e)\sdep [t']$ by 
		\autoref{lem:im-fcau-iff-dep}. This gives us  $e\sdep [t]$  by \autoref{def:Dsim}: contradiction.
		Hence,
		\(\key{f(e)} <_{Y} \key{[t']}\) implies \(\key{e} <_{X} \key{[t]}\).
		}
		 
		\end{enumerate}
This means that $f$ of any DP-grounded triple for \(\Rel{r}[DP]\) is order preserving.	
\qedhere		
	\end{description}
\end{proof}

Proving the last result requires to define the \enquote{No repeated events} (NRE) axiom~\cite[Def.\ 4.18]{LPU24}, which holds for pre-reversible LTSes~\cite[Prop.\ 4.21]{LPU24}:%
\begin{align}
	\text{For any rooted path $r$ and any event $e$ we have $\cte(r, e)\leq 1$} \tag{NRE}
\end{align}

\propfrimplieskp*

\begin{proof}
	FR bisimulation matches forward transitions of \(P\), \(Q\) by equating their labels and keys (and does the same to reverse
	transitions), whereas KP bisimulation only looks at labels of forward transitions.
	Moreover, in addition to preserving labels of matched transitions (and events) of \(P\), \(Q\) FR bisimulation preserves causal order on events~\cite[Proposition 5.6]{PU07a} provided that 
	NRE holds. Since the LTSIs for \pccsk is pre-reversible, NRE holds. 
	Finally, since any bijection that preserves causal order
	is also order preserving (\autoref{thm:ordering-event-key}) we get the result.
	\qedhere
\end{proof}

\Comment{
	\begin{align*}
	f (\leq_X) =& \leq_Y \\
	\forall t:  X \pr{f}[\theta] X' \Rightarrow &
	\exists t': Y\pr{f}[\theta']  Y', 
	\labl{\theta} = \labl{\theta'} \text{ and } \\
	& (X', Y', f\cup\{[t] \mapsto [t']\}) \in \Rel{r}[KP] \\
	\forall t':  Y\pr{f}[\theta]  Y' \Rightarrow & 
	\exists t: X\pr{f}[\theta']  X', \labl{\theta} = \labl{\theta'} \text{ and } \\
	&  (X', Y', f\cup\{[t] \mapsto [t']\})\in \Rel{r}[KP]
\end{align*}
If there exists a KP-simulation \(\Rel{r}[KP]\) such that \((X, Y, f)\in \Rel{r}[KP]\), we say that $X$ and $Y$ are KP-bisimilar, written $X \Rel{b}[KP] Y$.

A relation $\Rel{r}[D_{\textrm{m}}] \subseteq \kprocset \times \kprocset \times (\events \to \events)$ is a \emph{\(\Maxop\)-dependence (D\(_{\textrm{m}}\)) bisimulation} if $(\orig{X}, \orig{Y}, \emptyset) \in  \Rel{r}[D_{\textrm{m}}]$, and if whenever $(X, Y, f) \in  \Rel{r}[D_{\textrm{m}}]$, then \(f\) is a bijection between \(\ev{X}\) and \(\ev{Y}\) and
\begin{align*}
	\forall t: X \pr{f}[\theta] X' \implies & \exists t': Y \pr{f}[\theta'] Y',  \labl{\theta} = \labl{\theta'} \text{ and }\\
	&  \forall e\in \Max{\ev{X}},  e \sdep [t] \iff f(e) \sdep [t']\\
	&  \text{ and } (X', Y', f \cup \{[t] \mapsto [t']\}]) \in  \Rel{r}[D_{\textrm{m}}]\\
	\forall t': Y \pr{f}[\theta'] Y' \implies & \exists t: X \pr{f}[\theta] X',  \labl{\theta} = \labl{\theta'} \text{ and }\\
	& \forall e\in \Max{\ev{Y}},  e \sdep [t'] \iff f^{-1}(e) \sdep [t] \\
	& \text{ and } (X', Y', f \cup \{[t] \mapsto [t']\}) \in  \Rel{r}[D_{\textrm{m}}]
\end{align*}
}

    \endgroup
\end{document}